\newcommand{\Cov}{\mathrm{Cov}}
\newcommand{\diag}{\mathrm{diag}}
\newcommand{\EE}{\mathbb{E}}
\newcommand{\bbE}{\mathbb{E}}
\newcommand{\bbR}{\mathbb{R}}
\newcommand{\vb}{\boldsymbol{b}}
\newcommand{\vI}{\boldsymbol{I}}
\newcommand{\vr}{\boldsymbol{r}}
\newcommand{\vx}{\boldsymbol{x}}
\newcommand{\vX}{\boldsymbol{X}}
\newcommand{\vy}{\boldsymbol{y}}
\newcommand{\vz}{\boldsymbol{z}}
\newcommand{\bbeta}{\boldsymbol{\beta}}
\newcommand{\bgamma}{\boldsymbol{\gamma}}
\newcommand{\bSigma}{\boldsymbol{\Sigma}}
\newcommand\reallywidehat[1]{%
\savestack{\tmpbox}{\stretchto{%
  \scaleto{%
    \scalerel*[\widthof{\ensuremath{#1}}]{\kern.1pt\mathchar"0362\kern.1pt}%
    {\rule{0ex}{\textheight}}%
  }{\textheight}%
}{2.4ex}}%
\stackon[-6.9pt]{#1}{\tmpbox}%
}
\newcommand{\bs}{\mathbf{s}}
\newcommand{\cov}{\text{Cov}}
\newcommand{\var}{\text{Var}}
\newcommand{\bx}{\mathbf{x}}  
\newcommand{\by}{\mathbf{y}}
\newcommand{\be}{\mathbf{e}}
\newcommand{\ba}{\mathbf{a}}
\newcommand{\br}{\mathbf{r}}
\newcommand{\cS}{\mathcal{S}}
\newcommand{\cV}{\mathcal{V}}
\newcommand{\hbjplus}{\hat{\beta}_j^+}
\newcommand{\hbjminus}{\hat{\beta}_j^-}
\newcommand{\bjplus}{\beta_j^+}
\newcommand{\bjminus}{\beta_j^-}
\newcommand{\beps}{\bm{\epsilon}}
\newcommand{\veta}{\bm{\psi}}
\newcommand{\fdphat}{\widehat{\text{FDP}}}
\newcommand{\rhoxjxnj}{\rho^{(\vx_j,\vX_{-j})}}
\newcommand{\rhoxjzj}{\rho^{(\vx_j, \vz_j)}}
\newcommand{\rhozjxnj}{\rho^{(\vz_j,\vX_{-j})}}
\newcommand{\sigmaxnj}{\Sigma_{-j}}
\newcommand{\vzj}{v^{\vz_j}}
\newcommand{\FDPbar}{\overline{\text{FDP}}}
\DeclareMathOperator*{\argmin}{arg\,min}
\newtheorem{theorem}{Theorem}
\newtheorem{definition}{Definition}
\newtheorem{lemma}{Lemma}
\newtheorem{assumption}{Assumption}
\newcommand{\BlackBox}{\rule{1.5ex}{1.5ex}}  %
\def\QED{~\rule[-1pt]{5pt}{5pt}\par\medskip}
\newenvironment{proof}{\par\noindent{\bf Proof\ }}{\hfill\BlackBox\\[2mm]}
\title{Controlling False Discovery Rate Using Gaussian Mirrors}
\author{Xin Xing\thanks{ Department of Statistics, Virginia Tech},
 Zhigen Zhao\thanks{ Department of Statistical Science, Temple University}, 
 Jun S. Liu\thanks{ Department of Statistics, Harvard University}}
\date{}
\begin{document}

\maketitle
\begin{abstract}
Simultaneously finding  multiple influential variables and controlling the false discovery rate (FDR) for linear regression models is a fundamental problem. %
We here propose the Gaussian Mirror (GM) method, which creates for each predictor variable a pair of mirror variables by adding and subtracting a randomly generated Gaussian perturbation, and proceeds with a certain regression method, such as the ordinary least-square or the Lasso (the mirror variables can also be created after selection). %
The mirror variables naturally lead to test statistics effective for 
controlling the FDR. Under a mild assumption on the dependence among the covariates, we show that the FDR can be controlled at any designated level asymptotically.  %
We also demonstrate through extensive numerical studies that the GM method is more powerful than many existing methods for selecting relevant variables subject to  FDR control,  especially  for cases when the covariates are highly correlated and the influential variables are not overly sparse. 

\end{abstract}

\section{Introduction}
Linear regression, which dates back to the beginning of the 19th century, is one of the most important statistical tools for practitioners. The theoretical research addressing various issues arising from big data analyses has gained much attention in the last decade. One  important problem is to determine which covariates (aka ``predictors") are ``relevant" or ``important" in a linear regression. %
In early days (before 1970's), people often rely on the  t-test to assess the importance of each individual predictor in a regression model, although the method is known to be problematic due to the existence of high multi-colinearity. A greedy stepwise regression method was later proposed in \cite{efroymson1960multiple} to alleviate some of the flaws. Good criterion, such as Akaike information criteria (AIC, \cite{akaike1998information}) and Bayesian information criteria (BIC, \cite{schwarz1978estimating}), for directing the model search were developed later. In recent years, due to the advance of data-generating technologies in both science and industry, %
researchers discovered that  various regularization based regression methods, such as Lasso \citep{tibshirani1996regression}, SCAD \citep{fan2001variable}, elastic net \citep{zou2005regularization} and many others, are quite effective in dealing with high-dimensional data and selecting relevant covariates with certain guaranteed theoretical properties, such as the selection consistency and the oracle property. 

For the linear regression model, $y=x_1\beta_1 +\cdots+x_p \beta_p+ \epsilon$, we are interested in testing $p$ hypotheses;  $H_j$: $\beta_j =0$ for $j=1,\ldots, p$, simultaneously and  finding a statistical rule to decide which hypotheses should be rejected. Let $\mathcal{S}_0\subset \{1,2,\cdots,p\}$  index the set of ``null'' predictors, i.e., those with $\beta_j=0$;  so  $ \mathcal{S}_1=\mathcal{S}_0^c $ indexes the set of relevant predictors. Let $\hat{\mathcal{S}}_1$ be the set selected based on a statistical rule. The FDR for quantifying the type I error for this statistical rule is defined as
\[
FDR = \mathbb{E} [FDP], \quad\textrm{where}\quad FDP = \frac{\#\{i\;|\; i\in \mathcal{S}_0, i\in\hat{\mathcal{S}}_1\}}{\#\{i \;|\;  i\in\hat{\mathcal{S}}_1\}\vee 1}.
\]

Controlling the FDR is useful when the number of features is large. For example, modern gene expression studies usually involve thousands of genes;  genome-wide association studies (GWAS) routinely examine tens of thousands to millions of single nucleotide polymorphisms (SNPs); and financial data often contain many features of thousands of companies. It is of practical importance to select a small set of the features for  follow-up experimental validations and testings or future predictions, which requires the researchers to have reliable FDR control.

The task of controlling the FDR at a designated level, say $q$, is challenging in two aspects: (i) the test statistic and the corresponding distribution under the null hypothesis is not easily available for high-dimensional problems; and (ii) the dependence structure of the covariates induces a complex dependence structure among the estimated regression coefficients. A heuristic method based on permutation tests was introduced by \citet{chung2016multivariate}, but it fails to yield  valid FDR control  unless $\mathcal{S}_1$ is empty. 
Starting with p-values based on the marginal regression, \cite{fan2012estimating} proposed a novel way to estimate the false discovery proportion (FDP) when the test statistic follows a multivariate normal distribution. However, the consideration of the marginal regression deviates from the original purpose of the study in many cases.

 \citet{barber2015controlling} introduced an innovative approach,  the {\it knockoff filter}, which provides provable FDR control for cases with an arbitrary design matrix when  $p < n/2$ where $n$ is the sample size. When $p<n<2p$, they kept $2p-n$ predictors unchanged and constructed the knockoffs only for the remaining $n-p$ variables. 
 By introducing knockoffs, they obtained a test statistic that  (i) are symmetric about the mean, and (ii) have independent signs, for the null variables.
To  this end, each column $\tilde{X}_j$ of the knockoff matrix $\tilde{\vX}$ needs to be exchangeable with the corresponding column $X_j$ in $\vX$. To gain more power, however, they want $\tilde{\vX}$ and $\vX$ as  dissimilar as possible. When  correlations or conditional correlations among the predictors are high and dense, there is little room for one to choose a good knockoff. This leads to a substantial decrease in  power (see Section \ref{sec:simulation} for more details).

The knockoff method was later extended to deal with high-dimensional problems via either the screening+knockoffs strategy \citep{barber2020robust} or  the model-X knockoff approach \citep{candes2018panning}. In the model-X framework, \citet{candes2018panning} relaxed the linear assumption by considering the joint distribution of the response $Y$ and the predictors $\vX$. The goal is to find the Markov blanket, a minimal set $\mathcal{S}$ such that $Y$ is independent of all the others when conditioning on $\vX_{\mathcal{S}}$. Although the model-X knockoff method can control the FDR, this construct relies heavily on the assumption that the distribution of $\vX$ is completely known, which is generally unrealistic except for some special scenarios. \citet{barber2020robust} constructed a robust version of model-X knockoffs based on an estimated joint distribution of $(X_1, \dots, X_p)$. However, estimating the joint distribution in a high-dimensional setting is not only  challenging even for the  multivariate Gaussian case, but also leads to inaccurate FDR control.
\citet{huang2019relaxing} further relaxed the requirement of Model-X knockoffs by allowing the joint distribution of $(X_1,\dots,X_p)$ to belong to a pre-specified exponential family with unknown parameters.
For high-dimensional data, another line of work is the post-selection inference, which aims at making inference conditional on the selection \citep{berk2013valid, lee2016exact}. In \cite{lee2016exact}, the selection event of LASSO is shown to be a union of polyhedra.  \cite{tibshirani2016exact} provided analogous results for the forward stepwise regression and the least angle regression.
\cite{taylor2018post} extended these results to $L_1$ penalized likelihood models including generalized regression model.

In this paper, we propose the Gaussian Mirror (GM) method, which constructs the test statistic  marginally. Namely, for each variable $\vx_j$, we create a pair of ``mirror variables'',  $\vx_j^+=\vx_j+c_j\vz_j$ and $\vx_j^-=\vx_j-c_j \vz_j$, where $c_j$ is a scalar and $\vz_j\sim N(0,\vI_n)$. %
This construct can be viewed as a quantifiable way of perturbing the feature, and is equivalent to treating $c_j \vx_j$ as a ``knockoff'' variable of $\vx_j$ if OLS is used for the model fitting.
The perturbation magnitude is chosen according to an explicit formula of $c_j$ such that the mirror statistic we introduce in Section \ref{sec:method} has a symmetric distribution about zero when the null hypothesis is true. For high-dimensional cases, we ensure this symmetric property by utilizing the post-selection theory.  With this symmetric property, we could estimate the number of false positives and the FDP. For a designated FDR level, say $q$, we choose a data-driven threshold such that the estimated FDP is no larger than $q$. Under some mild conditions on the design matrix, we show that the proposed  method controls the FDR at the designated level asymptotically. Adding some perturbations not only helps weaken the dependence among the test statistics, but also leads to conclusions that are stable to perturbations, as advocated in the stability framework of \cite{yu2020veridical, yu2013stability}.

In the GM approach, we obtain the mirror statistics by perturbing one variable at a time. Such a simple approach has two advantages. First, comparing with the knockoffs and Model-X knockoffs which double the size of the design matrix, the GM method introduces a smaller perturbation to the original design matrix, which can improve the power as shown in our numerical studies. Second, the algorithm for calculating the mirror statistics is separable, easily parallelizable, and  memory-efficient. All our numerical studies have been implemented using a parallel computation architecture.
Without requiring any distributional assumption on the design matrix $\vX$, %
the GM method bypasses the difficulty  of estimating the joint distribution of $(X_1,\dots, X_p)$, and is thus more applicable in real-data applications, such as GWAS studies where each feature  takes value in $\{0,1,2\}$, representing the number of minor-allele mutations of a single nucleotide polymorphism (SNP). 

In addition, practitioners often have a limited budget to verify the discoveries. For example, if the budget only allows a scientist to do $100$ validation tests, a natural question is how many false discoveries (FDs) they expect to have in a top-100 list and what is an uncertainty measure for the estimated FDs. To answer these questions, we provide an estimate and confidence interval for the expectation of FDs by using the bootstrap method. %

The paper is organized as follows. In Section \ref{sec:method}, we describe the general GM idea and a GM algorithm for the ordinary least squares (OLS) estimation
in the low-dimensional case ($p<n$). In Section \ref{subsec:lasso}, we discuss how to  employ the post-selection adjustment strategy to extend the GM construction for Lasso, which handles the case with $p\geq n$.
In Section \ref{sec:FDR:theorem}, we investigate theoretical properties of the GM procedure. 
In Section \ref{sec:var:fdp}, we introduce an estimator of the number of false discoveries and build its confidence interval using non-parametric bootstrap.
In Section \ref{sec:simulation}, we provide numerical evidences showing the advantage of GM to its competitors via simulations and real data analysis.  In Section \ref{sec:disc}, we conclude the article with a short discussion. All the technical proofs of all the lemmas and theorems stated in the article are put in the Appendix.
The R-code for the GM procedure is publicly available at \url{https://github.com/BioAlgs/GM}.

\smallskip{}

{\bf Notations:} To ease the readability, we summarize some notations used frequently in this paper. Let $\vX$ be the design matrix, $\vx_j = (x_{1j},\dots,x_{nj})^T$ be the $j$-th column of $\vX$, and  $\vX_{-j}$ be the submatrix of $\vX$ with the $j$-th column removed. Without loss of generality, we assume that $\vX$ is normalized so that $\sum_{i=1}^n x_{ij}=0$ and $\frac{1}{n}||\vx_j||_2 =1$, $j=1,\dots, p$.  Let $\vX^j = (\vx_j^+, \vx_j^-, \vX_{-j})$, where $\vx_j^+ = \vx_j + c_j\vz_j$, $\vx_j^- = \vx_j - c_j\vz_j$, $\vz_j\sim N(0, \vI_n)$, and $c_j$ is a scalar. Let $\bbeta$ be the vector of coefficients and let $\bbeta_{-j}$ be the subvector with the $j$-th entry removed. Let $\bbeta^j=(\beta_j^+, \beta_j^-, \bbeta_{-j}^\top)^\top$, where $\beta_j^+$ and $\beta_j^-$ denote the coefficients of the mirror variable pair, and let $\hat{\bbeta}^j$ be an estimate of $\bbeta^j$.  Let $\mathcal{S}_0=\{j: \beta_j=0\}$, $\mathcal{S}_1=\{j:\beta_j\neq 0\}$, $p_0=|\cS_0|$, and $p_1=|\cS_1|$.
The designated FDR level to be controlled at is $q$.

\section{The Gaussian Mirror Methodology for OLS}\label{sec:method}

\subsection{The Gaussian mirror}\label{subsec:ols}
Suppose the data are generated from the following linear regression model
\begin{equation}\label{eqn:lm}
\vy = \vX \bbeta + \beps, 
\end{equation}
where $\vy \in \bbR^{n}$ is the response vector, $\vX=(\vx_1,\cdots, \vx_p) \in \bbR^{n\times p }$ is the design matrix, and $\beps = (\epsilon_1,\dots, \epsilon_n)$ is a vector of i.i.d. draws from $N(0,\sigma^2)$.
We begin with the low-dimensional case with $p<n$ and  focus on the OLS estimator. The high-dimensional case with $p>n$ is deferred to the next section.
As shown  in Figure \ref{flowchart}, we construct the $j$-th {\it Gaussian Mirror} by replacing $\vx_j$ with a pair of variables $(\vx_j^+, \vx_j^-)$ with $\vx_j^+ = \vx_j + c_j \vz_j$ and $\vx_j^- = \vx_j - c_j \vz_j$, where $\vz_j$ is an independently simulated Gaussian random vector with mean $0$ and covariance $\vI_n$ and $c_j$ is a scalar which depends on $\vX$ and $\vz_j$. 
\begin{figure}[h]
\centering 
\includegraphics[width=0.7\textwidth]{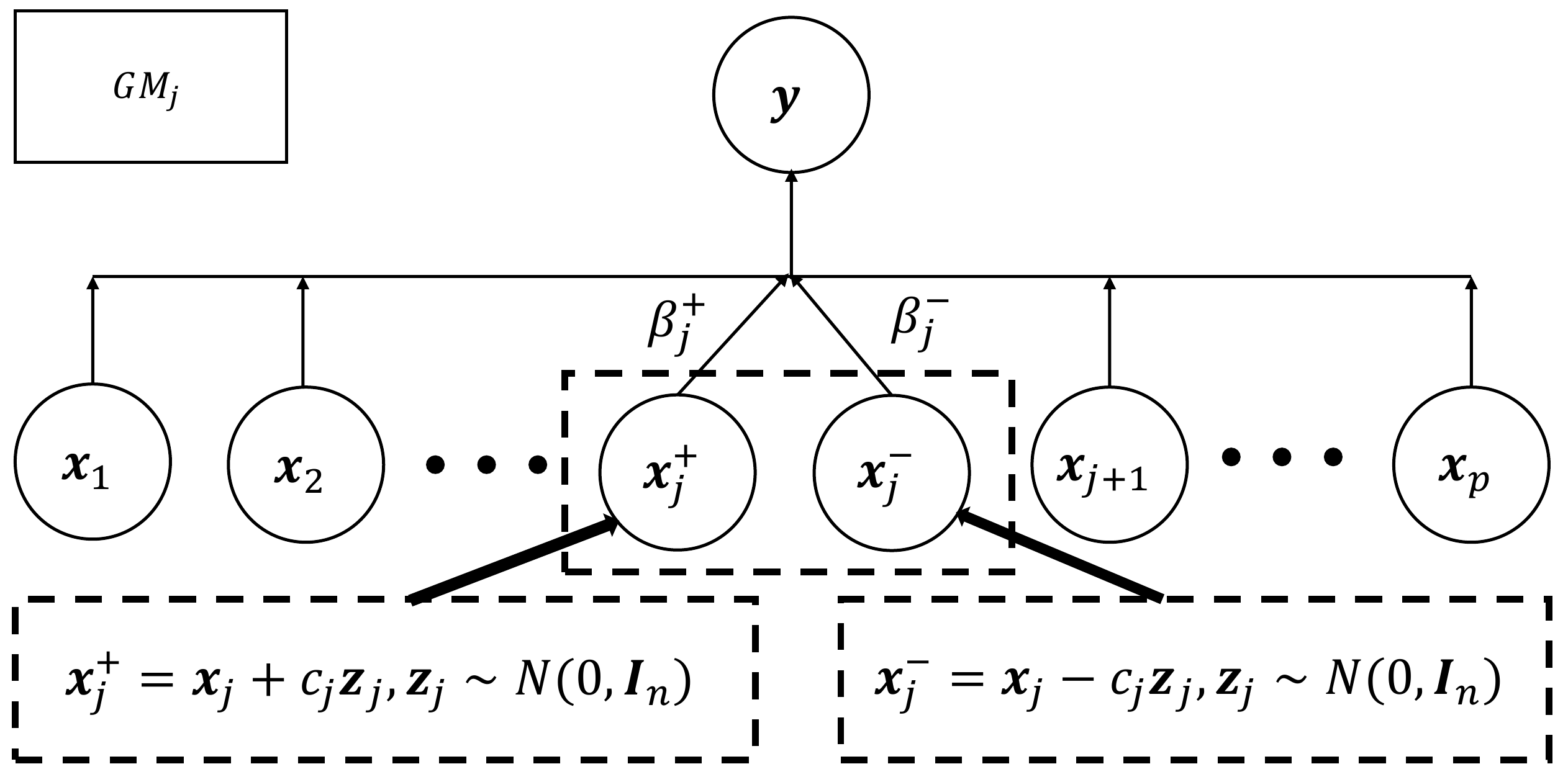}
\caption{Flowchart of the $j$-th {\it Gaussian Mirror}.}\label{flowchart}
\end{figure}

Note that $\beta_{j}^+ =\beta_{j}^- = 0 $ when $j\in S_0$ and $\beta_{j}^+ =\beta_{j}^- = \beta_j/2$ when $j\in S_1$. Let $GM_j$ be the design matrix with the $j$-th Gaussian mirror, i.e.,
\begin{equation}\label{eq:gm_design}
 GM_j:=\vX^j = (\vx_j^+, \vx_j^-, \vX_{-j})= (\vx_j + c_j\vz_j, \vx_j - c_j\vz_j, \vX_{-j}). 
\end{equation}
Fit the model $\vy = \vX^j \bbeta^j + \beps$ and estimate the  coefficients %
by the least squares:
\begin{equation}\label{eq:plsgm}
\hat{\bbeta}^j 
=\argmin_{\bbeta^j=(\bjplus, \bjminus, \bbeta_{-j})}||\vy - \vX^j \bbeta^j||_2^2.
\end{equation}
The mirror statistic for the $j$-th variable is constructed as
\begin{equation}\label{eq:mj}
    M_j = |\hbjplus+\hbjminus|-|\hbjplus-\hbjminus|.
\end{equation}
Another possible construct is the ``signed-max'' statistic as suggested in \cite{barber2015controlling} and studied more carefully in \cite{ke2020power}. %
The mirror statistic $M_j$ has two parts: the first part reflects the importance of the $j$-th predictor; and the second part captures the noise cancellation effect.
Intuitively, when $j\in\mathcal{S}_0$, the coefficients $\hbjplus$ and $\hbjminus$ are due to noise and the mirror statistic $M_j$ should center around zero; when $j\in \mathcal{S}_1$,  $\hbjplus+\hbjminus$  reflects the signal, whereas $\hbjplus-\hbjminus$ cancels out the signal and only reflects the noise effect. %
Based on this observation, we would declare a variable $\vx_j$ as ``significant" if $M_j\geq t$ for some threshold $t>0$. The choice of $t$ depends on the number of false positives and the associated FDP. If we can choose GM properly such that 
$M_j$ follows a distribution symmetric about zero, i.e., $P(M_j \geq s) = P(M_j \leq -s)$, $\forall s$. Then, the number of false positive $\#\{j\in \mathcal{S}_0\,|\, M_j \geq t \}$ can be approximated by $\#\{j\in \mathcal{S}_0 \,|\, M_j \leq -t \}$. 
The $FDP(t)$ can be estimated by 
\begin{equation}\label{eq:fdphat}
    \fdphat(t)  \triangleq  \frac{\#\{j \,|\, M_j \leq -t \}}{\#\{j \,|\, M_j\geq t\}   \vee 1}.
\end{equation}
For any designated FDR level $q$, we could choose $\tau_q$ such that %
\begin{equation}\label{eq:tau}
\tau_q = \min_{t}\{\fdphat(t) \leq q \},
\end{equation}
and reject the $j$-th hypothesis if $M_j\ge \tau_q$. %

The next subsection explains how to construct  GMs in the low-dimensional case when $p<n$ such that the mirror statistics for the null variables are symmetric about zero. %
Section~\ref{subsec:lasso} extends the GM construct to the Lasso estimates for the high-dimensional case.

\subsection{Gaussian mirrors for the OLS estimator}\label{subsec:ols}
The $j$-th GM design ($j=1,\cdots,p$) for the OLS estimates leads to the following quantity: 
\begin{equation*}\label{eq:ols}
 \hat{\bbeta}^j :=(\hbjplus, \hbjminus, \hat{\bbeta}_{-j})=\argmin_{ \bbeta^j=(\bjplus, \bjminus, \bbeta_{-j}) }||\vy  - \vX^j\bbeta^j ||_2^2,
\end{equation*}
which has an explicit expression as $\hat{\bbeta}^j = ((\vX^j)^\top \vX^j)^{-1}(\vX^j)^\top \vy$. It is known that %
$(\hbjplus, \hbjminus)$ follows a bivariate normal distribution.
As a referee pointed out,  when $p<n$ and OLS is used for the fitting, the GM construct has an equivalent ``knockoff form'': one can treat $c_j \vz_j$ as a knockoff of $\vx_j$; directly regress $\vy$ on $(\vX, c_j \vz_j)$ and contrast the coefficients of $\vx_j$ and $c_j\vz_j$. %
This view provides a nice connection with the knockoff idea, but the equivalence no longer holds in more complex cases when OLS is not applicable. 
The following lemma %
 shows why the mirror statistic $M_j$ 
 is symmetrically distributed about zero for $j \in \mathcal{S}_0$. 

\begin{lemma}\label{lemma:symmetric}
Suppose $(U, V)$ follows a bivariate normal distribution with mean zero. If the correlation between $U$ and $V$ is zero,  then $M = |U + V| - |U-V|$ follows a symmetric distribution about zero, i.e., $P(M\geq  t) = P(M\leq -t)$,   $\forall t>0$.
\end{lemma}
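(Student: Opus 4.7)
The plan is to exploit the fact that zero correlation between jointly Gaussian variables implies independence, together with the reflection symmetry of centered Gaussians. First, I would note that since $(U,V)$ is bivariate normal with zero mean and $\mathrm{Cov}(U,V)=0$, the variables $U$ and $V$ are independent. Consequently, for any $\sigma_U,\sigma_V$, the marginal $U$ is symmetric about zero, so the map $(U,V)\mapsto(-U,V)$ is distribution preserving.

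The core step is to track what this sign flip does to the pair $(U+V,\,U-V)$. Under $(U,V)\mapsto(-U,V)$, the sum $U+V$ is sent to $-(U-V)$, and the difference $U-V$ is sent to $-(U+V)$. Hence, as a joint distributional identity,
\begin{equation*}
(U+V,\;U-V)\;\stackrel{d}{=}\;\bigl(-(U-V),\;-(U+V)\bigr).
\end{equation*}
Taking absolute values on each coordinate and forming $M$ then gives
\begin{equation*}
M=|U+V|-|U-V|\;\stackrel{d}{=}\;|{-}(U-V)|-|{-}(U+V)|\;=\;-M,
\end{equation*}
which is exactly the desired symmetry $\PP(M\ge t)=\PP(M\le -t)$ for all $t$.

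I do not expect a major obstacle here: the only non-trivial step is recognizing that marginal symmetry in one coordinate combined with independence suffices, even when $\mathrm{Var}(U)\neq\mathrm{Var}(V)$ (so $U+V$ and $U-V$ need not themselves be independent). A simpler ``exchangeability of $U$ and $V$'' argument would require equal variances and would not cover the general case needed in the GM construction, so the reflection argument above is the right tool. If desired, one could alternatively condition on $(|U|,|V|)$ and use that the four sign patterns of $(U,V)$ are equally likely, which delivers the same conclusion; but the one-line reflection proof is cleaner and is what I would present.
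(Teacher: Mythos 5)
Your proof is correct, but it takes a slightly different route from the paper's. You invoke independence of $U$ and $V$ (which zero correlation gives you under joint normality) together with the sign-symmetry of the centered Gaussian marginal, and push the reflection $(U,V)\mapsto(-U,V)$ through to get $(U+V,\,U-V)\stackrel{d}{=}(-(U-V),\,-(U+V))$ and hence $M\stackrel{d}{=}-M$. The paper instead never uses independence: it notes that $(U+V,U-V)$ is itself a centered bivariate normal whose two coordinates have equal variances precisely because $\mathrm{Cov}(U,V)=0$ forces $\Var(U+V)=\Var(U-V)$, so the pair is exchangeable, i.e.\ $(U+V,U-V)\stackrel{d}{=}(U-V,U+V)$, which swaps the two absolute values in $M$. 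Both are one-line distributional-identity arguments and both handle unequal variances $\sigma_U\neq\sigma_V$; your closing remark that an ``exchangeability'' argument would need equal variances applies only to exchangeability of $(U,V)$ itself, not to the paper's exchangeability of the sum/difference pair, which is automatic here. One thing your version buys is extra generality: it only needs $U$ symmetric about zero and independent of $V$, with no Gaussianity required of $V$; the paper's version buys the observation that joint normality plus zero covariance of the transformed pair is already enough, without passing through independence.
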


The following lemma provides an explicit formula of $c_j$ such that the correlation between $\hbjplus$ and $\hbjminus$ is zero, resulting in a symmetric distribution of $M_j$.
\begin{lemma}\label{lemma:olscor}
For the $GM_j$ design in $(\ref{eq:gm_design})$, choose
\begin{equation}\label{eq:olscj}
    c_j = \sqrt{\frac{\vx_j^\top(\vI_n - \vX_{-j}(\vX_{-j}^\top \vX_{-j})^{-1} \vX_{-j}^\top) \vx_j}{\vz_j^\top(\vI_n - \vX_{-j}(\vX_{-j}^\top \vX_{-j})^{-1} \vX_{-j}^\top) \vz_j}}.
\end{equation}
Then $\mbox{corr}(\hbjplus, \hbjminus \mid \vz_j, \vX_{-j})=0$ and $M_j$ is symmetric with respect to zero.
\end{lemma}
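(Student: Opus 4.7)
The plan is to compute the conditional covariance matrix of $(\hbjplus, \hbjminus)$ given $\vz_j$ and $\vX_{-j}$, then solve for $c_j$ that makes the off-diagonal entry vanish, and finally invoke Lemma~\ref{lemma:symmetric}. Conditional on $\vz_j$ and $\vX_{-j}$, the design matrix $\vX^j$ is non-random, so the OLS estimator $\hat{\bbeta}^j = ((\vX^j)^\top \vX^j)^{-1}(\vX^j)^\top \vy$ is Gaussian with covariance $\sigma^2 ((\vX^j)^\top \vX^j)^{-1}$. Thus the conditional correlation between $\hbjplus$ and $\hbjminus$ is zero if and only if the $(1,2)$ entry of $((\vX^j)^\top \vX^j)^{-1}$ is zero.

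Next I would apply block matrix inversion. Partition
\[
(\vX^j)^\top \vX^j = \begin{pmatrix} A & B \\ B^\top & C \end{pmatrix},
\]
where $A$ is the $2\times 2$ Gram block for $(\vx_j^+, \vx_j^-)$, $C = \vX_{-j}^\top \vX_{-j}$, and $B = ((\vx_j+c_j\vz_j)^\top \vX_{-j};\, (\vx_j-c_j\vz_j)^\top \vX_{-j})^\top$. The top-left block of the inverse equals $(A - B C^{-1} B^\top)^{-1}$, and since a $2\times 2$ matrix has off-diagonal zero if and only if its inverse does, it suffices to make the off-diagonal entry of $A - BC^{-1}B^\top$ vanish.

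Then I would directly expand that off-diagonal entry. Writing $\vu = \vX_{-j}^\top \vx_j$ and $\vw = \vX_{-j}^\top \vz_j$, one computes
\[
A_{12} = (\vx_j^+)^\top \vx_j^- = \|\vx_j\|_2^2 - c_j^2 \|\vz_j\|_2^2,
\]
\[
(BC^{-1}B^\top)_{12} = (\vu + c_j\vw)^\top C^{-1}(\vu - c_j\vw) = \vu^\top C^{-1}\vu - c_j^2 \vw^\top C^{-1}\vw,
\]
where the cross-terms $\pm c_j \vu^\top C^{-1}\vw$ cancel by the symmetry of $C^{-1}$. Subtracting and writing $P = \vX_{-j}(\vX_{-j}^\top \vX_{-j})^{-1}\vX_{-j}^\top$, the off-diagonal entry becomes
\[
\vx_j^\top(\vI_n - P)\vx_j \; - \; c_j^2\,\vz_j^\top(\vI_n - P)\vz_j.
\]
Setting this to zero and solving for $c_j^2$ yields exactly the formula~(\ref{eq:olscj}). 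Once the conditional correlation is zero and the residual vector $\beps$ has mean zero, for $j\in\mathcal{S}_0$ the vector $(\hbjplus, \hbjminus)$ is centered bivariate Gaussian with uncorrelated components, so Lemma~\ref{lemma:symmetric} applied conditionally (and then integrated over $\vz_j, \vX_{-j}$) gives the symmetry of $M_j$ about zero.

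The main obstacle is essentially bookkeeping: keeping the block-inverse identity and the cancellation $\vu^\top C^{-1}\vw - \vw^\top C^{-1}\vu = 0$ clean enough that the formula for $c_j^2$ emerges transparently. A minor subtlety worth flagging is that the symmetry statement for $M_j$ is meaningful only when $j\in\mathcal{S}_0$, since otherwise $\hbjplus$ and $\hbjminus$ are centered at $\beta_j/2\neq 0$; the zero-correlation conclusion itself, however, holds for every $j$.
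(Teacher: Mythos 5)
Your proposal is correct and follows essentially the same route as the paper: both compute the conditional covariance of $(\hbjplus,\hbjminus)$ as the top-left $2\times 2$ block of $\sigma^2((\vX^j)^\top\vX^j)^{-1}$ via block inversion, set the off-diagonal entry to zero, and solve for $c_j$, with the cross-terms cancelling to leave $\vx_j^\top(\vI_n-P)\vx_j - c_j^2\vz_j^\top(\vI_n-P)\vz_j$. Your version is marginally cleaner in that you work with the Schur complement $A-BC^{-1}B^\top$ directly and observe that a $2\times2$ symmetric matrix and its inverse share the same off-diagonal zero pattern, whereas the paper carries the full normalized inverse with its determinant factor; your remark that the symmetry of $M_j$ is only meaningful for $j\in\mathcal{S}_0$ is also an accurate reading of how the lemma is used.
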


\begin{definition}\label{gm:ols}
(Gaussian Mirrors for OLS) For $j=1,\cdots,p$, one first generates an $n$-dimensional Gaussian random vector $\vz_j\sim N(0, \vI_n)$, and then computes $c_j$ based on equation (\ref{eq:olscj}). The $j$-th GM design is $GM_j = \{\vx_j + c_j \vz_j,\vx_j - c_j \vz_j, \vx_1, \dots, \vx_{j-1}, \vx_{j+1}, \dots, \vx_p\} $.
\end{definition}

According to Lemma \ref{lemma:olscor},  $GM_j$ defined in Definition \ref{gm:ols} guarantees that the covariance between $\hbjplus$ and $\hbjminus$ is zero. We further construct the mirror test statistics $M_j$ %
as in (\ref{eq:mj}). The following theorem is a direct consequence of Lemma \ref{lemma:symmetric}.
\begin{theorem}\label{thm:coinflip1}
Let $M_j$'s be the test statistics defined in (\ref{eq:mj}) based on $GM_j$ in Definition \ref{gm:ols}, then
\[
P(M_j \leq -t \mid \vz_j) = P(M_j \geq t \mid \vz_j), \; \forall t>0, 
\]
for $j\in \mathcal{S}_0$.
\end{theorem}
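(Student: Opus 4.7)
The strategy is to show that, for $j \in \mathcal{S}_0$, the pair $(\hbjplus, \hbjminus)$ satisfies the hypotheses of Lemma~\ref{lemma:symmetric} conditional on $\vz_j$, so that the conclusion follows by applying Lemma~\ref{lemma:symmetric} with $U = \hbjplus$ and $V = \hbjminus$. The theorem is essentially a direct assembly of Lemma~\ref{lemma:symmetric} and Lemma~\ref{lemma:olscor}, together with the standard Gaussian distribution theory for the OLS estimator. I would view $\vX$ (and hence $\vX_{-j}$) as fixed, so the randomness in the conditional statement comes from the noise $\bepsilon$.

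The first step is to identify the true coefficient vector for the $GM_j$ reparameterization. Since $\mathrm{span}(\vx_j^+, \vx_j^-) = \mathrm{span}(\vx_j, \vz_j)$, the representation $\vX\bbeta = \vx_j^+ \beta_j^+ + \vx_j^- \beta_j^- + \vX_{-j}\bbeta_{-j}$ is unique on that span, and matching coefficients of $\vx_j$ and $\vz_j$ yields $\beta_j^+ = \beta_j^- = \beta_j/2$. Hence, for $j \in \mathcal{S}_0$ we have $\beta_j^+ = \beta_j^- = 0$, so the model fitted in (\ref{eq:plsgm}) is $\vy = \vX^j\bbeta^j + \bepsilon$ with $\bbeta^j = (0,0,\bbeta_{-j}^\top)^\top$. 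Because the OLS estimator is a linear functional of $\vy$, and $\bepsilon \sim N(0, \sigma^2 \vI_n)$, the full vector $\hat{\bbeta}^j = \bbeta^j + ((\vX^j)^\top \vX^j)^{-1}(\vX^j)^\top \bepsilon$ is jointly Gaussian with mean $\bbeta^j$ and covariance $\sigma^2((\vX^j)^\top \vX^j)^{-1}$, all conditional on $\vz_j$. In particular, $(\hbjplus, \hbjminus)$ is bivariate normal with mean zero conditional on $\vz_j$.

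Next I would invoke Lemma~\ref{lemma:olscor}: the explicit choice of $c_j$ in (\ref{eq:olscj}) forces $\mathrm{corr}(\hbjplus, \hbjminus \mid \vz_j, \vX_{-j}) = 0$, so the $2\times 2$ marginal covariance of $(\hbjplus, \hbjminus)$ is diagonal. The pair $(\hbjplus, \hbjminus)$ therefore meets exactly the hypothesis of Lemma~\ref{lemma:symmetric}: centered bivariate normal with zero correlation. Applying Lemma~\ref{lemma:symmetric} yields that $M_j = |\hbjplus + \hbjminus| - |\hbjplus - \hbjminus|$ has a distribution symmetric about zero conditional on $\vz_j$, which is precisely the claim $P(M_j \leq -t \mid \vz_j) = P(M_j \geq t \mid \vz_j)$ for all $t > 0$.

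There is no real obstacle here; the theorem is a corollary of the two preceding lemmas. The only mildly delicate point is the bookkeeping in the first step, namely verifying that under the $GM_j$ reparameterization the true mirror coefficients for a null variable really are zero (rather than, say, some nonzero value that compensates for the added noise $c_j \vz_j$). This is handled by the uniqueness argument on $\mathrm{span}(\vx_j, \vz_j)$ above, and once this is in place the remainder is automatic.
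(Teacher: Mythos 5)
Your proof is correct and follows essentially the same route as the paper: the paper presents Theorem~\ref{thm:coinflip1} as a direct consequence of Lemma~\ref{lemma:symmetric} combined with the zero-correlation guarantee of Lemma~\ref{lemma:olscor}, exactly as you do. Your additional verification that $\beta_j^+=\beta_j^-=0$ for $j\in\mathcal{S}_0$ (so the conditional means vanish) is a point the paper simply asserts earlier in Section~\ref{sec:method}, and your uniqueness-of-representation argument fills that in cleanly without changing the approach.
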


\begin{algorithm}[ht]
\begin{enumerate}

\item Parallel FOR $j=1, 2,\cdots, p$:

\begin{enumerate}

\item Generate $\vz_j$ from Gaussian distribution with mean $0$ and variance $\vI_n$.

\item Calculate $c_j$ using (\ref{eq:olscj}) and get the $j$-th GM design, $GM_j$, as in Definition \ref{gm:ols}. %

\item Obtain the ordinary least square estimator of $\hbjplus$ and $\hbjminus$:
    \begin{equation*}
      (\hbjplus, \hbjminus, \hat{\bbeta}_{-j}) =  \argmin_{ \beta_j^+, \beta_j^-, \mathbf{\bbeta}_{-j}}||\vy -  \vX_{-j}\mathbf{\bbeta}_{-j} - \vx_j^+ \beta_j^+ -\vx_j^- \beta_j^- ||_2^2.
    \end{equation*}
    
\item Calculate the mirror statistic $M_j= |\hbjplus+ \hbjminus| - |\hbjplus-\hbjminus|$.
\end{enumerate}
END parallel FOR loop

\item For a designated FDR level $q$, calculate the cutoff $\tau_q$ as
\begin{equation*}
    \tau_q = \min_t\left\{t:\frac{\#\{j \,|\, M_j \leq -t \}}{\#\{j \,|\, M_j\geq t\}	\vee 1} \leq q \right\}.
\end{equation*}

\item Output the index of the selected variables: $\hat{\mathcal{S}}_1 = \{j \,|\, M_j \ge \tau_q \}$.
\end{enumerate}
\caption{Gaussian mirror algorithm for OLS}\label{alg:gm:ols}
\end{algorithm}

Perturbation has been widely used in statistics to ensure stability  and quantify uncertainty \citep{yu2020veridical,yu2013stability}.   For example, jackknife and bootstrap \citep{efron1992bootstrap} are two effective data perturbation methods with broad applications. As summarized in Algorithm~\ref{alg:gm:ols}, we use data perturbations to control the FDR, which can
 improve both the robustness and power of the multiple testing procedure. Since GM deals with one variable at a time, the resulting algorithm can be easily parallelized.
 Theorem \ref{thm:coinflip1} guarantees that the distribution of $M_j$ is symmetric about zero for the null variables. Therefore, for any threshold $t>0$, and $\hat{\mathcal{S}}_1=\{j:M_j\ge t\}$,
a natural estimate of the FDP in $\hat{\mathcal{S}}_1$ is given by (\ref{eq:fdphat}) and a   data-driven threshold $\tau_q$ can be obtained as in (\ref{eq:tau}). 
In Section \ref{sec:FDR:theorem}, we show that the data-driven choice of $\tau_q$ in Algorithm \ref{alg:gm:ols} guarantees that FDR is controlled asymptotically under some weak dependence assumptions of the mirror statistics.

\section{Gaussian Mirrors for High Dimensional Regression}\label{subsec:lasso}
In the high-dimensional setting when the number of features $p$ is greater than the number of subjects $n$, one may still create mirror variables as $\vx_j^+ = \vx_j + c_j \vz_j$ and $\vx_j^- = \vx_j - c_j \vz_j$,  and construct
 the mirror statistics  using the Lasso estimates in (\ref{eq:lassocj}). 
 Although this simple extension works well empirically, the following challenges arise.
 First, the easy formula for $c_j$ as in the OLS case is no longer available. We can estimate the precision matrix of $\vX$ under sparsity assumptions and construct $c_j$  to make $\vx_j^+$ and $\vx_j^-$ conditionally independent asymptotically. Though this approach works well in simulations, a rigorous theoretical justification is difficult to come by. Second, the Lasso estimator is biased because of the regularization via $L_1$ penalty. This implies that $\mathbb{E}(\hbjplus)\neq 0$ and $\mathbb{E}(\hbjminus)\neq 0$ for some predictor variables in the null set.
 Last but not the least, 
 the Lasso estimator is a nonlinear function of $\by$ and  the Lasso variable selection process imposes nonlinear constraints on the sample space. %
We here propose a post-selection strategy to design mirror variables such that the resulting mirror statistic $M_j$  satisfies the symmetric property when $j\in \mathcal{S}_0$.%

\subsection{Literature on High-dimensional Inference}
In order to derive a proper inferential method 
for high-dimensional regression  post variable selection,
\cite{zhang2014confidence} and \cite{van2014asymptotically} proposed de-biased Lasso,  \cite{wasserman2009high} advocated using data splitting  \citep{cox1975note} to overcome difficulties caused by variable selection,
and  \cite{berk2013valid} suggested a post-selection adjustment framework conditioning on the selection.
Under the post-selection framework,
\cite{lee2016exact} proposed a procedure that first selects variables using Lasso and then obtains the OLS estimates for the selected model. By characterizing the selection event as a series of linear constraints on the post-selection OLS estimates, they provided a valid post-selection inference method on certain linear combinations of the coefficients of the selected variables. 

Our goal is to make simultaneous inference for all the parameters such that the FDR can be controlled at a designated level. %
We consider a post-selection procedure based on Lasso. More specifically, similar to \cite{lee2016exact}, we first use Lasso to select variables and then re-fit the selected model with OLS and construct mirror statistics the same way as in the low-dimensional case.  %
In the following, we explain how to construct the mirror statistics, adjusted according to the selection, such that the symmetric property holds asymptotically. %

\subsection{Post-selection adjustment for Gaussian mirrors}
Recall that Lasso solves the minimization problem  
\begin{equation}\label{eq:org:lasso}
    \tilde{\bbeta} 
=\argmin_{\bbeta} ||\by - \vX \bbeta ||_2^2 + \lambda_n || \bbeta ||_1. 
\end{equation}
Let $\hat{\cS} = \{j: \tilde{\beta}_j \ne 0\}$ and  $ \hat{\bs}= \mbox{sign}(\tilde{\beta}_{\hat{\cS}})$ denote the set of selected variables and  their coefficients' signs, respectively, obtained by  Lasso. %
By Lemma 4.1 in \cite{lee2016exact}, %
the event $\{ (\hat{\cS}, \hat{\bs})= (\cS, \bs)\}$ can be rewritten as a  series of constraints on $\by$ as 
\begin{equation}\label{eq:postineq}
  \{\hat{\cS} = \cS, \hat{\bs} = \bs \} = \begin{Bmatrix}
  \begin{pmatrix}
  A_0(\cS, \bs)\\
  A_1(\cS, \bs) 
  \end{pmatrix} \by \le
  \begin{pmatrix}
  \vb_0(\cS, \bs)\\
  \vb_1(\cS, \bs)
  \end{pmatrix}
  \end{Bmatrix}, 
\end{equation}
where  $A_0$ and $A_1$ are matrices shown below, which encode the ``inactive'' constraints determining the selection set and the ``active'' constraints determining the sign of nonzero coefficients, respectively:
\begin{equation}\label{eq:constraint}
\begin{aligned}
    A_0(\cS, \bs) &=\frac{1}{\lambda_n}\begin{pmatrix} \vX^\top_{-\cS}(\vI_n - P_{\cS})\\
    - \vX^\top_{-\cS}(\vI_n - P_{\cS})
    \end{pmatrix},  \\
    \vb_0(\cS, \bs) &= \begin{pmatrix}
    1 - \vX^\top_{-\cS}(\vX_{\cS})^{+}\bs \\
    1 + \vX^\top_{-\cS}(\vX_{\cS})^{+}\bs
    \end{pmatrix}, \\
    A_1(\cS, \bs) &= -\text{diag}(\bs)(\vX_{\cS}^\top  \vX_{\cS})^{-1} \vX_{\cS}^\top,\\
    \vb_1(\cS, \bs) &= -\lambda_n \diag(\bs)(\vX_{\cS}^\top  \vX_{\cS})^{-1} \bs,
\end{aligned}
\end{equation}
where $P_{\cS}$ is the projection matrix of $\cS$ corresponding to $\vX_{\mathcal{S}}$. 

Let $\vX_{-j(\cS)}$ denote the the submatrix of $\vX_{\cS}$ by removing its $j$-th column, and let $c_j$ be  
\begin{equation}\label{eq:lassocj}
    c_j = \sqrt{\frac{\vx_j^\top(\vI_n - \vX_{-j}(\vX_{-j(\cS)}^\top \vX_{-j(\cS)})^{-1} \vX_{-j(\cS)}^\top) \vx_j}{\tilde{\vz}_j^\top(\vI_n - \vX_{-j(\cS)}(\vX_{-j(\cS)}^\top \vX_{-j(\cS)})^{-1} \vX_{-j(\cS)}^\top) \tilde{\vz}_j}},
\end{equation}
where  $\tilde{\vz}_j = (I - P_{\cS})\vz_j$ with $\vz_j$ generated from $N(0, \vI_n)$. Then, we have

\begin{definition}\label{gm:lasso}
(Post-selection Gaussian Mirror) Given the post-selection set $\cS$ and the corresponding design matrix $\vX_\cS$, for $j\in \{ 1,\dots, |\cS|\}$, we generate $n$-dimensional random vector $\vz_j$ from $N(0, \vI_n)$, then compute $c_j$ based on equation (\ref{eq:lassocj}). %
The $j$-th GM is designed as 
$
\vX^j_{\cS} = \{\vx_{\cS[j]} + c_j \vz_j,\vx_{\cS[j]} - c_j \vz_j, \vx_{\cS[1]}, \dots, \vx_{\cS[j-1]}, \vx_{\cS[j+1]}, \dots, \vx_{\cS[|\cS|]}\}$ where $\cS[j]$ is the $j$th index in $\cS$. 

\end{definition}

In contrast to (\ref{eq:olscj}) in the OLS case, here $c_j$ is constructed by first projecting $\vz_j$ on the orthogonal space of $\vX_{\cS}$. 
Let $\veta^\top_{1} = \be_1^\top ((\vX^j_{\cS})^\top\vX^j_{\cS})^{-1} (\vX^j_{\cS})^\top $ and $\veta^\top_{2} = \be_2^\top ((\vX^j_{\cS})^\top\vX^j_{\cS})^{-1} (\vX^j_{\cS})^\top$ where $\be_\ell$ is the standard basis vector with the $\ell$-th entry as one and the others as zero.
Note that  the first two estimated coefficients from the OLS fitting of $y$ 
on $\vX_\cS^j$ are simple  
\begin{equation}\label{eq:post-beta}
    \hbjplus = \veta_{1}^\top \by \ \mbox{ and } \ \hbjminus = \veta_{2}^\top \by.
\end{equation}
Before deriving the joint post-selection distribution of $\hbjplus$ and $\hbjminus$, we first characterize the linear constraints on $(\hbjplus + \hbjminus,\hbjplus - \hbjminus)$ due to the post-selection event $\cS$. 

\begin{lemma}\label{lem:constraints}
Let $\veta=(\veta_1, \veta_2)$, $A_0(\cS, \bs)$ and $A_1(\cS,\bs)$ be matrices defined in (\ref{eq:constraint}). Then, there exist $\ba_0\in \bbR^{2(p-|\cS|)}$ and $\ba_1 \in \bbR^{|\cS|}$, such that 
\begin{equation}\label{eq:constrain}
    A_0(\cS, \bs)\veta  =\ba_0 (1,-1)  \mbox{ and } A_1(\cS, \bs)\veta =\ba_1 (1,1).
\end{equation}
\end{lemma}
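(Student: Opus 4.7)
The plan is to exploit the symmetry $\vx_j^+ \leftrightarrow \vx_j^-$ of the mirror design by splitting the $n\times 2$ matrix $\veta$ into symmetric and antisymmetric columns. Setting $U = \veta_1 + \veta_2$ and $V = \veta_1 - \veta_2$, one has $\veta = \tfrac12 U(1,1) + \tfrac12 V(1,-1)$, so for any matrix $A$ acting on the left, $A\veta = \tfrac12 AU(1,1) + \tfrac12 AV(1,-1)$. The claimed identity $A_0\veta = \ba_0(1,-1)$ is therefore equivalent to $A_0 U = \vzero$ (with $\ba_0 = \tfrac12 A_0 V$), and $A_1\veta = \ba_1(1,1)$ is equivalent to $A_1 V = \vzero$ (with $\ba_1 = \tfrac12 A_1 U$).

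First I would read off the orthogonality relations $(\vX^j_\cS)^\top \veta_\ell = \be_\ell$. Combined with $\vx_j^+ + \vx_j^- = 2\vx_{\cS[j]}$ and $\vx_{\cS[k]}^\top \veta_\ell = 0$ for $k\neq j$, these give $\vX_\cS^\top \veta_1 = \vX_\cS^\top \veta_2 = \tfrac12 \be_j$, hence $\vX_\cS^\top V = \vzero$. Since $A_1(\cS,\bs) = -\diag(\bs)(\vX_\cS^\top \vX_\cS)^{-1}\vX_\cS^\top$, the identity $A_1 V = \vzero$ is immediate, yielding the second claim with $\ba_1 = -\tfrac12\diag(\bs)(\vX_\cS^\top \vX_\cS)^{-1}\be_j$.

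The harder half is $A_0 U = \vzero$, which amounts to showing $U \in \mathrm{col}(\vX_\cS)$ so that the $(\vI_n - P_\cS)$ factor inside $A_0$ annihilates $U$. I would change basis via $\vX^j_\cS = \vY R$ with $\vY = (\vx_{\cS[j]}, \tilde{\vz}_j, \vX_{-j(\cS)})$ and $R$ containing the mirror block $\bigl(\begin{smallmatrix}1 & 1 \\ c_j & -c_j\end{smallmatrix}\bigr)$ extended by identity on the remaining coordinates. This is legitimate because $\mathrm{col}(\vX^j_\cS) = \mathrm{col}(\vX_\cS) \oplus \mathrm{span}(\tilde{\vz}_j)$, so the $P_\cS\vz_j$ portion of the perturbation is absorbed into the coefficients of $\vX_\cS$. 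Since $\tilde{\vz}_j \perp \vX_\cS$, the Gram matrix $\vY^\top \vY$ block-diagonalizes and the two building-block directions are the Lee-type estimand $\vY(\vY^\top\vY)^{-1}\be_1 = \vX_\cS(\vX_\cS^\top\vX_\cS)^{-1}\be_j \in \mathrm{col}(\vX_\cS)$ and $\vY(\vY^\top\vY)^{-1}\be_2 = \tilde{\vz}_j/\|\tilde{\vz}_j\|^2 \perp \mathrm{col}(\vX_\cS)$. Using $\veta_\ell = \vY(\vY^\top\vY)^{-1}R^{-\top}\be_\ell$ together with $R^{-\top}(\be_1+\be_2) = \be_1$, the $\tilde{\vz}_j$-contributions cancel in $U$, leaving $U = \vX_\cS(\vX_\cS^\top \vX_\cS)^{-1}\be_j \in \mathrm{col}(\vX_\cS)$, as required.

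The main obstacle is this basis change: a naive decomposition in the nonorthogonal pair $(\vX_\cS, \vz_j)$ leaves an apparent $\tilde{\vz}_j$-component in $U$ and the argument stalls. The key realization is that the correct orthogonal basis uses $\tilde{\vz}_j$ rather than $\vz_j$, since the orthogonality $\tilde{\vz}_j \perp \vX_\cS$ is exactly what synchronizes the mirror construction with the $(\vI_n - P_\cS)$ factor appearing in the inactive Lasso constraints, and explains why $\hbjplus + \hbjminus$ recovers precisely the classical post-selection estimand of \cite{lee2016exact} for $\beta_{\cS[j]}$.
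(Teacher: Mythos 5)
Your proof is correct and follows essentially the same route as the paper's: split $\veta$ into the symmetric column $U=\veta_1+\veta_2$ and the antisymmetric column $V=\veta_1-\veta_2$, reduce the claim to $A_0(\cS,\bs)U=\vzero$ and $A_1(\cS,\bs)V=\vzero$, and verify these from the normal equations; you simply carry out explicitly the verification that the paper compresses into the single assertion ``by the definition of $\tilde{\vz}_j$.'' One caveat: your remark that the $P_{\cS}\vz_j$ portion of the perturbation is ``absorbed'' while $R$ stays equal to the mirror block extended by the identity is not literally accurate --- if the design used $\vz_j$ rather than $\tilde{\vz}_j$, the absorbed coefficients would appear in $R$, $R^{-\top}(\be_1+\be_2)$ would no longer equal $\be_1$, and $U$ would retain a genuine (not merely apparent) $\tilde{\vz}_j$-component of size $-\gamma_j/\|\tilde{\vz}_j\|_2^2$ with $P_{\cS}\vz_j=\vX_{\cS}\bgamma$, so the lemma really does require the perturbation direction to be $\tilde{\vz}_j$, exactly as your closing paragraph (and the paper's own one-line argument) recognizes.
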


In Eq (\ref{eq:constrain}), we show that $ A_0(\cS, \bs)\veta$ and $A_1(\cS, \bs)\veta$ are both rank one. Write $\by= P_{\veta} \by + (\vI_n- P_{\veta})\by$, and let 
$\br = (\vI_n - P_{\veta})\by$. It is easy to verify that $\br$ is uncorrelated with  $\veta^\top\by$, hence independent of $\veta^\top\by$. 
Let $\alpha = \veta_1^T\veta_1 = \veta_2^T \veta_2$. The constraint $A_0(\cS, \bs) \by \leq \vb_0(\cS, \bs)$ in (\ref{eq:postineq}) is equivalent to 
\begin{align*}
 A_0(\cS, \bs) \veta (\veta^\top\veta)^{-1}\veta^\top\by + A_0(\cS, \bs) \br &= \ba_0(1,-1)\diag(\alpha,\alpha) (\hbjplus, \hbjminus)^\top  + A_0(\cS, \bs) \br \\
 & = \alpha \ba_0(\hbjplus- \hbjminus)  + A_0(\cS, \bs) \br < \vb_0(\cS, \bs);
\end{align*} 
i.e., the ``inactive'' constraints on $(\hbjplus, \hbjminus)$ are applied on the direction of $[1,-1]$. Similarly, we have $\alpha \ba_1 (\hbjplus+ \hbjminus) + A_1(\cS, \bs) \br < \vb_1(\cS, \bs))$, that is, the ``active'' constraints are applied on the direction of $[1,1]$. 
 As shown in Figure \ref{fig:betalasso} (left panel), the constraint regions for $\hbjplus$ and $\hbjminus$ are along the line with slope $1$ and $-1$. By rotating the coordinate system by $45^{\circ}$, we have the joint distribution of $\hbjplus+\hbjminus$ and $\hbjplus-\hbjminus$ shown in Figure \ref{fig:betalasso} (right panel), where the constraint regions are parallel to the $x$-axis and $y$-axis. We characterize the constraints provided by the selection event $\cS$ in  Lemma \ref{le:exact:constraint}. 

\begin{figure}[h]
\centering 
\includegraphics[width=0.8\textwidth]{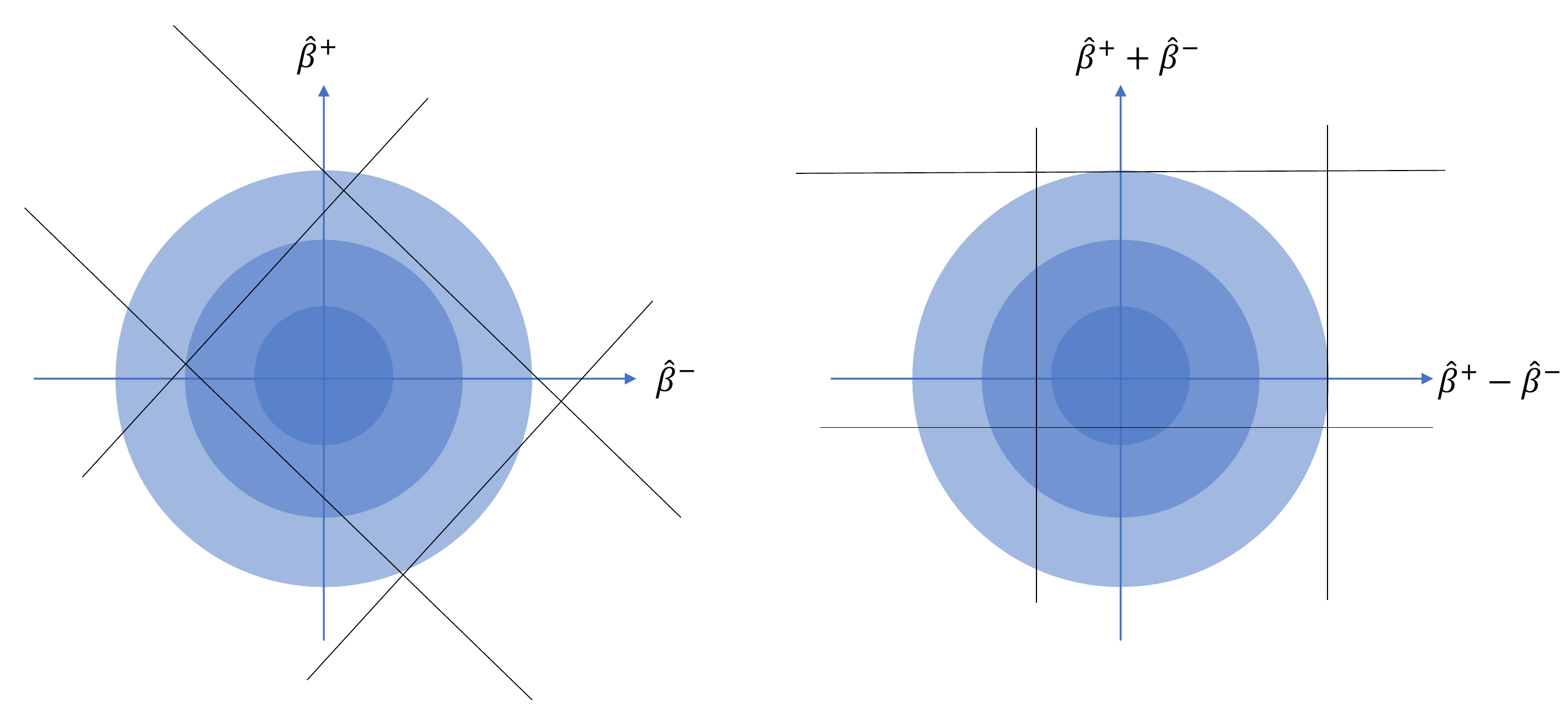}
\caption{Left: the joint distribution of $\hbjplus$ and $\hbjminus$. Right: Rotate the coordinate system by $45^{\circ}$ and obtain the joint distribution of  $\hbjplus+\hbjminus$ and $\hbjplus-\hbjminus$.} \label{fig:betalasso}
\end{figure}

\begin{lemma} \label{le:exact:constraint}
The selection event can be written as
\begin{align*}
   & \{A\by \leq \vb\} = \{A_0(\cS,\bs)\by \le \vb_0(\cS,\bs) \} \cap \{A_1(\cS,\bs) \by \le \vb_1(\cS,\bs)\} \\
    =&   \{\cV_0^{L}(\br) \le \hbjplus-\hbjminus \le \cV_0^{U}(\br),\, \cV_0^{N}(\br)>0 \}   \cap \{\cV_1^{L}(\br) \le \hbjplus+\hbjminus \le \cV_1^{U}(\br),\,\cV_1^{N}(\br)>0 \}, 
\end{align*}
where 
\begin{equation}\label{eq:v}
\begin{aligned}
    \cV_0^{L}(\br)  := \max_{j:a_{0j} <0}
    \frac{b_{0j} - (A_0\br)_j}{\alpha a_{0j}}, \, \cV_0^{U}(\br)  := \min_{j:a_{0j} >0}\, \frac{ b_{0j} - (A_0\br)_{j}}{\alpha a_{0j}}, \,
    \cV_0^{N}(\br)  := \min_{j: a_{0j} = 0}
     \vb_{0j} -  (A_0\br)_{j}, \\
    \cV_1^{L}(\br)  := \max_{j: a_{1j} <0}
    \frac{b_{1j} -  (A_1\br)_{j}}{\alpha a_{1j}}, \,
    \cV_1^{U}(\br)  := \min_{j: a_{1j} >0}
    \frac{ b_{1j} - (A_1\br)_{j}}{ \alpha a_{1j}}, \,
    \cV_1^{N}(\br)  := \min_{j: a_{1j} = 0}
     b_{1j} - (A_1\br)_{j},
\end{aligned}
\end{equation}
with $\alpha= \veta_1^\top \veta_1 = \veta_2^\top \veta_2$, $a_{0j}, a_{1j}$ being the $j$-th element of $\ba_0, \ba_1$ in (\ref{eq:constrain}), $b_{0j}, b_{1j}$ being the $j$th element of $ \vb_0(\cS,\bs)$ and $\vb_1(\cS,\bs)$ in (\ref{eq:constraint}), respectively. 
\end{lemma}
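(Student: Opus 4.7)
The plan is to project $\vy$ onto the column space of $\veta$ and its orthogonal complement, then use the rank-one structure of $A_0\veta$ and $A_1\veta$ from Lemma \ref{lem:constraints} to translate each block of inequalities in (\ref{eq:postineq}) into one-dimensional constraints on $\hbjplus-\hbjminus$ and $\hbjplus+\hbjminus$; a sign-based case split on the rows then yields the $\max$/$\min$ formulas in (\ref{eq:v}).

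First, I would record the orthogonality $\veta_1^\top\veta_2=0$. This follows from the post-selection analog of Lemma \ref{lemma:olscor}: the scaling $c_j$ in (\ref{eq:lassocj}) is chosen so that the two post-selection OLS coefficients of the mirror pair are uncorrelated conditional on $\vz_j$ and $\vX_{-j(\cS)}$, which is equivalent to $\veta_1^\top\veta_2=0$. Together with $\alpha=\veta_1^\top\veta_1=\veta_2^\top\veta_2$, this gives $\veta^\top\veta=\alpha\vI_2$ and $P_\veta=\alpha^{-1}\veta\veta^\top$. Writing $\vy=P_\veta\vy+\br$, Gaussianity plus orthogonality make $\br$ independent of $(\hbjplus,\hbjminus)=(\veta_1^\top\vy,\veta_2^\top\vy)$, so once we condition on $\br$ the two mirror coefficients are the only remaining random quantities in the selection event.

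Next, I would substitute this decomposition into the inequalities of (\ref{eq:postineq}). Using Lemma \ref{lem:constraints},
\[
A_0(\cS,\bs)\vy=\alpha^{-1}\ba_0(\hbjplus-\hbjminus)+A_0(\cS,\bs)\br,\qquad A_1(\cS,\bs)\vy=\alpha^{-1}\ba_1(\hbjplus+\hbjminus)+A_1(\cS,\bs)\br,
\]
so the selection event decouples into one block of scalar inequalities involving only $\hbjplus-\hbjminus$ (the ``inactive'' constraints) and one block involving only $\hbjplus+\hbjminus$ (the ``active'' constraints), with $\br$-dependent offsets playing the role of random thresholds independent of the mirror statistics.

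Then I would inspect the $j$-th row of each block. For the $A_0$ block the row reads $\alpha^{-1}a_{0j}(\hbjplus-\hbjminus)+(A_0\br)_j\le b_{0j}$, which is an upper bound on $\hbjplus-\hbjminus$ when $a_{0j}>0$, a lower bound when $a_{0j}<0$, and a pure constraint on $\br$ when $a_{0j}=0$; taking the tightest bound over each sign class and rearranging yields $\cV_0^U(\br)$, $\cV_0^L(\br)$, and the positivity $\cV_0^N(\br)>0$ exactly as in (\ref{eq:v}). The identical argument on the $A_1$ block produces $\cV_1^U$, $\cV_1^L$, $\cV_1^N$. Intersecting the two sets of constraints gives the claimed characterization. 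The main obstacle I expect is not any deep step but the careful bookkeeping of the scalar $\alpha$ and the sign flips when dividing through by $a_{0j}$ or $a_{1j}$ of each sign; a secondary technicality is verifying the orthogonality $\veta_1^\top\veta_2=0$ from the explicit formula (\ref{eq:lassocj}), which parallels the OLS derivation of Lemma \ref{lemma:olscor} on the reduced design $\vX_\cS^j$ and requires no new ideas.
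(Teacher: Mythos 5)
Your proposal is correct and follows essentially the same route as the paper's proof: decompose $\by=P_\veta\by+\br$, invoke the rank-one identities of Lemma \ref{lem:constraints} to reduce each block of (\ref{eq:postineq}) to scalar inequalities in $\hbjplus-\hbjminus$ and $\hbjplus+\hbjminus$ with $\br$-dependent offsets, and split on the sign of $a_{0j}$, $a_{1j}$ to obtain the $\max$/$\min$ formulas. The only discrepancy is the placement of $\alpha$: your $\alpha^{-1}\ba_0(\hbjplus-\hbjminus)$ is the algebraically correct consequence of $(\veta^\top\veta)^{-1}=\alpha^{-1}\vI_2$, whereas the paper writes $\alpha\ba_0(\hbjplus-\hbjminus)$ to match the $\alpha a_{0j}$ denominators in (\ref{eq:v}); this is an immaterial constant-bookkeeping inconsistency in the paper rather than a gap in your argument.
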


Note that  $\Cov(\hbjplus + \hbjminus, \hbjplus - \hbjminus) = \sigma^2(\veta_1 + \veta_2)^\top (\veta_1 - \veta_2)=0$. 
Since $(\hbjplus, \hbjminus)$ are independent of $\veta^\top \by$, we can treat $\br$  as “fixed” quantities for the distribution of  $\hbjplus+\hbjminus$ and $\hbjplus-\hbjminus$ when conditioning on $\cS$. Thus $\hbjplus+\hbjminus | \{A\by \leq \vb\}$ and $\hbjplus-\hbjminus |\{A\by \leq \vb\}$ are two independent, truncated normal random variables. %
Therefore, we have the following theorem.

\begin{theorem}\label{thm:trun:dist}
Let $F_{\mu, \sigma^2}^{[a, b]}$ denote the cumulative distribution function of a $N(\mu, \sigma^2)$ random variable truncated to the interval $[a,b]$, that is,
\begin{equation}\label{eq:truncatednormal}
    F_{\mu, \sigma}^{[a, b]}(x) = \frac{\Phi((x-\mu)/\sigma)-\Phi((a-\mu)/\sigma)}{\Phi((b-\mu)/\sigma)-\Phi((a-\mu)/\sigma)},
\end{equation}
where $\Phi$ is the cumulative distribution function of a standard normal random variable. 
When $\cS_1	\subseteq \cS$ and for any $j\in \cS_0$,
 we have
 \begin{equation*}\label{eq:truncation}
\begin{aligned}
F_{0,\sigma^2}^{[\cV_1^{L}(\br), \cV_1^{U}(\br)]} (\hbjplus+\hbjminus )\,\mid\, \hat{\cS} = \cS, \hat{\bs} = \bs \sim  Unif(0,1), \\
F_{0,\sigma^2}^{[\cV_0^{L}(\br), \cV_0^{U}(\br)]} (\hbjplus - \hbjminus )\,\mid\, \hat{\cS} = \cS, \hat{\bs} = \bs \sim  Unif(0,1), 
\end{aligned}
\end{equation*}
where $\cV_0^{L}(\br)$, $\cV_0^{U}(\br)$, $\cV_1^{L}(\br)$ and $\cV_1^{U}(\br)$ are defined in (\ref{eq:v}).
\end{theorem}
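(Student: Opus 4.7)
\medskip
\noindent\textbf{Proof proposal.} The plan is to combine the two preceding lemmas with a conditioning argument on the residual vector $\br$, and then invoke the probability integral transform on two independent truncated normals. First, I would use Lemma~\ref{le:exact:constraint} to split the selection event $\{A\by \le \vb\}$ into two pieces that depend on $(\hbjplus - \hbjminus)$ and $(\hbjplus + \hbjminus)$ separately, together with the two ``free'' conditions $\cV_0^N(\br) > 0$ and $\cV_1^N(\br) > 0$ that depend only on $\br = (\vI_n - P_\veta)\by$. Since $\by$ is Gaussian and $\br$ is orthogonal (hence independent of) $\veta^\top\by = (\hbjplus,\hbjminus)^\top$, conditioning on $\br$ turns all four boundary quantities $\cV_0^L(\br), \cV_0^U(\br), \cV_1^L(\br), \cV_1^U(\br)$ into deterministic numbers while leaving the conditional law of $(\hbjplus,\hbjminus)$ unchanged from its unconditional law.

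Next, I would establish that $(\hbjplus + \hbjminus, \hbjplus - \hbjminus)$ is jointly Gaussian with mean zero and diagonal covariance. Joint Gaussianity follows from $(\hbjplus,\hbjminus) = (\veta_1^\top\by, \veta_2^\top\by)$, and the diagonal covariance was already computed above the theorem statement as $\Cov(\hbjplus+\hbjminus,\hbjplus-\hbjminus) = \sigma^2(\veta_1+\veta_2)^\top(\veta_1-\veta_2) = 0$ using $\veta_1^\top\veta_1 = \veta_2^\top\veta_2 = \alpha$. Since $\cS_1 \subseteq \cS$, the true regression function lies in the column span of $\vX_\cS \subseteq \vX_\cS^j$, so the OLS fit on $\vX_\cS^j$ is unbiased for the corresponding coefficients. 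For $j \in \cS_0$ we have $\beta_j = 0$, hence $\beta_j^+ = \beta_j^- = \beta_j/2 = 0$, yielding $\EE(\hbjplus) = \EE(\hbjminus) = 0$. Joint normality plus zero correlation then upgrades to independence.

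With independence and zero means in hand, conditioning jointly on $\br$ and on the selection event produces, for $j \in \cS_0$, two independent truncated normals: $\hbjplus + \hbjminus$ restricted to $[\cV_1^L(\br),\cV_1^U(\br)]$ and $\hbjplus - \hbjminus$ restricted to $[\cV_0^L(\br),\cV_0^U(\br)]$, each centered at zero. Applying the probability integral transform in equation~(\ref{eq:truncatednormal}) to each variable gives
\[
F_{0,\sigma^2}^{[\cV_1^{L}(\br),\, \cV_1^{U}(\br)]}(\hbjplus+\hbjminus) \,\big|\, \br,\{\hat{\cS}=\cS,\hat{\bs}=\bs\} \sim \text{Unif}(0,1),
\]
and analogously for $\hbjplus-\hbjminus$. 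Since the conditional distribution does not depend on $\br$, integrating out $\br$ preserves uniformity, yielding the stated unconditional uniform marginals.

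The main obstacle, and the place where I would be most careful, is justifying that the mean of $(\hbjplus,\hbjminus)$ is exactly zero for a null index $j\in\cS_0$. This is where the hypothesis $\cS_1\subseteq\cS$ is essential: without it, Lasso could have missed some active variables and the OLS on the subset $\vX_\cS^j$ would suffer omitted-variable bias, putting a nonzero mean on $\hbjplus+\hbjminus$ that would destroy the uniform-on-$(0,1)$ conclusion. A secondary subtlety is ensuring that the variance parameter used in $F_{0,\sigma^2}^{[\cdot,\cdot]}$ really matches $\Var(\hbjplus\pm\hbjminus) = \sigma^2\|\veta_1\pm\veta_2\|^2$, so the $\sigma^2$ appearing in the theorem should be interpreted as this per-coordinate variance rather than the noise variance; this is a notational point but needs to be stated cleanly. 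Everything else reduces to linear algebra from Lemmas~\ref{lem:constraints} and \ref{le:exact:constraint}.
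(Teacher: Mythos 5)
Your proposal is correct and follows essentially the same route as the paper's own (very terse) proof: invoke Lemma~\ref{le:exact:constraint} to reduce the selection event to a rectangle in the $(\hbjplus+\hbjminus,\hbjplus-\hbjminus)$ coordinates with truncation limits depending only on $\br$, use the independence of $\br$ from $\veta^\top\by$ together with the zero covariance $\sigma^2(\veta_1+\veta_2)^\top(\veta_1-\veta_2)=0$, and apply the probability integral transform to each truncated normal. Your write-up is in fact more careful than the paper's on two points it leaves implicit --- the role of $\cS_1\subseteq\cS$ in guaranteeing zero means for $j\in\cS_0$, and the fact that the variance parameter in $F_{0,\sigma^2}^{[\cdot,\cdot]}$ must be read as $\sigma^2\|\veta_1\pm\veta_2\|^2$ rather than the raw noise variance --- both of which are correct observations.
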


We define the mirror statistic $M_j$ as 
\begin{equation}\label{eq:hdmirror}
M_j = |\sigma\Phi^{-1}F_{0,\sigma^2}^{[\cV_1^{-}(\br), \cV_1^{+}(\br)]} ( \hbjplus + \hbjminus )| -  |\sigma\Phi^{-1}F_{0,\sigma^2}^{[\cV_0^{-}(\br), \cV_0^{+}(\br)]} ( \hbjplus - \hbjminus  )|. 
\end{equation}
For $j\in \cS_0$, $M_j$ is symmetrically distributed about zero. When $j\in \cS_1$, as shown in \cite{lee2016exact}, the truncation points are different from zero. When the magnitude of $\beta_j$ is large, the truncated distribution function is close to the non-truncated version, %
indicating that the $M_j$ defined in (\ref{eq:hdmirror}) is close to $|\hbjplus + \hbjminus|-|\hbjplus - \hbjminus|$. In practice, the mirror statistics tend to be conservative since we use the cumulative distribution function of a normal random variable with mean zero for both null and non-null variables. For a non-null variable, the means for $\hbjplus+\hbjminus$ and $\hbjplus-\hbjminus$ before the truncation are $\veta_1^\top \mu$ and $\veta_2^\top \mu$, respectively, where $\mu=\bbE\by$. Numerically, we find that replacing $\mu$ by a reasonable estimate, such as $\vX_{\cS}\tilde{\beta}$, helps increase the power without losing  FDR control. 

We follow the selected model framework in \cite{tibshirani2013lasso} and \cite{fithian2014optimal} by assuming that   $\{ \cS_1\subset \widehat{\cS}\}$.
Such a property can be guaranteed by the $L^q$-consistency of the Lasso procedure under certain conditions
 \citep{candes2009near,knight2000asymptotics, zhao2006model, van2008high,zhang2008sparsity,meinshausen2006high,meinshausen2009lasso}. 
 Assumption \ref{lasso:a1} below
 is from 
 \cite{buhlmann2011statistics}, which  guarantees the $L^q$-consistency. %

\begin{assumption}\label{lasso:a1} 
\begin{enumerate}[label=(\alph*)]
\item \label{lasso:a1:a} (Compatibility Condition): Let $\phi_0>0$ be a constant. 
For a $p\times 1$ vector $\alpha$ satisfying $||\alpha_{\mathcal{S}_0}||_1 \leq c_0||\alpha_{\mathcal{S}_1}||_1$, we assume  
\begin{equation*}
    ||\alpha_{\mathcal{S}_1} ||_1^2 \leq \frac{p_1}{\phi_0^2} \alpha^T \vX^T \vX \alpha,
\end{equation*}
where the entry $\alpha_i\in \alpha_{\mathcal{S}_0}$ if $i \in \mathcal{S}_0$, and $\alpha_{\mathcal{S}_1}=\alpha\setminus\alpha_{\mathcal{S}_0}$, $c_0$ is a constant depending on the choice of $\lambda_n$, and $\phi_0$ is the compatibility constant. 

\item \label{lasso:a1:b} (Signal Strength): $\min_{j\in \mathcal{S}_1} |\beta_j| >  \frac{64\sigma s_1 }{\phi_0^2} \sqrt{\frac{\log(p)}{n}}$.%
\end{enumerate}
\end{assumption}

\begin{lemma}\label{le:lasso:consist}
 Suppose Assumption \ref{lasso:a1} holds. Consider Lasso with regularization parameter $\lambda_n = 4 \sigma \sqrt{\log(p)/n}$. Then, we have  
$$
P( \mathcal{S}_1 \subset \hat{\mathcal{S}} )\geq 1-\frac{2}{p},
$$
where $\hat{\mathcal{S}}$ is the index set of the nonzero entries in the Lasso estimator $\tilde{\bbeta}$ in  (\ref{eq:org:lasso}). 
\end{lemma}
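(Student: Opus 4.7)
\textbf{Proof proposal for Lemma \ref{le:lasso:consist}.} The plan is the standard two-step Lasso screening argument: first establish a high-probability event on which the noise is well controlled, then use the compatibility condition to convert this into an $\ell_1$ (hence $\ell_\infty$) estimation bound, and finally combine with the signal-strength assumption to ensure that no true signal is shrunk to zero.

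First, I would define the ``noise-control'' event
\[
\mathcal{T} = \Bigl\{\tfrac{2}{n}\|\vX^\top \beps\|_\infty \le \lambda_n\Bigr\}.
\]
Since the columns are normalized with $\|\vx_j\|_2^2 = n$ and $\beps \sim N(0,\sigma^2 \vI_n)$, each $\vx_j^\top \beps$ is centered Gaussian with variance $n\sigma^2$. A standard Gaussian tail bound combined with a union bound over $j=1,\dots,p$ gives
\[
P\Bigl(\|\vX^\top\beps\|_\infty > 2\sigma\sqrt{n\log p}\Bigr) \le 2\exp\!\bigl(-\log p\bigr)\cdot p \cdot \tfrac{1}{p} \;\le\; \tfrac{2}{p}.
\]
With the choice $\lambda_n = 4\sigma\sqrt{\log(p)/n}$, the threshold $2\sigma\sqrt{n\log p}$ equals $n\lambda_n/2$, so $P(\mathcal{T}) \ge 1 - 2/p$.

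Next, on the event $\mathcal{T}$, I would invoke the classical oracle inequality for Lasso under the compatibility condition (e.g.\ Theorem 6.1 of \cite{buhlmann2011statistics}). The KKT conditions for (\ref{eq:org:lasso}) yield
\[
\tfrac{1}{n}\|\vX(\tilde{\bbeta}-\bbeta)\|_2^2 + \lambda_n \|\tilde{\bbeta}-\bbeta\|_1 \;\le\; \tfrac{2}{n}\bigl|\beps^\top \vX(\tilde{\bbeta}-\bbeta)\bigr| + 2\lambda_n \|(\tilde{\bbeta}-\bbeta)_{\mathcal{S}_1}\|_1.
\]
On $\mathcal{T}$, the first right-hand term is bounded by $\lambda_n \|\tilde{\bbeta}-\bbeta\|_1$, forcing $\tilde{\bbeta}-\bbeta$ into the cone $\|\alpha_{\mathcal{S}_0}\|_1 \le c_0 \|\alpha_{\mathcal{S}_1}\|_1$ required by Assumption \ref{lasso:a1}\ref{lasso:a1:a}. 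Applying the compatibility bound to this difference then yields the standard inequality
\[
\|\tilde{\bbeta}-\bbeta\|_1 \;\le\; \frac{16\, p_1\, \lambda_n}{\phi_0^2} \;=\; \frac{64\,\sigma\, p_1}{\phi_0^2}\sqrt{\frac{\log p}{n}}.
\]

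Finally I would close the argument coordinate-wise. Since $|\tilde{\beta}_j - \beta_j| \le \|\tilde{\bbeta}-\bbeta\|_1$ for every $j$, on $\mathcal{T}$ Assumption \ref{lasso:a1}\ref{lasso:a1:b} gives, for every $j \in \mathcal{S}_1$,
\[
|\tilde{\beta}_j| \;\ge\; |\beta_j| - |\tilde{\beta}_j - \beta_j| \;>\; \frac{64\sigma p_1}{\phi_0^2}\sqrt{\frac{\log p}{n}} - \frac{64\sigma p_1}{\phi_0^2}\sqrt{\frac{\log p}{n}} \;=\; 0,
\]
so $j \in \hat{\mathcal{S}}$. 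Hence $\{\mathcal{S}_1 \subset \hat{\mathcal{S}}\}$ contains the event $\mathcal{T}$, and the conclusion follows from $P(\mathcal{T}) \ge 1 - 2/p$.

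The main obstacle is largely bookkeeping: the exact constant $64$ needs to match Assumption \ref{lasso:a1}\ref{lasso:a1:b}, which in turn requires being careful that the oracle inequality is stated with the same normalization ($\|\vx_j\|_2^2 = n$) used throughout the paper. Everything else is essentially the textbook argument, and the symmetric property established in Lemma \ref{lem:constraints} and Theorem \ref{thm:trun:dist} plays no role here — this is purely a support-screening statement for Lasso.
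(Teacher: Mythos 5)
Your proposal is correct and follows exactly the route the paper intends: the paper gives no written-out proof beyond the one-line remark that the lemma ``follows directly from the $L_1$ convergence result in B\"uhlmann and van de Geer, Theorem 6.1, and the signal strength condition,'' and your argument is precisely that citation unpacked — noise event of probability $1-2/p$, compatibility-based $\ell_1$ oracle bound $\|\tilde{\bbeta}-\bbeta\|_1 \le 64\sigma p_1\sqrt{\log p/n}/\phi_0^2$, then the beta-min condition coordinate-wise. The constants line up with Assumption \ref{lasso:a1}\ref{lasso:a1:b} as you note, so there is nothing to add.
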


Lemma \ref{le:lasso:consist} follows directly from the $L_1$ %
convergence result in \cite{buhlmann2011statistics}, Theorem 6.1, and the signal strength condition in 
Assumption \ref{lasso:a1}\ref{lasso:a1:b}. Based on this lemma, event
   $ \{\mathcal{S}_1 \subset \widehat{\mathcal{S}}\} $ 
holds with high probability, implying that the Lasso estimator selects all variables in $\cS_1$. 
The mirror statistic $M_j$ %
further helps to pick out falsely included null variables.

\begin{algorithm}[h!]
\begin{enumerate}

\item Fit Lasso with respect to the original design matrix $\vX$, i.e.,
 \begin{equation*}
 \tilde{\bbeta}=\argmin_{\bbeta}||\vy - \vX \bbeta||_2^2 + \lambda_n || \bbeta ||_1
 \end{equation*}
by cross validation.

\item Parallel FOR $j\in 1,\dots, \hat{\cS}$:
\begin{enumerate}

\item Generate $\vz_j$ from Gaussian distribution with mean zero and covariance matrix $\vI_n$. 

\item Calculate debiased $c_j$ via (\ref{eq:lassocj}).

\item Calculate the mirror statistic
\begin{equation*}
    M_j = |\sigma\Phi^{-1}F_{0,\sigma^2}^{[\cV_1^{-}(\br), \cV_1^{+}(\br)]} ( \hbjplus + \hbjminus )| -  |\sigma\Phi^{-1}F_{0,\sigma^2}^{[\cV_0^{-}(\br), \cV_0^{+}(\br)]} ( \hbjplus - \hbjminus  )|,
\end{equation*}
where $\hbjplus$ and $\hbjminus$ are defined in (\ref{eq:post-beta}) and $\cV_0^{+}, \cV_0^{-}, \cV_1^{+}, \cV_1^{-}$ are defined in (\ref{eq:v}).

\end{enumerate}
END parallel FOR loop.

\item Calculate the cutoff to control FDR at target $q$,
\begin{equation}\label{eq:tauhd}
    \tau_q= \min_t\big\{\frac{\#\{j \;|\;   M_j \leq -t \}}{\#\{j \;|\; M_j\geq t\}  \vee 1} \leq q \big\}.
\end{equation}
\item Output the index of the selected variables: $\hat{\mathcal{S}}_1 = \{j \,|\, M_j \ge \tau_q\}$.
 
\end{enumerate}
\caption{Gaussian mirror algorithm for Lasso.}\label{alg:gm:lasso}

\end{algorithm}

\begin{theorem}\label{thm:coinflip2lassofinite}
Consider the GM design in Definition \ref{gm:lasso} for linear models with the Lasso estimates.
Let $M_j$ be the mirror statistic defined in  (\ref{eq:hdmirror}) and computed using Algorithm \ref{alg:gm:lasso}.  We have 
\begin{equation*}
P(M_j \le -t\;|\;\vz_j) = P(M_j \ge t\;|\;\vz_j) , \,\, \forall\, t>0,
\end{equation*}
for $j\in \mathcal{S}_0$ with probability $1-2/p$.
\end{theorem}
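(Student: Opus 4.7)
The plan is to condition on the high-probability event $\mathcal{E}=\{\mathcal{S}_1\subset \hat{\mathcal{S}}\}$ supplied by Lemma \ref{le:lasso:consist} (which gives $P(\mathcal{E})\geq 1-2/p$), reduce to the conditional distribution of the two arguments of $M_j$ given the selection event and $\br$, and then exploit an exchangeability argument. Throughout, I would note that $\vz_j\sim N(0,\vI_n)$ is drawn independently of $\by$ and $\vX$, so conditioning on $\vz_j$ leaves the Lasso selection $\{\hat{\mathcal{S}}=\cS,\hat{\bs}=\bs\}$ and the distribution of $\by$ unchanged.

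First, I would dispose of the trivial case: if $j\in \mathcal{S}_0\setminus \hat{\mathcal{S}}$, no mirror statistic is computed (see Algorithm \ref{alg:gm:lasso}), and the symmetry holds trivially. So assume $j\in \mathcal{S}_0\cap \hat{\mathcal{S}}$. On $\mathcal{E}$, for such $j$, the hypotheses $\cS_1\subseteq \cS$ and $j\in\cS_0$ of Theorem \ref{thm:trun:dist} are met. Condition on the full selection event $\{\hat{\mathcal{S}}=\cS,\hat{\bs}=\bs\}$ and on the residual $\br=(\vI_n-P_\veta)\by$. By Theorem \ref{thm:trun:dist},
\[
A := \sigma\Phi^{-1}F_{0,\sigma^{2}}^{[\cV_1^{L}(\br),\cV_1^{U}(\br)]}(\hbjplus+\hbjminus), \qquad B := \sigma\Phi^{-1}F_{0,\sigma^{2}}^{[\cV_0^{L}(\br),\cV_0^{U}(\br)]}(\hbjplus-\hbjminus)
\]
are each marginally $N(0,\sigma^{2})$ distributed, since $\sigma\Phi^{-1}$ transports a Uniform$(0,1)$ into $N(0,\sigma^{2})$.

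Next, I would verify the conditional independence of $A$ and $B$. By the remark preceding Theorem \ref{thm:trun:dist}, $\Cov(\hbjplus+\hbjminus,\hbjplus-\hbjminus)=\sigma^{2}(\veta_1+\veta_2)^\top(\veta_1-\veta_2)=0$, and the pair is jointly Gaussian and independent of $\br$ (since $\veta^\top\by$ and $\br$ are uncorrelated jointly normal). Moreover, by Lemma \ref{le:exact:constraint}, the truncation limits $\cV_1^{L,U}(\br)$ depend only on $\br$ and act on $\hbjplus+\hbjminus$, while $\cV_0^{L,U}(\br)$ depend only on $\br$ and act on $\hbjplus-\hbjminus$. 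Hence after truncation the two components remain independent, so $A$ and $B$ are conditionally i.i.d.\ $N(0,\sigma^{2})$ given $(\hat{\mathcal{S}},\hat{\bs},\br,\vz_j)$. Since $M_j=|A|-|B|$ and $(A,B)\stackrel{d}{=}(B,A)$, we get $M_j\stackrel{d}{=}-M_j$ conditionally, i.e.,
\[
P(M_j\geq t\mid \hat{\mathcal{S}},\hat{\bs},\br,\vz_j)=P(M_j\leq -t\mid \hat{\mathcal{S}},\hat{\bs},\br,\vz_j), \quad \forall t>0.
\]

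Finally, I would marginalize over $(\hat{\mathcal{S}},\hat{\bs},\br)$ restricted to $\mathcal{E}$. Because the conditional symmetry holds for every $(\cS,\bs)$ with $\cS\supseteq \mathcal{S}_1$ and for every value of $\br$, integrating yields $P(M_j\geq t\mid \vz_j)=P(M_j\leq -t\mid \vz_j)$ on $\mathcal{E}$; combined with the trivial case this establishes the claim on an event of probability at least $1-2/p$. The main obstacle I anticipate is bookkeeping the several layers of conditioning correctly, in particular confirming that the truncation directions in Lemma \ref{le:exact:constraint} are exactly orthogonal so that truncating one coordinate does not distort the other, and that the dependence of $\veta_1,\veta_2$ on $\vz_j$ and $c_j$ in (\ref{eq:lassocj}) does not couple $A$ and $B$ beyond what Theorem \ref{thm:trun:dist} already accounts for.
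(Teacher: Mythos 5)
Your proposal is correct and follows essentially the same route as the paper, which treats this theorem as a direct consequence of Lemma \ref{le:lasso:consist} (supplying the event $\{\mathcal{S}_1\subset\hat{\mathcal{S}}\}$ of probability at least $1-2/p$) together with Theorem \ref{thm:trun:dist} and the observation that $\Cov(\hbjplus+\hbjminus,\hbjplus-\hbjminus)=0$ with orthogonal truncation directions, making the two transformed coordinates conditionally i.i.d.\ $N(0,\sigma^2)$ so that $M_j=|A|-|B|$ is symmetric. The only difference is that you spell out the marginalization over $(\hat{\mathcal{S}},\hat{\bs},\br)$ and the trivial case $j\notin\hat{\mathcal{S}}$ explicitly, which the paper leaves implicit.
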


Based on Theorem \ref{thm:coinflip2lassofinite}, 
we use $\#\{j \,|\,M_j \leq  -t \}$ as an (over)-estimate of the number of false positive set, and define an estimate of FDP as 
\begin{equation*}
    \fdphat(t)  \triangleq  \frac{\#\{j \,|\, M_j \leq -t ,  j\in \hat{\cS}_1\}}{\#\{j \,|\, M_j\geq t,  j\in \hat{\cS}_1\}   \vee 1}.
\end{equation*}
Algorithm \ref{alg:gm:lasso} details  the GM procedure for variable selection with FDR control in  high-dimensional regression models. %
Similar to the low-dimensional case  in Section \ref{subsec:ols}, the mirror statistics $M_j$'s can be computed efficiently using parallel computing. %

\section{Theoretical Properties of Gaussian Mirrors}\label{sec:FDR:theorem}
Section \ref{sec:method} introduces the Gaussian mirror method for estimating the FDP. 
We abuse the notation and  use $M_j$ to indicate %
both the one defined in (\ref{eq:mj}) and the one defined in (\ref{eq:hdmirror}).
The key is to design the Gaussian mirror appropriately such that for $j\in \mathcal{S}_0$, %
$
P(M_j\le -t\;|\; \vz_j)=P(M_j\ge t\; | \; \vz_j), \; \forall t>0.
$ 
When we select the $j$-th variable if $M_j\ge t$,  the FDP satisfies
\begin{equation}\label{eqn:fdp1}
\frac{\#\{j\in \mathcal{S}_0: M_j\geq t \}}{\#\{j: M_j\geq t\}	\vee 1} \approx \frac{\#\{j\in \mathcal{S}_0: M_j\leq -t \}}{\#\{j: M_j\geq t\}\vee 1} \leq \frac{\#\{j: M_j\leq -t \}}{\#\{j: M_j\geq t\}	\vee 1}, 
\end{equation}
where the last term is  $\widehat{FDP}(t)$, an (over-)estimate of the FDP defined in (\ref{eq:fdphat}). As shown in Algorithms \ref{alg:gm:ols} and \ref{alg:gm:lasso}, a data-driven threshold $\tau_q>0$ is chosen in (\ref{eq:tau}) as the smallest value such that $\widehat{FDP}(\tau_q)\le q$.  Intuitively, for (\ref{eqn:fdp1}) to hold stably, we need the following assumptions.
\begin{assumption}\label{assump}
(a) (The low-dimensional case): There exists a constant $c\in(0,1)$ such that $p=[cn]$. 
Let $\Omega^0 = (X_{[\cS_0]}^T X_{[\cS_0]})^{-1}$. Then,
\begin{equation*}
\sum_{j,k\in \cS_0} \frac{|\Omega^0_{jk}|} {\sqrt{\Omega^0_{jj}}\sqrt{\Omega^0_{kk}}} < C_1 p_0^\alpha 
\end{equation*}
holds for a constant $C_1>0$ and $\alpha\in(0,2)$.

(b) (The high-dimensional case):  Let $\Sigma$ be the covariance matrix of $X$, $p = O(n^\omega )$ with $\omega >1$, and $p_1= o(n)$. Then $\Sigma$ satisfies the regularity condition
$    1/C_2 \leq \lambda_{\min}(\Sigma) \leq \lambda_{\max}(\Sigma) \leq C_2,
$
where $X_2>0$ is a constant.
\end{assumption}

Assumption \ref{assump}(a) requires that the sum of pairwise partial correlations of the design matrix is bounded by $C_1 p_0^\alpha$, where $\alpha\in(0,2)$. This requirement is satisfied by many designs such as the autoregressive and the constant-correlation design. 
For the high-dimensional case, we consider a bounded eigenvalue condition in Assumption 2(b), which is commonly used as a regularization condition in high-dimensional analysis \citep{cai2017confidence}. Additionally, assume that $p_1=o(n)$. 
The following two lemmas characterize the sum of pairwise covariances of the mirror statistics, which is the key to establish asymptotic FDR control for GM. 
In the low-dimensional setting, the sum is taken over the null set $\cS_0$, whereas in 
the high-dimensional setting, the sum is taken over the selection set $\hat{\cS}$ since the mirror statistics are derived from the post-selection Lasso.

\begin{lemma}\label{lemma:suff:ols}
Consider the low-dimensional setting in which the mirror statistics $M_j$'s are defined as in (\ref{eq:mj}). If Assumption~\ref{assump}(a) holds, then 
\begin{eqnarray}\label{def:weak:dep}
&& \sum_{j,k\in \mathcal{S}_0}\cov\big( \mathbbm{1}(M_j\ge t), \mathbbm{1}(M_k\ge t) \big) \le C_1^\prime |\mathcal{S}_0|^{\alpha_1}, \; \forall \;  t,\nonumber
\end{eqnarray}
where $\alpha_1\in(0,2)$, $C_1^\prime>0$ is a constant, and  $\mathbbm{1}(\cdot)$ is the indicator function.
\end{lemma}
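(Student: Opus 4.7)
The strategy is to establish a pointwise bound of the form
\begin{equation*}
|\cov(\mathbbm{1}(M_j\ge t),\mathbbm{1}(M_k\ge t))| \;\le\; C_t\cdot \frac{|\Omega^0_{jk}|}{\sqrt{\Omega^0_{jj}\Omega^0_{kk}}},\qquad j,k\in\cS_0,
\end{equation*}
and then to sum over pairs and invoke Assumption~\ref{assump}(a), which yields the claim with $\alpha_1=\alpha$ and $C_1'=C_tC_1$. The argument splits cleanly into a Gaussian-comparison step and a design-side step.

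For the Gaussian-comparison step, Lemma~\ref{lemma:olscor} makes $(\hbjplus,\hbjminus)$, conditional on $\vz_j$, a zero-mean bivariate Gaussian whose components are uncorrelated and hence independent, with equal variance. Thus $M_j=|\hbjplus+\hbjminus|-|\hbjplus-\hbjminus|$ is a bounded measurable functional of a pair of independent centered Gaussians. For a second null index $k$, the four-vector $(\hbjplus,\hbjminus,\hat\beta_k^+,\hat\beta_k^-)$ is, conditional on $(\vz_j,\vz_k)$, jointly Gaussian with block-diagonal within-pair marginal covariance and some cross block $\boldsymbol{C}_{jk}$. A standard Gaussian-interpolation argument for covariances of bounded functions of Gaussian vectors (interpolating along a smooth path from the product marginal to the joint law and differentiating in the correlation) then gives
\begin{equation*}
|\cov(\mathbbm{1}(M_j\ge t),\mathbbm{1}(M_k\ge t)\mid \vz_j,\vz_k)| \;\le\; C_t\,\|\boldsymbol{C}_{jk}\|_{\mathrm{op}},
\end{equation*}
for a constant $C_t$ depending on $t$ and the common within-pair variance but not on $j,k,n$.

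For the design-side step, I would use Frisch--Waugh--Lovell to write $\hbjplus=\vh_j^{+\top}\vy$ and $\hbjminus=\vh_j^{-\top}\vy$ as explicit linear functionals of $\vy$, with $\vh_j^{\pm}$ lying in the span of $\vx_j$ and $P_{-j}\vz_j$, where $P_{-j}=\vI_n-\vX_{-j}(\vX_{-j}^\top\vX_{-j})^{-1}\vX_{-j}^\top$, and weights determined by (\ref{eq:olscj}). Expanding $\sigma^2\vh_j^{\pm\top}\vh_k^{\pm}$ and using concentration of the quadratic forms $\vz_j^\top P_{-j}\vz_j$ and the cross form $\vx_j^\top P_{-j}\vz_j$ shows that each entry of $\boldsymbol{C}_{jk}$ is comparable to the $(j,k)$-entry of the principal $\cS_0$-block of $\sigma^2(\vX^\top\vX)^{-1}$, up to a $\vz$-dependent correction that is lower order uniformly in $j,k$. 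A Schur-complement computation then relates this block to $\sigma^2\Omega^0$, after which $\|\boldsymbol{C}_{jk}\|_{\mathrm{op}}$ is comparable to $|\Omega^0_{jk}|/\sqrt{\Omega^0_{jj}\Omega^0_{kk}}$ up to a universal constant.

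The main obstacle I expect is the design-side step: one must cleanly align $\boldsymbol{C}_{jk}$ with $\Omega^0$ uniformly over all $O(p_0^2)$ pairs, handling the $\vz$-perturbation by a union bound whose failure probability does not inflate the final $p_0^{\alpha_1}$ bound. The within-pair decoupling engineered by Lemma~\ref{lemma:olscor}, together with the explicit formula (\ref{eq:olscj}) for $c_j$, are what make this tractable; once the uniform bound is in hand, integrating over $(\vz_j,\vz_k)$ and summing over $(j,k)\in\cS_0\times\cS_0$ yields
\begin{equation*}
\sum_{j,k\in\cS_0}\!|\cov(\mathbbm{1}(M_j\ge t),\mathbbm{1}(M_k\ge t))| \;\le\; C_tC'\!\!\sum_{j,k\in\cS_0}\!\frac{|\Omega^0_{jk}|}{\sqrt{\Omega^0_{jj}\Omega^0_{kk}}} \;\le\; C_1'\,p_0^{\alpha_1},
\end{equation*}
where the last inequality is Assumption~\ref{assump}(a) with $\alpha_1=\alpha$.
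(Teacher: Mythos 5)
Your overall architecture---reduce the covariance of the indicators to a correlation bound for the underlying Gaussians, then sum and invoke Assumption~\ref{assump}(a)---is in the spirit of the paper's proof, but the single pointwise bound you build everything on is not correct, and the paper's proof is structured precisely to avoid it. The cross-covariance block $\boldsymbol{C}_{jk}$ of $(\hbjplus,\hbjminus,\hat{\beta}_k^+,\hat{\beta}_k^-)$ contains not only the $\eta$--$\eta$ entry (with $\eta_j=\hbjplus+\hbjminus$), which is indeed comparable to $\Omega^0_{jk}$, but also the $\theta$--$\theta$ entry (with $\theta_j=\hbjplus-\hbjminus$), which is driven by inner products of the projected perturbations, roughly $\vz_j^\top(\vI_n-P)\vz_k$, and is not controlled by $\Omega^0_{jk}$ at all: for an exactly orthogonal design one has $\Omega^0_{jk}=0$ for $j\ne k$, yet $\theta_j$ and $\theta_k$ remain correlated at order $(n-p)^{-1/2}$, so $\cov(\mathbbm{1}(M_j\ge t),\mathbbm{1}(M_k\ge t))\ne 0$ while your claimed upper bound is zero. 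Calling the $\vz$-part a ``lower order correction'' that gets absorbed does not survive the summation either: there are $O(p_0^2)$ pairs, so a per-pair remainder of size $(n-p)^{-1/2}$ contributes $O(p_0^2(n-p)^{-1/2})$, which must be shown to be $O(p_0^{\alpha_1})$ with $\alpha_1<2$ by a separate quantitative argument rather than folded into Assumption~\ref{assump}(a). This is exactly why the paper conditions on $(\theta_j,\theta_k)$ and splits the sum into $I_1+I_2$: the term $I_1$ isolates the $\eta$--$\eta$ dependence and is bounded via the maximal-correlation (Gebelein) inequality---for jointly Gaussian pairs the maximal correlation equals $|\rho|$---yielding $\sum|\rho(\eta_j,\eta_k)|\le\sum|\Omega^0_{jk}|/\sqrt{\Omega^0_{jj}\Omega^0_{kk}}$, while $I_2$ isolates the $\theta$--$\theta$ dependence and is annihilated for most pairs by taking the perturbation vectors in a block-orthogonal design, leaving only $O(p^2/(n-p))$ nonzero terms, which is $O(p_0)$ when $p=[cn]$.

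Two further technical points. First, a ``standard Gaussian-interpolation argument'' does not apply directly to indicator functions: the interpolation identity requires two derivatives of the test functions, and for indicators of sets such as $\{|u+v|-|u-v|>t\}$ the resulting boundary terms need separate control; the maximal-correlation route sidesteps this and gives a bound with a universal constant, free of any dependence on $t$. Second, since $P(M_j\ge t\mid\vz_j,\vz_k)$ depends only on $\vz_j$ and the $\vz_j$'s are independent, the unconditional covariance equals $\EE[\cov(\,\cdot\,,\,\cdot\mid\vz_j,\vz_k)]$; any bound must therefore be integrated over the law of $(\vz_j,\vz_k)$, and a union bound over the $O(p_0^2)$ pairs controls only the maximum of the cross terms $\vz_j^\top(\cdot)\vz_k$, not their aggregate contribution to this expectation.
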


\begin{lemma}\label{lemma:suff:lasso}
Consider the high-dimensional setting, in which the  mirror statistics $M_j$'s are defined as in (\ref{eq:hdmirror}). If Assumptions~\ref{lasso:a1}  and \ref{assump}(b) hold, then  
\begin{eqnarray}\label{def:weak:dep}
&& \sum_{j,k\in\widehat{\mathcal{S}}\cap\mathcal{S}_0}\cov\big( \mathbbm{1}(M_j\ge t), \mathbbm{1}(M_k\ge t) \big) \le C_2^\prime |\widehat{\mathcal{S}}\cap\mathcal{S}_0|^{\alpha_2}, \; \forall \;  t>0, \nonumber
\end{eqnarray}
where $\alpha_2\in(0,2)$, $C_2^\prime>0$ is a constant, and $\mathbbm{1}(\cdot)$ is the indicator function.
\end{lemma}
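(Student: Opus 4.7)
The plan is to reduce the sum of pairwise indicator covariances to a sum of entries of the post-selection precision matrix $(\vX_{\hat{\cS}}^\top \vX_{\hat{\cS}})^{-1}$, and then bound that sum by $O(|\hat{\cS}\cap\cS_0|^{3/2})$ via Assumption~\ref{assump}(b) combined with standard random-matrix concentration. This will deliver the claim with $\alpha_2=3/2$.

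First I would condition on the good event $\mathcal{E}=\{\cS_1\subset\hat{\cS}\}$, which by Lemma~\ref{le:lasso:consist} has probability at least $1-2/p$; on $\mathcal{E}^c$ every covariance is bounded by $1/4$ and the total contribution is $O(|\hat{\cS}\cap\cS_0|^2/p)$, negligible relative to the target rate since $p=O(n^\omega)$ with $\omega>1$ and $|\cS|=o(n)$. Further conditioning on $\{\hat{\cS}=\cS\}$ for an arbitrary $\cS\supset\cS_1$, I rewrite the mirror statistic as $M_j=|U_j|-|V_j|$, where $U_j=\sigma\Phi^{-1}F_{0,\sigma^2}^{[\cV_1^-(\br_j),\cV_1^+(\br_j)]}(\hbjplus+\hbjminus)$ and $V_j$ is the analogous transform of $\hbjplus-\hbjminus$. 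Theorem~\ref{thm:trun:dist} guarantees that for $j\in\cS\cap\cS_0$ both $U_j$ and $V_j$ are marginally $N(0,\sigma^2)$ and independent of one another. Using either a Hermite (Mehler) expansion of $\mathbbm{1}(\cdot\ge t)$ or a Gaussian correlation comparison inequality, the indicator covariance is bounded (up to a factor depending only on $t,\sigma$) by the sum of the four pairwise correlations among $U_j,V_j,U_k,V_k$.

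Second, I would express those correlations in terms of $\Omega=(\vX_\cS^\top\vX_\cS)^{-1}$. From (\ref{eq:post-beta}) we have $\hbjplus=\veta_1^\top\by$ and $\hbjminus=\veta_2^\top\by$, where $\veta_1,\veta_2$ are the first two rows of $((\vX_\cS^j)^\top\vX_\cS^j)^{-1}(\vX_\cS^j)^\top$; a Schur-complement calculation together with the formula (\ref{eq:lassocj}) for $c_j$ shows that the correlations among the $U$'s and $V$'s are, up to the bounded scalars $c_jc_k$, exactly entries of $\Omega$ between the indices associated with $j$ and $k$. Assumption~\ref{assump}(b) gives $\lambda_{\min}(\Sigma)\ge 1/C_2$, and since $|\cS|=o(n)$ a standard sub-Gaussian design concentration bound yields $\lambda_{\min}(\vX_\cS^\top\vX_\cS/n)\ge 1/(2C_2)$ with probability $1-o(1)$, so $\|n\Omega\|_{\mathrm{F}}\le 2C_2\sqrt{|\cS|}$. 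By Cauchy--Schwarz,
\[
\sum_{j,k\in\cS\cap\cS_0}|(n\Omega)_{jk}| \;\le\; |\cS\cap\cS_0|\cdot\|n\Omega\|_{\mathrm{F}} \;=\; O\!\left(|\cS\cap\cS_0|^{3/2}\right),
\]
which, combined with the reduction above, completes the argument.

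The main obstacle is the first step: the truncation window $[\cV_\ell^-(\br_j),\cV_\ell^+(\br_j)]$ depends on the residual $\br_j=(\vI_n-P_{\veta_j})\by$, which varies with $j$ and is correlated with both $\br_k$ and $\veta_k^\top\by$. Consequently the quadruple $(U_j,V_j,U_k,V_k)$ is \emph{not} jointly Gaussian, and the passage from indicator covariance to Gaussian correlation is not automatic. A clean way forward is to first condition on a common orthogonal residual subspace containing the $\br_j$'s (inside which each $U_j,V_j$ is a deterministic monotone function of a single Gaussian coordinate), apply a Gaussian comparison inequality on the remaining low-dimensional normal component, and then integrate back over the residual. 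This delicate truncation bookkeeping is precisely what distinguishes the high-dimensional lemma from its OLS counterpart (Lemma~\ref{lemma:suff:ols}), whose proof is a direct calculation on an untruncated multivariate normal.
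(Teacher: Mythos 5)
Your overall architecture matches the paper's: both reduce the indicator covariances to (normalized) entries of the post-selection precision matrix and then bound $\sum_{j,k}|\Omega_{jk}|/\sqrt{\Omega_{jj}\Omega_{kk}}$ by $O_p(|\widehat{\cS}\cap\cS_0|^{3/2})$ via Cauchy--Schwarz together with sparse-eigenvalue control (the paper additionally invokes Lemma 5 of Meinshausen and Yu to get $|\widehat{\cS}|=O(|\cS_1|)$, which you implicitly assume when writing $|\cS|=o(n)$). The difference --- and the place where your write-up stops short of a proof --- is the reduction step. You propose a Hermite/Mehler expansion or a Gaussian comparison inequality, and you correctly observe that this fails as stated because the truncation maps $F_{0,\sigma^2}^{[\cV_\ell^-(\br),\cV_\ell^+(\br)]}$ destroy joint Gaussianity of the quadruple $(U_j,V_j,U_k,V_k)$; you leave the repair as a sketch. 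The paper sidesteps this obstacle entirely by conditioning on $(\theta_j,\theta_k)$ and bounding the conditional covariance of the two indicators by the \emph{maximal} correlation $\rho_{\max}(\eta_j,\eta_k)$: since maximal correlation is invariant under injective marginal transforms, the truncation transform is harmless, and Gebelein's theorem then collapses $\rho_{\max}$ to the ordinary correlation $|\rho(\eta_j,\eta_k)|$, which is a precision-matrix entry. This is the idea your proposal is missing, and it is considerably cleaner than the residual-subspace conditioning you suggest.

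A second, smaller omission: the law-of-total-covariance decomposition produces a cross term $I_2=\sum_{j,k}\cov\bigl(P(|\tilde\eta_j|-|\tilde\theta_j|>t\mid\theta_j,\theta_k),\,P(|\tilde\eta_k|-|\tilde\theta_k|>t\mid\theta_j,\theta_k)\bigr)$, which your proposal does not account for. The paper kills it by noting that $|\widehat{\cS}|=o(n)$ leaves enough room to choose the perturbations $\tilde{\vz}_j$ mutually orthogonal (and orthogonal to the column space of $\vX_{\widehat\cS}$), which makes the $\theta_j$'s independent across $j$ and forces $I_2=0$. Without some such argument, the conditional-mean term is not controlled by precision-matrix entries and your bound does not close.
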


Next, we derive some asymptotic properties of the GM method.   For the low-dimensional case, we define a few quantities:
\begin{gather*}
V(t) =\frac{\#\{ j:  j\in \mathcal{S}_0, M_j \le -t \}}{p_0} ,\, V'(t) =\frac{\#\{ j:  j\in \mathcal{S}_0, M_j \ge t \} }{p_0},\,
V^{1}(t) =\frac{\#\{ j:  j\in \mathcal{S}_1, M_j \ge t \}}{p_1} , \\
G_0(t) = \lim_p \frac{\sum_{j\in\mathcal{S}_0} \mathbb{E} 1(M_j\le -t)}{p_0},\, G_1(t) = \lim_p \frac{\sum_{j\in\mathcal{S}_1} \mathbb{E} 1(M_j\ge t)}{p_1},
\end{gather*}
and
\begin{equation*}
    FDP(t) :=\frac{V^{\prime}(t)}{(V^{\prime} + r_p V^1(t))\vee 1/p}, \quad FDP^{\infty}(t) :=\lim_{p}\frac{V^{\prime}(t)}{(V^{\prime} + r_p V^1(t))\vee 1/p},
\end{equation*}
where $r_p = p_0/ p_1$ and $FDP^{\infty}(t)$ is the pointwise limit of $FDP(t)$.  

With a slight abuse of the notation, we define a similar set of quantities for the high-dimensional setting as:
\begin{gather*}
V(t) =\frac{\#\{ j:  j\in \mathcal{S}_0, M_j \le -t \}}{|\widehat{\cS}\cap\cS_0|} ,\, V'(t) =\frac{\#\{ j:  j\in \mathcal{S}_0, M_j \ge t \} }{|\widehat{\cS}\cap\cS_0|},\,
V^{1}(t) =\frac{\#\{ j:  j\in \mathcal{S}_1, M_j \ge t \}}{|\widehat{\cS}\cap \cS_1|} , \\
G_0(t) = \lim_p \frac{\sum_{j\in\mathcal{S}_0} \mathbb{E} 1(M_j\le -t)}{|\widehat{\cS}\cap\cS_0|},\, G_1(t) = \lim_p \frac{\sum_{j\in\mathcal{S}_1} \mathbb{E} 1(M_j\ge t)}{|\widehat{\cS}\cap\cS_1|},
\end{gather*}
and
\begin{equation*}
    FDP(t) :=\frac{V^{\prime}(t)}{(V^{\prime} + r_p V^1(t))\vee 1/|\widehat{\cS}|}, \quad FDP^{\infty}(t) :=\lim_{p}\frac{V^{\prime}(t)}{(V^{\prime} + r_p V^1(t))\vee 1/|\widehat{\cS}|}
\end{equation*}
where $r_p = |\widehat{\cS}\cap \cS_0|/|\widehat{\cS}\cap\cS_1|$. We have the following results.

\begin{lemma}\label{lem:slln}
Suppose Assumption \ref{assump}(a) holds for the low dimensional case,  Assumption~\ref{lasso:a1} and \ref{assump}(b) hold for the high-dimensional case, and $G_0(t)$ is a continuous function. Then, we have
\[
 \sup_t \left|{V(t)} - G_0(t)\right|\xrightarrow[]{p}0,\quad  \sup_t\left|{V'(t)}- G_0(t)\right|\xrightarrow[]{p}0, \quad \sup_t\left|{V^1(t)}- G_1(t)\right|\xrightarrow[]{p}0.\]
\end{lemma}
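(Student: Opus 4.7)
The plan is to combine a Chebyshev-style pointwise law of large numbers, powered by the covariance bounds in Lemmas~\ref{lemma:suff:ols} and \ref{lemma:suff:lasso}, with a Glivenko--Cantelli-type monotonicity-and-grid argument that upgrades the pointwise statement to a uniform one. The observation that drives everything is that $V(t)$, $V'(t)$, and $V^1(t)$ are all averages of monotone bounded indicators, and the putative limits $G_0,G_1$ are non-increasing continuous functions, so the classical device for empirical CDFs applies once the fluctuation at a fixed $t$ is controlled.

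For pointwise convergence of $V(t)$ in the low-dimensional case, I would apply Chebyshev: $P(|V(t)-\mathbb{E}V(t)|>\delta)\le \mathrm{Var}(V(t))/\delta^2$, where
\[
\mathrm{Var}(V(t)) = \frac{1}{p_0^2}\sum_{j,k\in\mathcal{S}_0}\mathrm{Cov}\bigl(\mathbbm{1}(M_j\le -t),\mathbbm{1}(M_k\le -t)\bigr) \le C_1' p_0^{\alpha_1-2}\to 0
\]
by Lemma~\ref{lemma:suff:ols}, since $\alpha_1<2$. By definition $\mathbb{E}V(t)\to G_0(t)$, so $V(t)\xrightarrow{p}G_0(t)$ for each fixed $t$. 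For $V'(t)$, Theorem~\ref{thm:coinflip1} gives $\mathbb{E}\mathbbm{1}(M_j\ge t)=\mathbb{E}\mathbbm{1}(M_j\le -t)$ for $j\in\mathcal{S}_0$ (integrating out $\vz_j$), hence $\mathbb{E}V'(t)\to G_0(t)$, and the identical Chebyshev bound applies. The high-dimensional case is entirely parallel, using Lemma~\ref{lemma:suff:lasso} with $p_0$ replaced by $|\widehat{\mathcal{S}}\cap\mathcal{S}_0|$ and the symmetry from Theorem~\ref{thm:coinflip2lassofinite}, all holding on the high-probability event $\{\mathcal{S}_1\subset\widehat{\mathcal{S}}\}$ supplied by Lemma~\ref{le:lasso:consist}.

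To pass from pointwise to uniform, fix $\epsilon>0$ and, using continuity of $G_0$ together with $\lim_{t\to\infty}G_0(t)=0$, pick a finite grid $0=t_0<t_1<\cdots<t_K=T$ satisfying $|G_0(t_k)-G_0(t_{k-1})|<\epsilon$ and $G_0(T)<\epsilon$. Monotonicity of $V$ and $G_0$ yields, for $t\in[t_{k-1},t_k]$,
\[
V(t_k)-G_0(t_{k-1}) \le V(t)-G_0(t) \le V(t_{k-1})-G_0(t_k),
\]
so that $\sup_{t\le T}|V(t)-G_0(t)|\le \max_{0\le k\le K}|V(t_k)-G_0(t_k)|+\epsilon$. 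Since the maximum is over finitely many points, the pointwise result makes it $\le \epsilon$ in probability. The tail $t>T$ is handled by $0\le V(t)+G_0(t)\le V(T)+G_0(T)$, which is at most $3\epsilon$ on the event $|V(T)-G_0(T)|<\epsilon$. The same grid argument gives the uniform statements for $V'$ and $V^1$.

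The main obstacle is that Lemmas~\ref{lemma:suff:ols} and \ref{lemma:suff:lasso} only provide a covariance bound over $\mathcal{S}_0$ (or $\widehat{\mathcal{S}}\cap\mathcal{S}_0$), whereas the argument for $V^1(t)$ requires an analogous bound over the signal set. The intended fix is to note that the proofs of those lemmas rely on the joint Gaussian structure of $(\hat\beta_j^+,\hat\beta_j^-)$ and on the independence across $j$ induced by the perturbations $\vz_j$, both of which are available for every index; the same calculation therefore furnishes a bound $\lesssim p_1^{\alpha}$ (resp.\ $|\widehat{\mathcal{S}}\cap\mathcal{S}_1|^\alpha$) with $\alpha<2$, which together with $p_1\to\infty$ (resp.\ $|\widehat{\mathcal{S}}\cap\mathcal{S}_1|\to\infty$) drives $\mathrm{Var}(V^1(t))\to 0$. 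A minor bookkeeping issue in the high-dimensional case is that all statements must be made on the $(1-2/p)$-probability event of Lemma~\ref{le:lasso:consist}, but this contributes only an $o(1)$ term that is absorbed by convergence in probability.
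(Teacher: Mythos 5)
Your proposal follows essentially the same route as the paper's proof: a Chebyshev bound powered by the covariance estimates of Lemmas~\ref{lemma:suff:ols} and \ref{lemma:suff:lasso} at each fixed $t$, upgraded to a uniform statement via monotonicity of $V$, $V'$, $V^1$ and a finite grid chosen by continuity of $G_0$. The one place you go beyond the paper is in flagging that the covariance lemmas only cover the null set while $V^1$ needs the analogous bound over $\mathcal{S}_1$ (resp.\ $\widehat{\mathcal{S}}\cap\mathcal{S}_1$) --- the paper disposes of this with a bare ``similarly,'' so your explicit acknowledgment and proposed fix is a correct reading of what the argument actually requires, not a deviation from it.
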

With the aid of this lemma, we show that the proposed GM method controls the FDR asymptotically for linear models in both low- (Theorem~\ref{thm:fdr}) and high- (Theorem~\ref{thm:fdr:lasso}) dimensional cases.  

 \begin{theorem}\label{thm:fdr}
 Let $M_j$'s be the mirror statistics calculated in Algorithm~\ref{alg:gm:ols} for the low-dimensional setting. For any given $q\in (0,1)$, we choose $\tau_q >0 $ according to (\ref{eq:tau}). If Assumption \ref{assump}(a) holds and there exists $t>0$ such that $FDP^{\infty}(t)<q$, then, as $p\rightarrow\infty$,
\begin{equation*}
    \mathbb{E}\left[\frac{\#\{j: j \in \mathcal{S}_0, \hbox{ and } j \in \hat{\mathcal{S}}_1\}}{\#\{j: j \in \hat{\mathcal{S}}_1\} \vee 1}\right] \le q + o(1).
\end{equation*}
\end{theorem}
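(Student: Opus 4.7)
The plan is to combine the symmetry of the mirror statistics for null variables (Theorem~\ref{thm:coinflip1}) with the uniform convergence in Lemma~\ref{lem:slln} to show that the true FDP is asymptotically no larger than the data-driven estimate $\widehat{FDP}(\tau_q)\le q$, and then promote the probabilistic bound to one on the expectation using boundedness of the FDP in $[0,1]$.

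First, I would write out, at any threshold $t>0$,
\[
FDP_{\mathrm{true}}(t)=\frac{p_0 V'(t)}{(p_0 V'(t)+p_1 V^1(t))\vee 1},\qquad \widehat{FDP}(t)\ge \frac{p_0 V(t)}{(p_0 V'(t)+p_1 V^1(t))\vee 1},
\]
the inequality holding because $\widehat{FDP}$ additionally counts $\#\{j\in\mathcal{S}_1:M_j\le -t\}$ in its numerator. Theorem~\ref{thm:coinflip1} gives $\bbE V(t)=\bbE V'(t)$ for every $t$, and Lemma~\ref{lem:slln} upgrades this to $\sup_t|V(t)-V'(t)|\convp 0$. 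Hence on any set where $(p_0 V'(t)+p_1 V^1(t))/p$ is bounded away from zero, one has $\widehat{FDP}(t)-FDP_{\mathrm{true}}(t)\ge -o_P(1)$ uniformly in $t$.

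Second, by hypothesis there exists $t^\star>0$ with $FDP^{\infty}(t^\star)<q$; pick $\eta>0$ with $FDP^{\infty}(t^\star)<q-\eta$. Lemma~\ref{lem:slln} yields $\widehat{FDP}(t^\star)\convp FDP^{\infty}(t^\star)$, so by the definition of $\tau_q$ as the smallest threshold with $\widehat{FDP}\le q$, we conclude $\mathbb{P}(\tau_q\le t^\star)\to 1$. On the complementary event $\{\tau_q>t^\star\}$ I fall back on the trivial bound $FDP_{\mathrm{true}}(\tau_q)\le 1$. Restricted to the high-probability event $\{\tau_q\le t^\star\}$, the uniform convergence of $V,V',V^1$ on the compact interval $[0,t^\star]$ together with $\widehat{FDP}(\tau_q)\le q$ gives
\[
FDP_{\mathrm{true}}(\tau_q)\le \widehat{FDP}(\tau_q)+o_P(1)\le q+o_P(1).
\]
Combining with the trivial bound off the event and $\mathbb{P}(\tau_q>t^\star)\to 0$, the bounded convergence theorem delivers $\mathbb{E}[FDP_{\mathrm{true}}(\tau_q)]\le q+o(1)$, which is the stated conclusion.

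The most delicate step is ensuring the uniform convergence passes through the $\vee 1$ cut-off in the denominators: the ratio is unstable when both numerator and denominator are small, so one has to restrict attention to thresholds $t$ for which the total number of rejections is at least a constant multiple of $p$. The assumption $FDP^{\infty}(t^\star)<q$ is exactly what guarantees such a $t^\star$ exists, and the covariance bound of Lemma~\ref{lemma:suff:ols} embedded inside Lemma~\ref{lem:slln} is what tames erratic fluctuations of the empirical ratios. Threading this uniformity carefully through the compact range $[0,t^\star]$, in the spirit of \cite{storey2004strong}, is where the bulk of the technical work lies.
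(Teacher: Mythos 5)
Your proposal is correct and follows essentially the same route as the paper's proof: both establish $\mathbb{P}(\tau_q\le t^\star)\to 1$ from the hypothesis $FDP^{\infty}(t^\star)<q$, invoke the uniform convergence of $V$, $V'$, $V^1$ from Lemma~\ref{lem:slln} over the compact threshold range, and pass to the expectation via boundedness of the FDP and dominated convergence. The only cosmetic difference is that you compare the true FDP to $\widehat{FDP}$ directly through $\sup_t|V(t)-V'(t)|\convp 0$, whereas the paper triangulates both quantities through the deterministic limit $\overline{\text{FDP}}(t)=G_0(t)/(G_0(t)+r_pG_1(t))$; the substance is identical.
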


\begin{theorem}\label{thm:fdr:lasso}
Let $M_j$'s be the mirror statistics calculated using Algorithm \ref{alg:gm:lasso} for the high-dimensional setting. For any given $q\in (0,1)$, we choose $\tau_q >0 $ according to (\ref{eq:tauhd}). If Assumptions \ref{lasso:a1} and \ref{assump}(b) hold and there is a $t>0$ such that $FDP^{\infty}(t)<q$, then, as $p\to\infty$, 
\begin{equation*}
    \mathbb{E}\left[\frac{\#\{j: j \in \mathcal{S}_0, \hbox{ and } j \in \hat{\mathcal{S}}_1\}}{\#\{j: j \in \hat{\mathcal{S}}_1\} \vee 1}\right] \le q + o(1).
\end{equation*}
 \end{theorem}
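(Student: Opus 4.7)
The plan is to follow the structure of the proof of Theorem~\ref{thm:fdr}, adapted to the post-selection setup. First I would condition on the event $E:=\{\mathcal{S}_1\subset\widehat{\mathcal{S}}\}$, which by Lemma~\ref{le:lasso:consist} has $\mathbb{P}(E^c)\le 2/p$. Because the FDP is bounded by $1$, the expectation over $E^c$ contributes at most $2/p=o(1)$, so it suffices to establish the bound on $E$. On $E$, every rejected index lies in $\widehat{\mathcal{S}}$, so $\hat{\mathcal{S}}_1\subseteq\widehat{\mathcal{S}}$ and
\begin{equation*}
\text{FDP}(\tau_q)=\frac{\#\{j\in\widehat{\mathcal{S}}\cap\mathcal{S}_0:M_j\ge \tau_q\}}{\#\{j\in\widehat{\mathcal{S}}:M_j\ge \tau_q\}\vee 1}.
\end{equation*}

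Next, by Theorem~\ref{thm:coinflip2lassofinite}, each null mirror statistic $M_j$ is symmetric about zero conditional on $\vz_j$ (with the same $1-2/p$ probability, which can be absorbed into $E$). Combined with the weak-dependence bound of Lemma~\ref{lemma:suff:lasso}, a Chebyshev-type argument yields the uniform convergences in Lemma~\ref{lem:slln}: $\sup_t|V(t)-G_0(t)|$, $\sup_t|V'(t)-G_0(t)|$, and $\sup_t|V^1(t)-G_1(t)|$ all tend to zero in probability. Hence $\sup_t|V(t)-V'(t)|\xrightarrow{p}0$, and the numerator of $\fdphat(t)$, which equals $|\widehat{\mathcal{S}}\cap\mathcal{S}_0|\,V(t)$ plus a negligible contribution from $\mathcal{S}_1$ variables with $M_j\le -t$ (small because the debiased mirror statistics concentrate on the positive side for strong signals), is asymptotically equal to $|\widehat{\mathcal{S}}\cap\mathcal{S}_0|\,V'(t)$, which is the numerator of $\text{FDP}(t)$.

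Combining these ingredients, $\fdphat(t)\to\text{FDP}^\infty(t)$ uniformly on any compact interval where the denominator stays bounded away from $0$. The hypothesis that some $t^{*}>0$ satisfies $\text{FDP}^\infty(t^{*})<q$ guarantees that $\tau_q\le t^{*}$ with probability tending to $1$, so $\tau_q$ is confined to a compact set on which the uniform approximation applies. Since $\fdphat(\tau_q)\le q$ by the definition of $\tau_q$ in \eqref{eq:tauhd}, we deduce $\text{FDP}(\tau_q)\le q+o_p(1)$, and dominated convergence (using $\text{FDP}\le 1$) upgrades this to $\mathbb{E}[\text{FDP}(\tau_q)]\le q+o(1)$.

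The main obstacle is the passage from the pointwise symmetry in Theorem~\ref{thm:coinflip2lassofinite} to a uniform control of $\fdphat(t)-\text{FDP}(t)$ at the data-dependent threshold $\tau_q$. Two subtleties make this delicate: (i) the selection set $\widehat{\mathcal{S}}$ is itself random, so sums over $\widehat{\mathcal{S}}\cap\mathcal{S}_0$ behave like random-size empirical processes, which is precisely why Assumption~\ref{assump}(b) enters through Lemma~\ref{lemma:suff:lasso} to bound pairwise covariances; and (ii) $\tau_q$ depends jointly on all $M_j$'s, so only a feasibility hypothesis such as $\text{FDP}^\infty(t^{*})<q$ prevents $\tau_q$ from escaping to a region where the denominator of $\fdphat$ vanishes and the approximation breaks down.
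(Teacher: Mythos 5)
Your proposal is correct and follows essentially the same route as the paper: the paper's own proof of Theorem~\ref{thm:fdr:lasso} is exactly to condition on the event $\{\mathcal{S}_1\subset\widehat{\mathcal{S}}\}$ (probability at least $1-2/p$ by Lemma~\ref{le:lasso:consist}), rerun the argument of Theorem~\ref{thm:fdr} on that event using the uniform convergence from Lemma~\ref{lem:slln} (driven by the covariance bound of Lemma~\ref{lemma:suff:lasso}), the feasibility hypothesis $FDP^{\infty}(t)<q$ to confine $\tau_q$, and dominated convergence, and then absorb the complementary event into the $o(1)$ term since the FDP is bounded by one. Your write-up is in fact more explicit than the paper's, but the decomposition and the key lemmas invoked are the same.
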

When conditioning on the event  $\{S_1\subset \hat{\mathcal{S}} \}$, we can follow the proof of Theorem \ref{thm:fdr} to show that FDR conditioning on the selection is less than or equal to $q$ asymptotically. According to Lemma \ref{le:lasso:consist}, the probability that $\mathcal{S}_1\subseteq \hat{\mathcal{S}}$ is greater than $1-\frac{2}{p}$. Consequently, as $p\to \infty$ , the FDR is less than or equal to $q+o(1)$, which concludes the proof of the theorem.

\section{Estimating the Number of False Discoveries}\label{sec:var:fdp}

The GM approach  introduced in Section \ref{sec:method} provides a way to select significant features subject to FDR control at a designated level. %
Next, we address a related problem that is of interest to some practitioners. Suppose a scientist is only allowed to explore no more than 100 selected features due to a limited budget, then the following questions are of immediate concerns: (1) How should they select the list of top 100? (2) How many false discoveries (FDs) do they expect to have? (3)  Can statisticians provide an uncertainty measure for the estimated FDs?

The first question is easy to address based on the mirror statistic $\{M_j\}$ $(j=1,\cdots,p)$. We order the mirror statistics  decreasingly as $M_{(1)}\geq \dots \geq M_{(p)}$ and choose the top $k$ features as those corresponding to  the set of the top-$k$  mirror statistics. Let the selected set be $\mathcal{I}_k$, i.e.,
\begin{equation*}
    \mathcal{I}_k = \big\{j: M_j \in \{M_{(1)}, \dots, M_{(k)}\} \big\},
\end{equation*}
where $k\leq \#\{j\;|\;M_j > 0\}$ since the GM procedure does not select variables with negative mirror statistics.
Let  $FD(k)= |\mathcal{I}_k \cap \mathcal{S}_0| $ denote the true number of FDs in the top-$k$ list and let $\EE[FD(k)]$ denote the expected number of FDs. %
Since the mirror statistics are distributed symmetrically for the null variables, we estimate $\EE[FD(k)]$ as 
\begin{equation}\label{eq:est:number:fdr}
\widehat{FD}(k)= \#\{M_j < -M_{(k)}\}.
\end{equation}
Under high dimensional settings, Theorem~\ref{thm:fd} below
shows that the error bound between $\widehat{FD}(k)$ and $\EE[FD(k)]$ is  $o_p(k)$. As $k$ increases, the error bound also increases although the relative error gets smaller. Theorem \ref{thm:fdmean} states that $\widehat{FD}(k)$ is an unbiased estimator of $\EE[FD(k)]$ with high probability.

\begin{algorithm}[h]
\begin{enumerate}
\item Parallel FOR $b=1,\cdots,B$:
\begin{enumerate}
\item  For the low-dimensional setting, fit linear regression model via  OLS, i.e., minimizing (\ref{eq:plsgm}); \\
For the high-dimensional setting, get the Lasso solution by minimizing  (\ref{eq:org:lasso}). Let $\hat{\vy} = (\hat{y}_{1}, \hat{y}_{2},\cdots, \hat{y}_{n})$ be the fitted values, and $ \vr = \vy - \hat{\vy}= (r_{1}, r_{2}, \cdots, r_{n})$ be the residuals.  %

\item  Sample from $\vr$ with replacement to get $\vr^{(b)}=(r_{1}^{(b)}, r_{2}^{(b)},\cdots, r_{n}^{(b)})$. Let $\vy^{(b)}=\hat{\vy} +\vr^{(b)}$.

\item Apply Algorithm \ref{alg:gm:ols} (low-dimensional) or  Algorithm \ref{alg:gm:lasso} (high-dimensional) by replacing $\vy$ with $\vy_j^{(b)}$, and calculate the mirror statistics $M_j^{b}$, for $j=1,\dots, p$, accordingly.

\end{enumerate}
End Parallel FOR.

\item  Output the  bootstrap estimate  $\{M_{1}^{b}, M_{2}^{b},\cdots, M_{p}^{b}\}_{b=1}^B$.

\item  For $b=1,\dots, B$, we calculate the estimate of the number of false discoveries based on  $\{M_{1}^{b}, M_{2}^{b},\cdots, M_{p}^{b}\}$ as 
\begin{equation*}
    \widehat{FD}^b(k) = \#\{M^{b}_j < -M^{b}_{(k)}\},
\end{equation*}
where $\{M^{b}_{(1)}, \dots M^{b}_{(p)}\}$ are the decreasingly ordered mirror statistics.

\item Construct the confidence interval of $\EE[FD(k)]$ as $[\widehat{FD}_{(\alpha/2)}(k), \widehat{FD}_{(1- \alpha/2)}(k)]$ and upper confidence interval as $[0, \widehat{FD}_{(1- \alpha)}(k)]$.

\end{enumerate}
\caption{Bootstrap distribution of $M_j$ and $\widehat{FD}(k)$. }\label{alg:bootstrap}
\end{algorithm}

\begin{theorem}\label{thm:fd}
Suppose that Assumptions 1 and 2(b) hold, $p/n\to \infty$, $k\leq \#\{j\;|\;M_j > 0\}$, and $k/p_1 = O(1)$. Let $M_j$ be the mirror statistic calculated from Algorithm \ref{alg:gm:lasso}.
We have
\begin{equation*}
    \lim_{n\to \infty} P\left(\frac{1}{k}\left|\widehat{FD}(k) - \EE[FD(k)]\right| > \epsilon\right) = 0,
\end{equation*}
for any $\epsilon >0$.
\end{theorem}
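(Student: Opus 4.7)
}
The plan is to decompose
\[
\widehat{FD}(k) - \EE[FD(k)] = \bigl(\widehat{FD}(k) - FD(k)\bigr) + \bigl(FD(k) - \EE[FD(k)]\bigr),
\]
and show that each summand is $o_p(k)$. Since $k \leq \#\{j: M_j > 0\}$, we have $M_{(k)} > 0$, so $FD(k) = \#\{j \in \mathcal{S}_0 : M_j \ge M_{(k)}\}$ and $\widehat{FD}(k) = \#\{j : M_j < -M_{(k)}\}$. The main tools will be the symmetry of the null mirror statistics guaranteed by Theorem~\ref{thm:coinflip2lassofinite}, the weak-dependence covariance bound of Lemma~\ref{lemma:suff:lasso}, and the Lasso selection-containment result of Lemma~\ref{le:lasso:consist}.

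First I would peel off the non-null contribution to $\widehat{FD}(k)$. Writing $\widehat{FD}(k) = \#\{j\in\mathcal{S}_0: M_j < -M_{(k)}\} + \#\{j\in\mathcal{S}_1: M_j < -M_{(k)}\}$, I would show that the second term is $o_p(k)$. Since $M_{(k)} > 0$, the event $\{M_j < -M_{(k)}\}$ is contained in $\{M_j < 0\}$, and under the signal-strength part of Assumption~\ref{lasso:a1} together with the truncated-normal form of $M_j$ in \eqref{eq:hdmirror}, each $j \in \mathcal{S}_1$ has $P(M_j < 0)$ small; summing and using $k/p_1 = O(1)$ together with Markov's inequality yields the desired bound. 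The analogous claim that $FD(k)$ is dominated by its nonnegative part follows from the same reasoning (signal variables contribute almost surely to $\mathcal{I}_k$), so up to a $o_p(k)$ error it suffices to compare the two null counts $N^{-}(t) := \#\{j\in\mathcal{S}_0: M_j < -t\}$ and $N^{+}(t) := \#\{j\in\mathcal{S}_0: M_j > t\}$ evaluated at $t = M_{(k)}$.

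By Theorem~\ref{thm:coinflip2lassofinite}, $\EE N^{-}(t) = \EE N^{+}(t)$ for every fixed $t > 0$ (up to an event of probability $2/p$). Lemma~\ref{lemma:suff:lasso} bounds the variances of both counts by $C_2' |\widehat{\mathcal{S}}\cap\mathcal{S}_0|^{\alpha_2}$ with $\alpha_2 < 2$, so Chebyshev gives pointwise concentration of $N^{\pm}(t)$ around their common mean. Because both $t \mapsto N^{-}(t)$ and $t \mapsto N^{+}(t)$ are monotone, I would upgrade this to a uniform bound by a standard bracketing/partition argument: choose a finite grid $0 < t_1 < \cdots < t_L$ refining the range of possible $M_{(k)}$, bound $|N^{\pm}(t) - \EE N^{\pm}(t)|$ at the grid points by Chebyshev, and control the oscillation between grid points using monotonicity together with the bound on $\EE N^{\pm}$. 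Under $k/p_1 = O(1)$ and Lemma~\ref{le:lasso:consist}, $|\widehat{\mathcal{S}}\cap\mathcal{S}_0|^{\alpha_2/2} = o(k)$, so $\sup_{t>0}|N^{-}(t) - N^{+}(t)| = o_p(k)$. Evaluating at the random $t = M_{(k)}$ (which is legitimate by uniformity) gives $\widehat{FD}(k) - FD(k) = o_p(k)$.

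It remains to control $FD(k) - \EE[FD(k)]$. This is a centered sum of indicators $\mathbf{1}(M_j \ge M_{(k)})$ over $j \in \mathcal{S}_0$; applying Lemma~\ref{lemma:suff:lasso} and Chebyshev at deterministic thresholds, then the same uniform-in-$t$ upgrade as above, yields $FD(k) - \EE[FD(k)] = o_p\bigl(|\widehat{\mathcal{S}}\cap\mathcal{S}_0|^{\alpha_2/2}\bigr) = o_p(k)$. Combining the two displays and discarding the $2/p$-probability event from Theorem~\ref{thm:coinflip2lassofinite} completes the proof.

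\medskip
\noindent\emph{Main obstacle.} The delicate step is the uniform-in-$t$ control, because $M_{(k)}$ is a random, data-dependent threshold driven by the same randomness that drives the mirror statistics. The symmetry identity $\EE N^{-}(t) = \EE N^{+}(t)$ is only pointwise, so a naive substitution $t = M_{(k)}$ is not allowed. Handling this via the monotonicity/grid trick, while simultaneously keeping the approximation error at the scale $o(k)$ rather than $o(|\widehat{\mathcal{S}}\cap\mathcal{S}_0|)$, is where the assumption $k/p_1 = O(1)$ (and hence $k$ of the same order as $|\widehat{\mathcal{S}}\cap\mathcal{S}_0|$ under Lasso consistency) is essential.
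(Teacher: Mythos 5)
Your proposal is correct and follows essentially the same route as the paper: peel off the non-null contributions using the Lasso consistency/signal-strength guarantee, invoke the null symmetry of $M_j$, and handle the random threshold $M_{(k)}$ via a uniform-in-$t$ law of large numbers obtained from the covariance bound of Lemma~\ref{lemma:suff:lasso}, Chebyshev, and a monotone grid argument. The only difference is cosmetic: you re-derive that uniform convergence inline (it is exactly the content and proof of Lemma~\ref{lem:slln}, which the paper cites directly) and pivot through $FD(k)$ as an intermediate quantity rather than comparing $\widehat{FD}(k)$ to $\EE[FD(k)]$ in one step.
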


\begin{theorem}\label{thm:fdmean}
Suppose that Assumptions 1 and 2(b) hold, $k\leq \#\{j\;|\;M_j > 0\}$. Let $M_j$ be the mirror statistic calculated from Algorithm \ref{alg:gm:lasso}.
We have
\begin{equation*}
    P\left(\EE[\widehat{FD}(k)] = \EE[FD(k)]\right) > 1-\frac{2}{p}.
\end{equation*}
\end{theorem}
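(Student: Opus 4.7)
The plan is to reduce the theorem to an application of Theorem~\ref{thm:coinflip2lassofinite} on the event that Lasso selects all true signals. First, by Lemma~\ref{le:lasso:consist}, the event $E = \{\mathcal{S}_1 \subseteq \widehat{\mathcal{S}}\}$ satisfies $P(E) \ge 1 - 2/p$. On $E$, every null index $j \in \widehat{\mathcal{S}} \cap \mathcal{S}_0$ satisfies the symmetry property $P(M_j \le -t \mid \vz_j) = P(M_j \ge t \mid \vz_j)$ given by Theorem~\ref{thm:coinflip2lassofinite}. It therefore suffices to establish $\EE[\widehat{FD}(k)] = \EE[FD(k)]$ on $E$, which is exactly the form of the probabilistic statement.

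Next, I would decompose $\widehat{FD}(k)$ along null and non-null indices:
\[
\widehat{FD}(k) = \#\{j \in \widehat{\mathcal{S}} \cap \mathcal{S}_0 : M_j < -M_{(k)}\} + \#\{j \in \mathcal{S}_1 : M_j < -M_{(k)}\}.
\]
The assumption $k \le \#\{j : M_j > 0\}$ guarantees $M_{(k)} > 0$, and under the signal-strength condition in Assumption~\ref{lasso:a1}\ref{lasso:a1:b} the non-null $M_j$'s are large and positive with probability $1 - o(1)$, so the second summand vanishes in expectation on a high-probability subset of $E$. On the same event, $FD(k) = \#\{j \in \widehat{\mathcal{S}} \cap \mathcal{S}_0 : M_j \ge M_{(k)}\}$.

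The remaining step is to show
\[
\EE\bigl[\#\{j \in \widehat{\mathcal{S}} \cap \mathcal{S}_0 : M_j \ge M_{(k)}\}\bigr] = \EE\bigl[\#\{j \in \widehat{\mathcal{S}} \cap \mathcal{S}_0 : M_j \le -M_{(k)}\}\bigr].
\]
I would do this by conditioning on the absolute values $\{|M_j|\}_{j \in \widehat{\mathcal{S}} \cap \mathcal{S}_0}$ together with the non-null mirror statistics, and invoking a sign-swap symmetry of the joint distribution of the null signs. The independence of the Gaussian perturbations $\{\vz_j\}_{j \in \widehat{\mathcal{S}}}$, combined with the marginal symmetry given by Theorem~\ref{thm:coinflip2lassofinite}, makes each null sign equally likely to be $\pm 1$; a coupling that flips all null signs simultaneously matches $\{M_j \ge M_{(k)}\}$ with $\{M_j \le -M_{(k)}\}$ termwise and yields the required equality in expectation.

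The main obstacle is promoting the marginal symmetry of each null $M_j$ to a joint symmetry under simultaneous sign flips: the threshold $M_{(k)}$ depends on the entire vector of statistics, and the post-selection truncation can induce dependencies across null indices through the Lasso residual $\br$ and the shared truncation bounds $\cV_0^{\pm}, \cV_1^{\pm}$. A careful conditioning on $\br$, combined with the identification of $M_j$ as a function of the pair $(\hbjplus + \hbjminus,\, \hbjplus - \hbjminus)$ whose two coordinates are uncorrelated (and jointly independent conditional on $\br$), appears to be the natural route to make this rigorous while staying within the hypotheses already used in Theorem~\ref{thm:coinflip2lassofinite}.
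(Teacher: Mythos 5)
Your proposal follows essentially the same route as the paper's proof: restrict to the event $\{\mathcal{S}_1 \subseteq \widehat{\mathcal{S}}\}$, which by Lemma \ref{le:lasso:consist} has probability at least $1-2/p$; observe that on this event the non-null mirror statistics are positive, so $\widehat{FD}(k)$ counts only null indices; and then convert $\#\{j\in\mathcal{S}_0 : M_j \le -M_{(k)}\}$ into $\#\{j\in\mathcal{S}_0 : M_j \ge M_{(k)}\}$ in expectation using the symmetry of the null $M_j$'s from Theorem \ref{thm:coinflip2lassofinite}. The one place where you go beyond the paper is the final step: you correctly note that since $M_{(k)}$ is a random threshold depending on the entire vector of statistics, marginal symmetry of each null $M_j$ does not by itself yield $\EE[\#\{j : M_j \ge M_{(k)}\}] = \EE[\#\{j : M_j \le -M_{(k)}\}]$, and you propose a joint sign-flip coupling conditional on $\br$ and the absolute values to bridge this. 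The paper's own proof simply applies the marginal symmetry termwise without addressing the randomness of the threshold, so the obstacle you flag is a genuine gap in the published argument rather than a defect of your plan; carrying out the conditioning on $\br$ and the truncation bounds to establish joint sign-flip invariance of the null coordinates would make your write-up strictly more careful than the one in the appendix.
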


Next, we describe a bootstrap method to construct a confidence interval for $\EE[FD(k)]$. 
The method starts by fitting the regression model based on the original design matrix using either the least squares (if $p<n$) or the Lasso method, and obtaining the fitted values $\hat{\vy}=(\hat{y}_{1},\cdots, \hat{y}_{n})$ as well as the residuals  $\bgamma =\vy-\hat{\vy}$.
Then, for $b=1,\ldots, B$, we generate independently the $b$-th ``bootstrap sample" $\vy^{(b)}= \hat{\vy} + \bgamma^{(b)}$, where the $\bgamma^{(b)}$ is drawn randomly from $\bgamma$ with replacement. With the bootstrapped ``observations" $\vy^{(b)}$, we calculate the mirror statistics $\{M_1^{(b)}, \dots, M_p^{(b)}\}$ using Algorithm \ref{alg:gm:ols} (Algorithm \ref{alg:gm:lasso} for Lasso). The $B$ sets of bootstrap mirror statistics are 
denoted as $\{M_{1}^{(b)}, M_{2}^{(b)},\cdots, M_{p}^{(b)}\}_{b=1}^B$, which give rise to a set of $B$ bootstrap estimates of $FD(k)$: ${\cal B}_{FD}=\{\widehat{FD}^{(1)}(k),\dots, \widehat{FD}^{(B)}(k)\}$. 
A bootstrap confidence interval for $\EE[FD(k)]$ can be constructed as  the upper and lower $\alpha/2$ quantiles of the sample ${\cal B}_{FD}$, denoted as $\widehat{FD}_{(\alpha/2)}(k)$ and $\widehat{FD}_{(1-\alpha/2)}(k)$, respectively. We may also first re-center the set ${\cal B}_{FD}$ to have mean $\widehat{FD}(k)$ and then use the corresponding quantiles. 
If a budget-sensitive domain scientist is only interested in a $(1-{\alpha})100\%$ upper confidence bound of $\EE[FD(k)]$, then she/he can use $\widehat{FD}_{(1-\alpha)}(k)$.
\section{Numerical Studies}\label{sec:simulation}
In this section, we compare  numerical performances of various multiple testing methods for regression models.  The first class of methods is based on the procedure of \cite{benjamini1995controlling}, abbreviated as BH. As shown in the following two subsections, the test statistics are calculated differently depending on $p$ and $n$.
The second class of methods is based on  knockoffs,
including the knockoff filter when $p<n/2$ \citep{barber2015controlling}, and 
the model-X knockoff when $p$ is large \citep{candes2018panning},
For the model-X knockoff, we construct knockoffs with both a known covariance matrix (model-X-true) and a second-order estimate of the covariance matrix (model-X-est). To our surprise, we observe that the knockoff method based on the known covariance matrix becomes extremely conservative for the constant correlation setting with correlation coefficient larger than 0.5. 
We implement a modified version of model-X knockoff  (model-X-fix), which significantly improves the power.
In all simulation settings, we set the designated FDR level $q$ as 0.1.

\subsection{Low-dimensional case ($p<n$)}\label{sec:sim:low:p}

We set $n=1000$ and $p=300$. For the GM approach, we calculate the mirror statistics based on Algorithm \ref{alg:gm:ols}. For the BH method, we calculate the $z$-statistics $z_1,\dots,z_p$ for the OLS estimates, i.e., $z_j=\hat{\beta}_j/(\sigma\sqrt{(X^T X)^{-1}_{jj}})$. The $j$-th variable is selected if $|z_j|>\tau_q^{BH}$,  where $\tau_q^{BH}$ is chosen as
\begin{equation*}
\tau_q^{BH} = \min_t\left\{ \frac{P(|N(0,1)|\geq t)}{\# \{j\,|\, |z_j| \geq t\} } \leq \frac{q}{p}\right\}.
\end{equation*}
The knockoff and model-X knockoff statistics are calculated based on the Lasso estimators. For different methods, we calculate the FDR and the power based on 100 replications. %
 In each of the following settings, we randomly set 240 coefficients of $\beta_i$'s as zero and generate the remaining $60$ nonzero coefficients independently from  $N(0, (20/\sqrt{n})^2)$. The response variable $\vy$ is generated according to Eq (\ref{eqn:lm}) with $\sigma=1$. The design matrix are composed of i.i.d. rows with each row generated from $N(\mathbf{0},\Sigma)$.

\textbf{(i) Power decay correlation.} The covariance matrix $\bSigma$ is autoregressive, i.e., $\sigma_{ij}$, the element at the i-th row and j-th column is $ \kappa^{|i-j|}$, where $\kappa$ is taken among $0, 0.2, 0.4, 0.6,$ and $0.8$, respectively. Figure \ref{fig:ld} (a1-a2) show plots of FDPs and proportions of true rejections of these five methods. Note that the middle bar in the box is the mean of these FDPs and proportions, a.k.a., the FDR and power. As shown in Figure \ref{fig:ld} (a1), the FDPs of the five methods are all around the targeted value $0.1$. Figure \ref{fig:ld} (a2) shows that the GM method  has the highest power among these five methods; and the discrepancies between the GM and other methods increase with respect to  $\kappa$.  %
Particularly, when  $\kappa$ is $0.8$, the power of the knockoff method becomes very small with some extreme cases of having no rejections at all. 

\begin{figure}[hp!]
\centering 
\begin{tabular}{cc}
{\hspace{-30pt}\footnotesize (a1) Power decay auto-correlation }& {\hspace{-30pt}\footnotesize (a2) Power decay auto-correlation} \\
    \includegraphics[width=0.45\textwidth]{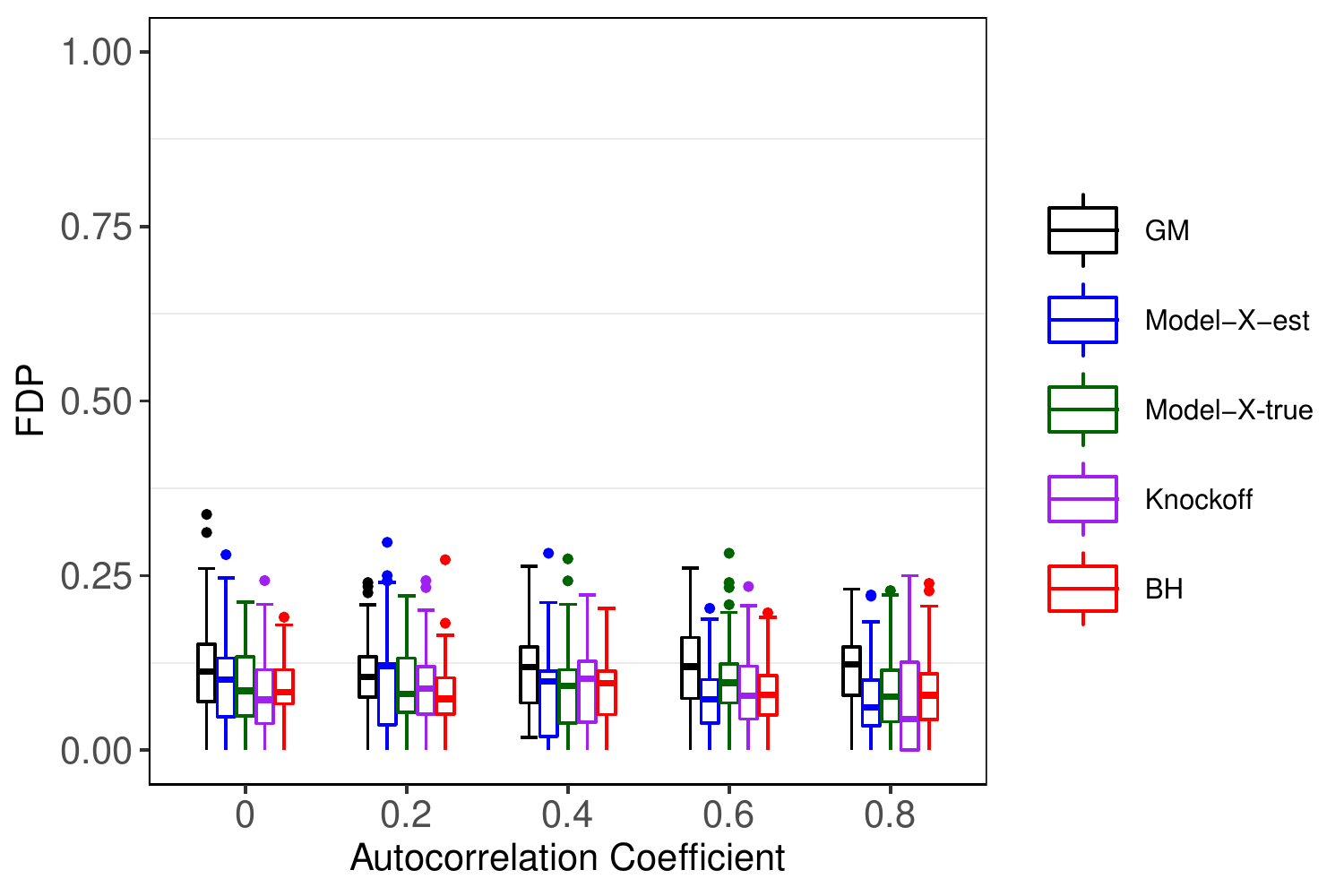}&\includegraphics[width=0.45\textwidth]{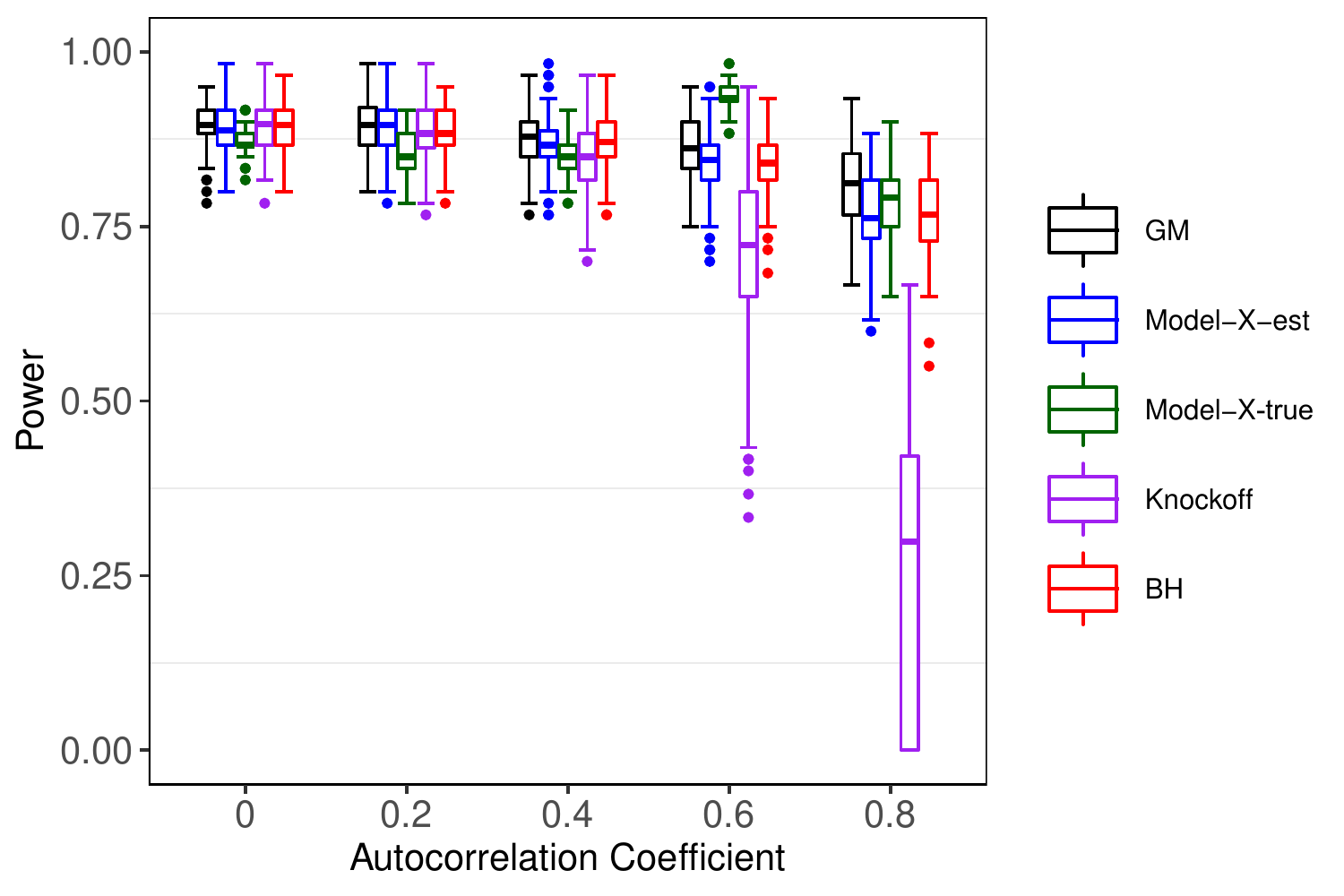} \\
{\hspace{-30pt}\footnotesize (b1) Constant positive correlation }& {\hspace{-30pt}\footnotesize (b2) Constant positive correlation } \\
    \includegraphics[width=0.45\textwidth]{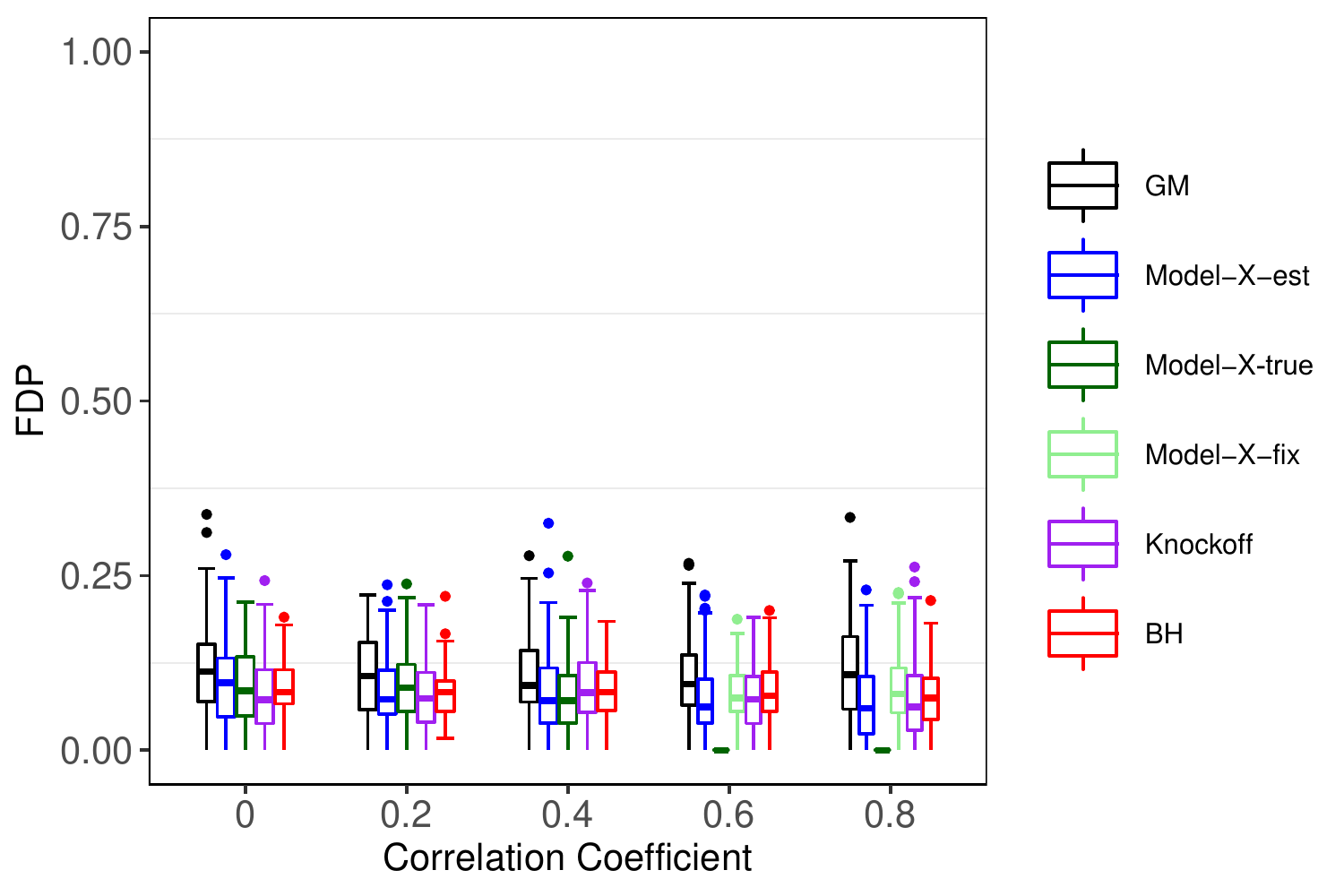}&\includegraphics[width=0.45\textwidth]{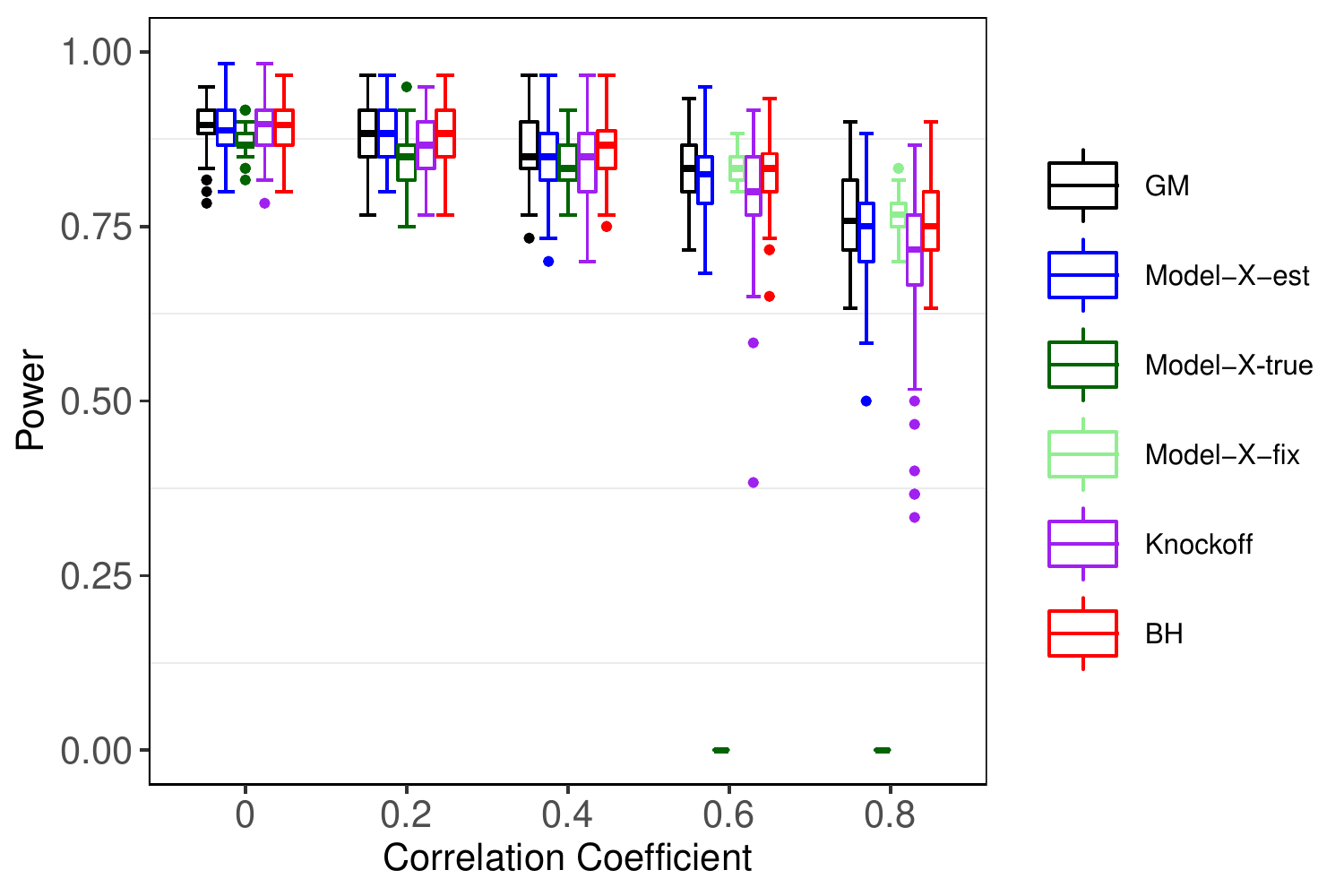} \\
{\hspace{-30pt}\footnotesize (c1) Constant partial correlation  }& {\hspace{-30pt}\footnotesize (c2) Constant partial correlation } \\
    \includegraphics[width=0.45\textwidth]{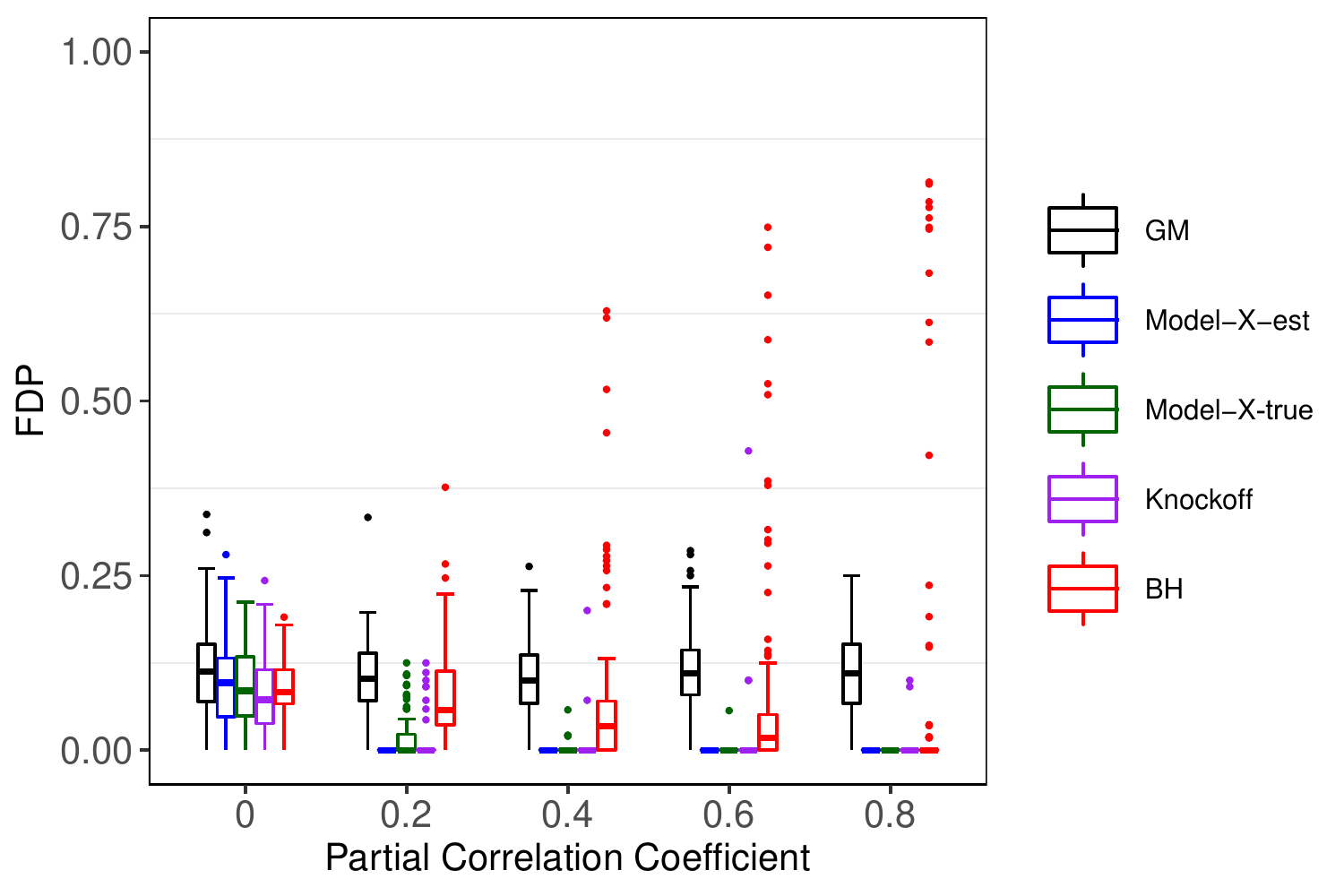}&\includegraphics[width=0.45\textwidth]{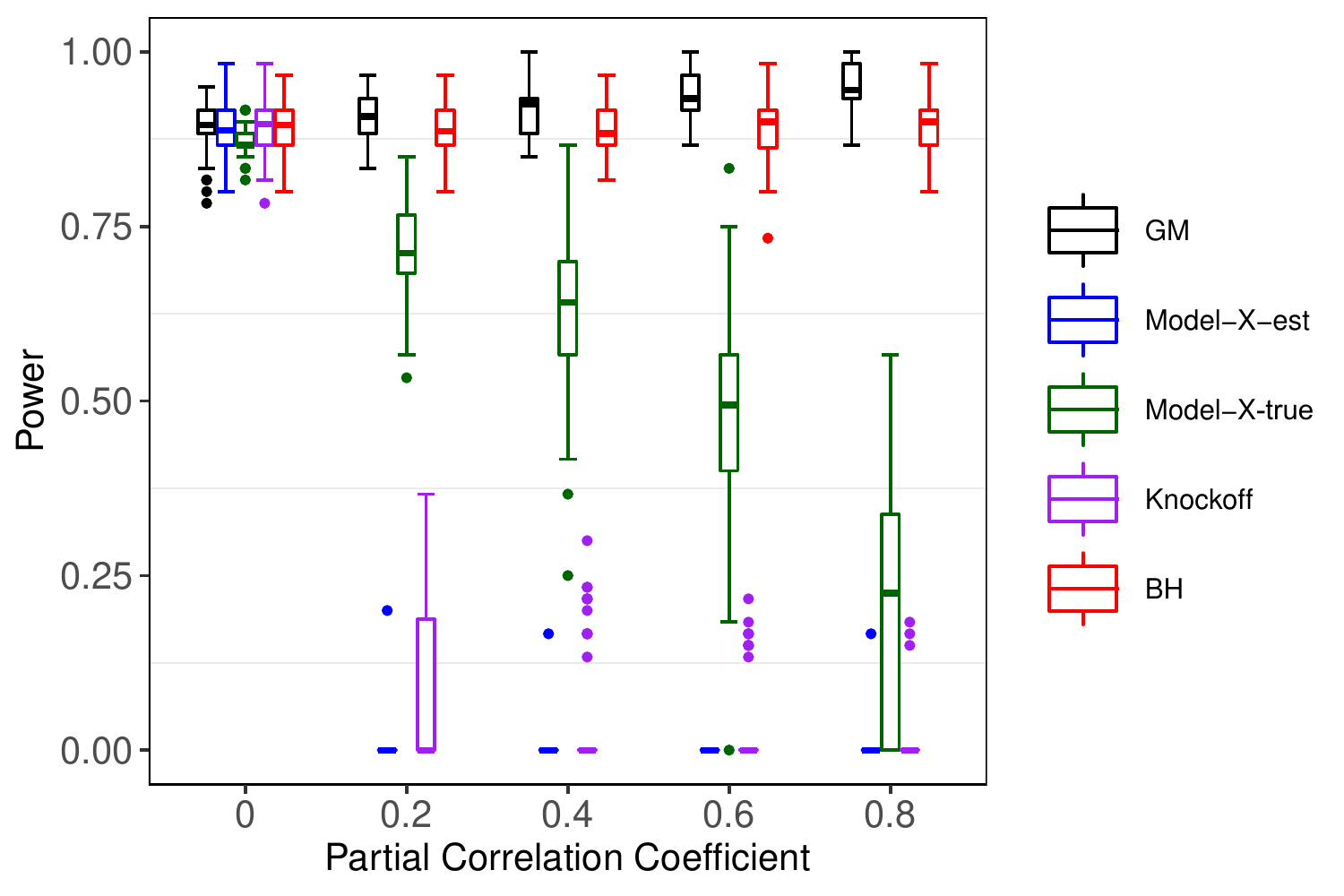}
\end{tabular}
 \caption{The plots of the FDPs and the proportions of true rejections for the low-dimensional case. %
 The upper and lower hinges of each box correspond to the first and third quartiles. The middle bar is the 
 mean of the FDPs(left panel) and the proportions of true rejections (right panel). The interval is calculated as the mean $\pm$ one  standard deviation of the FDPs. 
 }\label{fig:ld}
\end{figure}

{\textbf{(ii) Constant positive correlation.}} Here we let $\sigma_{ij}=\rho^{\mathbbm{1}(i\neq j)}$, with $\mathbbm{1}(\cdot)$ being an indicator function. As shown in Figure \ref{fig:ld} (b1-b2), correlations among the $\bbeta$'s are small and the $t$-statistics are only weakly correlated because of  low partial correlations. Hence, the BH method controls the FDR well and also has a good power. %
Model-X-est and  the GM method come as the close second with the results very similar to those of the BH method. %

When $\rho \geq 0.5$, the power of  Model-X-true drops to near-zero due to a numerical issue in the original Model-X software.
The Model-X procedure generates knockoffs from $N(\mu, V)$, where $V$ is obtained via solving a semi-definite programming problem. When  $\rho\ge 0.5$, $V$ is nearly a rank-one matrix, which results in high collinearity among the generated knockoffs and significantly reduces the power. Dongming Huang and Lucas Jensen (personal communication) suggested a simple remedy, which is to project each variable onto the orthogonal space of the first principal  component of $\Sigma$ and add a small perturbation as
\begin{equation*}
    \bx^{new}_j = \bx_j - \frac{1}{\delta}\vX \Sigma^{-1}\mathbf{1}_p + \frac{1}{\sqrt{\delta}}\vz_j,
\end{equation*}
where $\delta = (\rho - \rho^2 \mathbf{1}_p^T\Sigma^{-1}\mathbf{1}_p)^{-1}$ and $\vz_j\stackrel{iid}{\sim} N(0,I_n)$. Then, $\vX^{new} =(\vx_1^{new}, \dots,\vx_p^{new})$ serves a new design matrix for constructing knockoffs via the standard Model-X procedure with the covariance matrix  $I_p$.
This modification substantially improves the power  (as shown in Figure~\ref{fig:ld}(b1-b2) and Figure~\ref{fig:hd}(b1-b2) for large-$\rho$ cases), but the trick cannot be extended to general cases. %
In general, power decreases as $\rho$ increases for all the methods. GM appears to be affected the least.

{\textbf{(iii) Constant partial correlation.}}
In this setting, the precision matrix $Q=\bSigma^{-1}$ has  constant off-diagonal elements, i.e., $q_{ij}=\tau^{1(i\neq j)}$.  Correlations among the $\vX$ can be very small when $\tau$ is large. For example, when $\tau=0.6$, the off-diagonal entries of  $\bSigma$ is around $-0.0083$ and the diagonal entry is $2.4917$. The results are reported in Figure \ref{fig:ld}(c1-c2). Both the GM and BH methods work well in terms of the FDR and power, while GM is slightly more powerful. When $\tau>0$, both the knockoff filter and Model-X-est  are extremely conservative. For instance, when $\tau=0.6$, both the FDR and power of these two methods drop to zero. %
Mode-X-true works better, but its power decreases much faster with respect to $\rho$ than the GM and BH methods.

\subsection{High-dimensional case ($p\geq n$)}\label{sec:sim:high:p}

We consider the high-dimensional case with $p=1000$ and $n=300$, in which Algorithm \ref{alg:gm:lasso} is used for the GM method. Similar to \cite{candes2018panning}, we implement the marginal BH (henceforth, BH-ma), which is based on the p-values of the marginal regression between $\vy$ and $\vx_j$, $j=1,\cdots, p$. We also consider the data splitting BH (henceforth, BH-ds), which employs Lasso to select variables based on the first half of the data, and applies the BH method to the p-values of the OLS estimates obtained from the second half of the data for the selected variables.
For the Model-X knockoff, we use both the Gaussian knockoffs with a known covariance matrix and the second-order knockoffs with estimated covariance matrix as implemented in the R package \textit{knockoff}. Note that the original knockoff method can not be applied  when $p\geq n$. 

\smallskip

{\textbf{(i) Power decay correlation.}} %
As shown in Figure \ref{fig:hd} (a1), FDPs of  GM  are around $0.1$ in all cases.  The FDRs of Model-X-est, Model-X, BH-ds  are around 0.1 when $\kappa\leq 0.6$. When $\kappa = 0.8$,  the FDR of Model-X-est could be as large as $0.23$. This inflation is also observed in \cite{candes2018panning}. The FDR of BH-ms is also inflated when $\kappa\geq 0.4$. The FDR of the GM method is controlled at the designated level well. In Figure \ref{fig:hd} (a2), it is shown that GM and two Model-X methods have comparable powers, whereas the two BH methods do not perform well.

\begin{figure}[hp]
\centering 
\begin{tabular}{cc}
    {\hspace{-30pt}\footnotesize (a1) Power decay auto-correlation }& {\hspace{-30pt}\footnotesize (a2) Power decay auto-correlation} \\
    \includegraphics[width=0.45\textwidth]{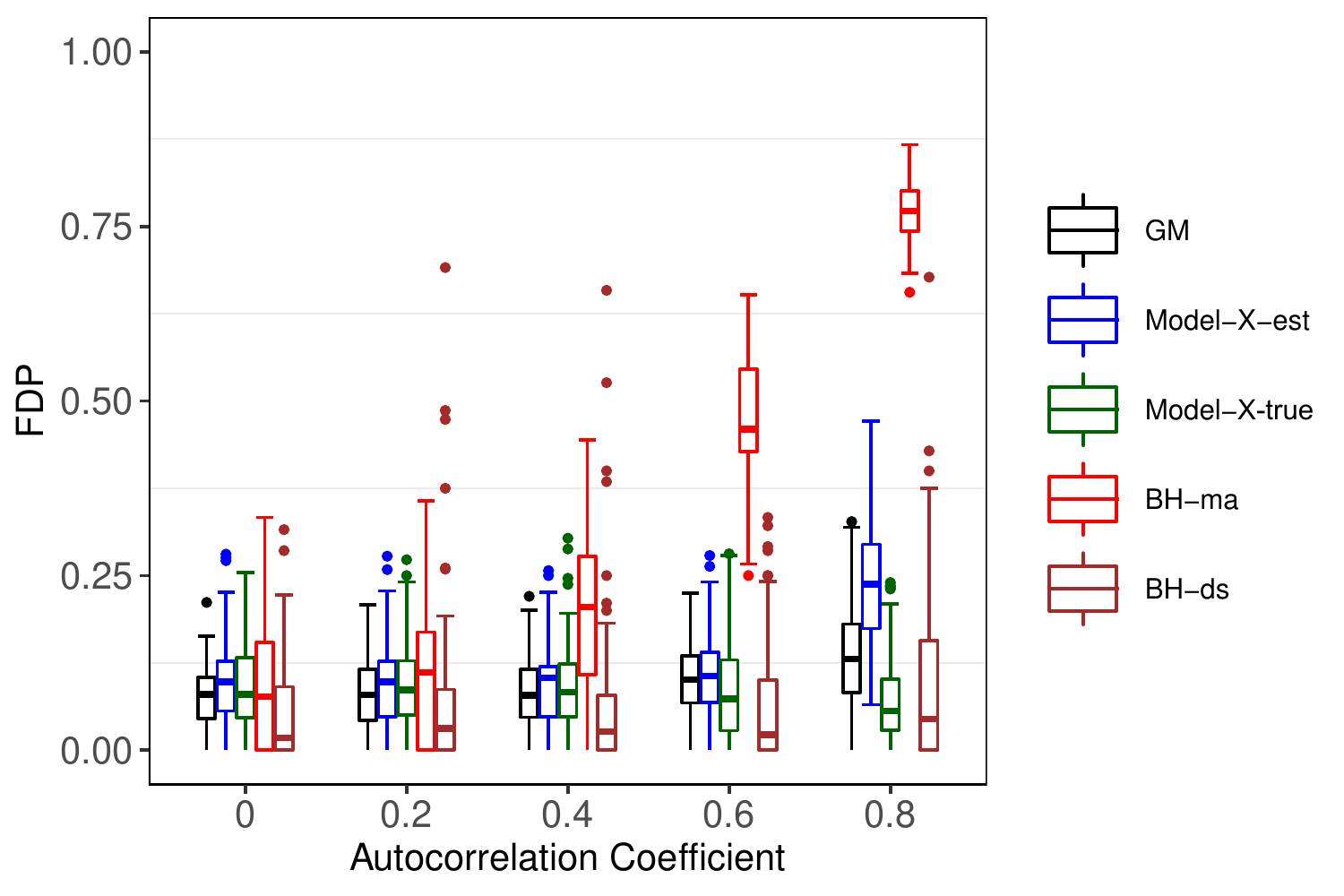}&\includegraphics[width=0.45\textwidth]{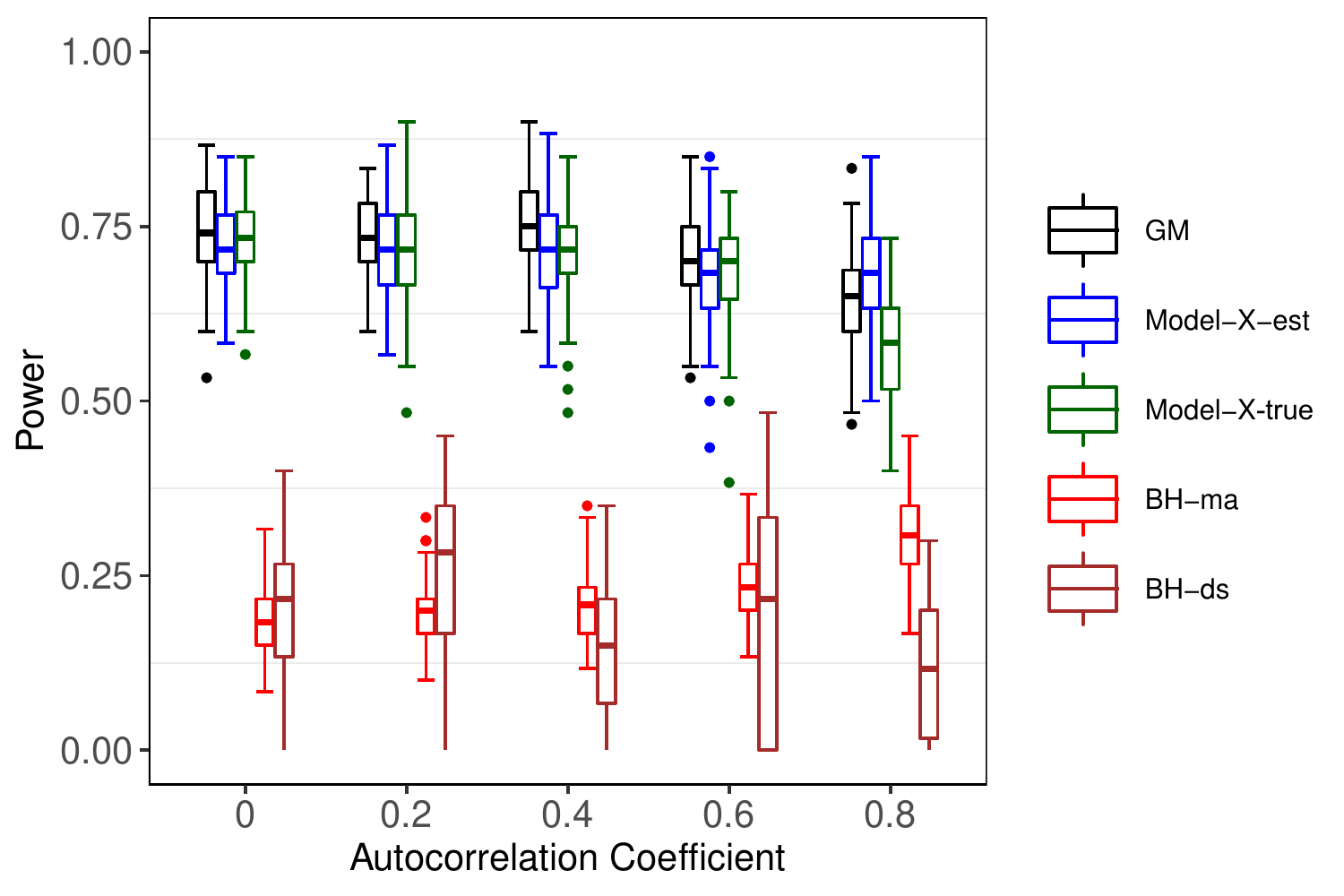} \\ 
    {\hspace{-30pt}\footnotesize (b1) Constant positive correlation }& {\hspace{-30pt}\footnotesize (b2) Constant positive correlation } \\
    \includegraphics[width=0.45\textwidth]{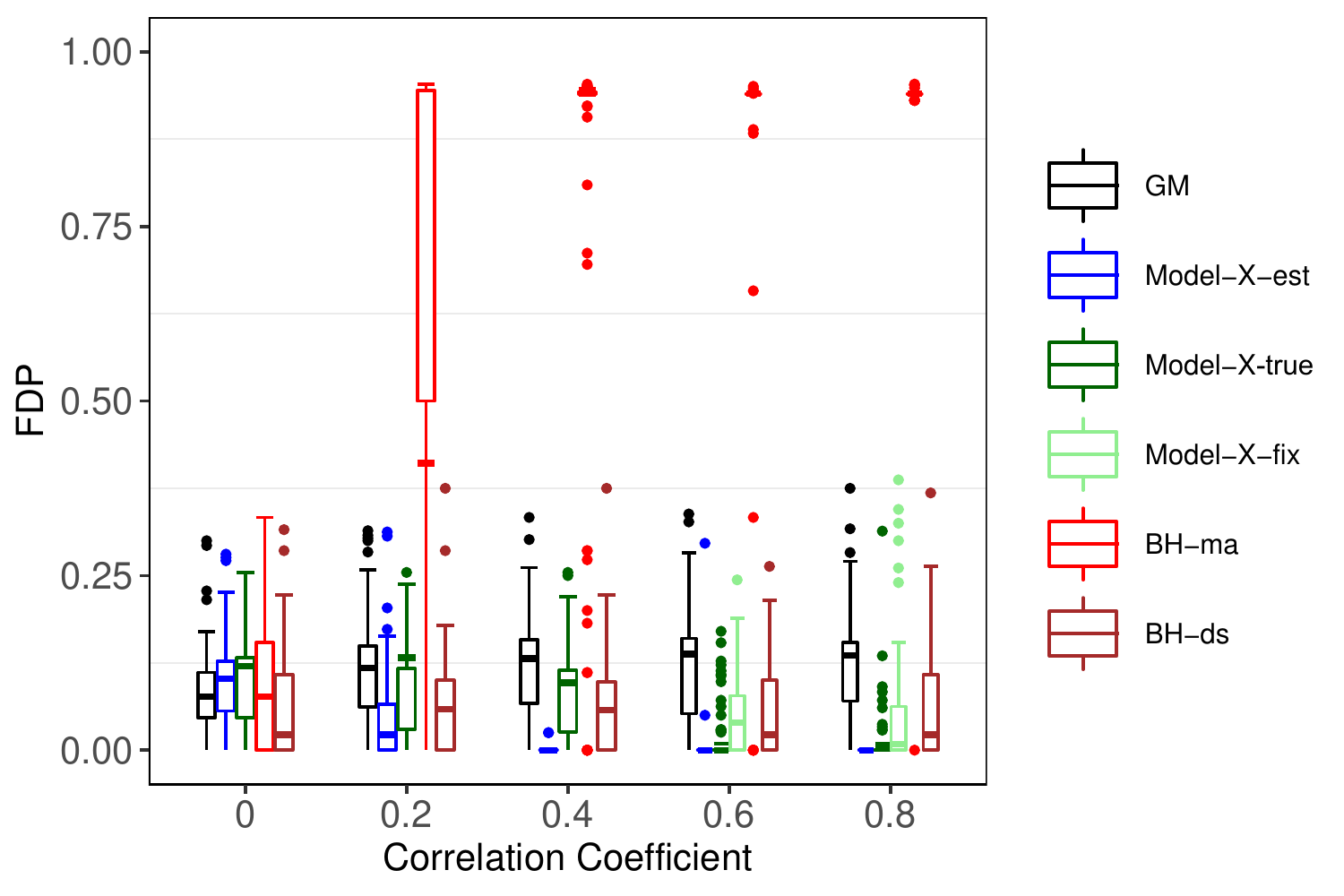}&\includegraphics[width=0.45\textwidth]{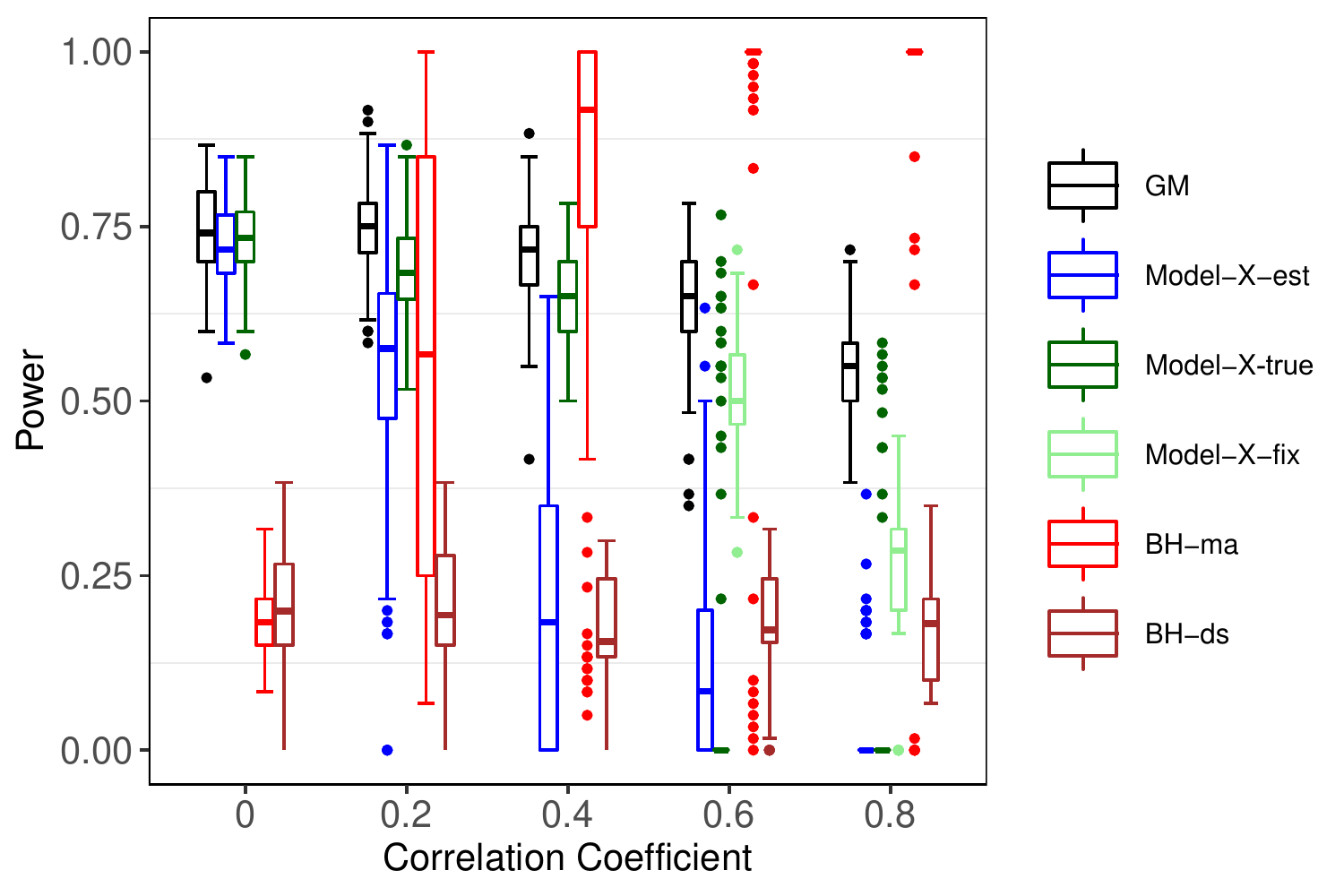} \\
    {\hspace{-30pt}\footnotesize (c1) Constant partial correlation  }& {\hspace{-30pt}\footnotesize (c2) Constant partial correlation } \\
    \includegraphics[width=0.45\textwidth]{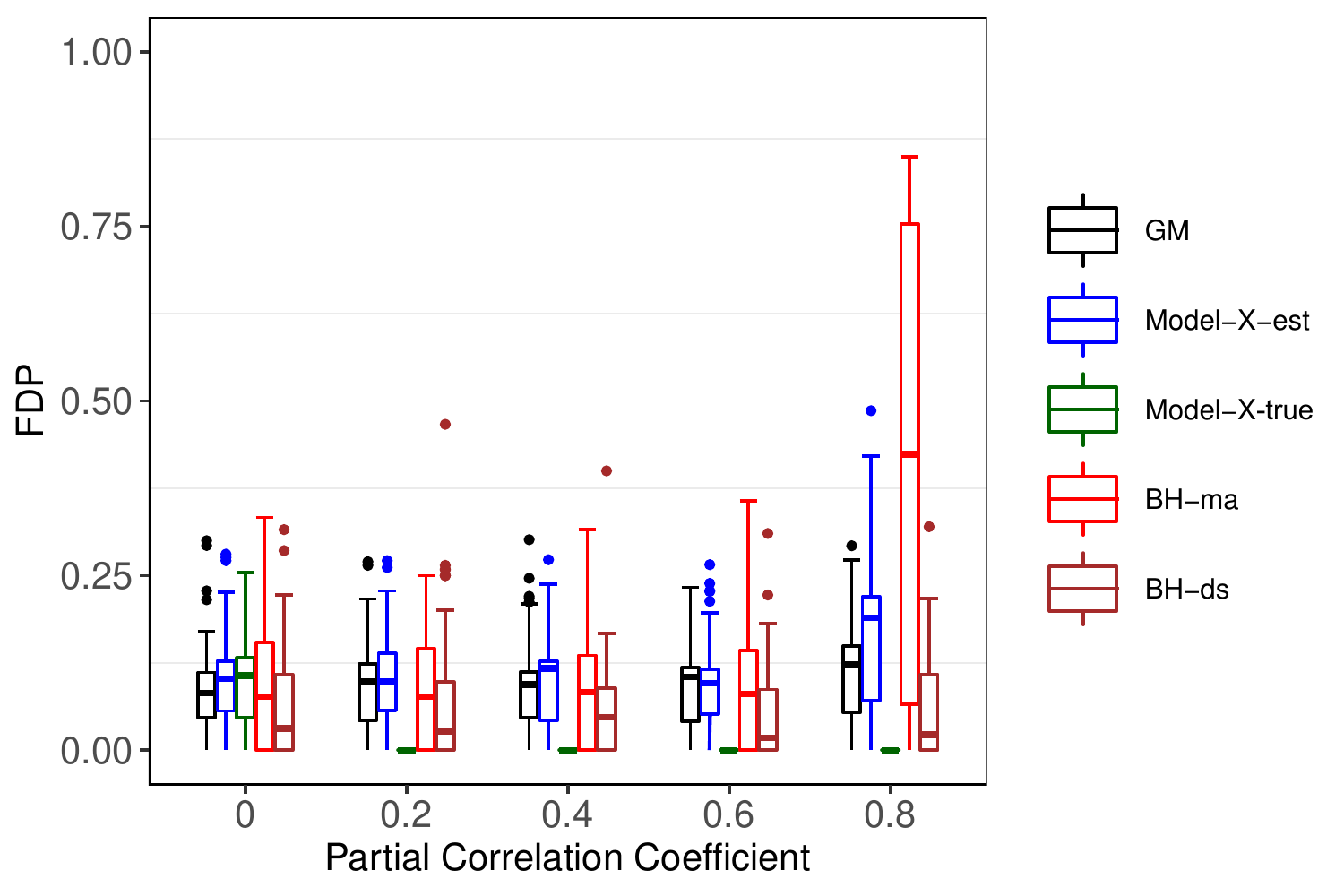}&\includegraphics[width=0.45\textwidth]{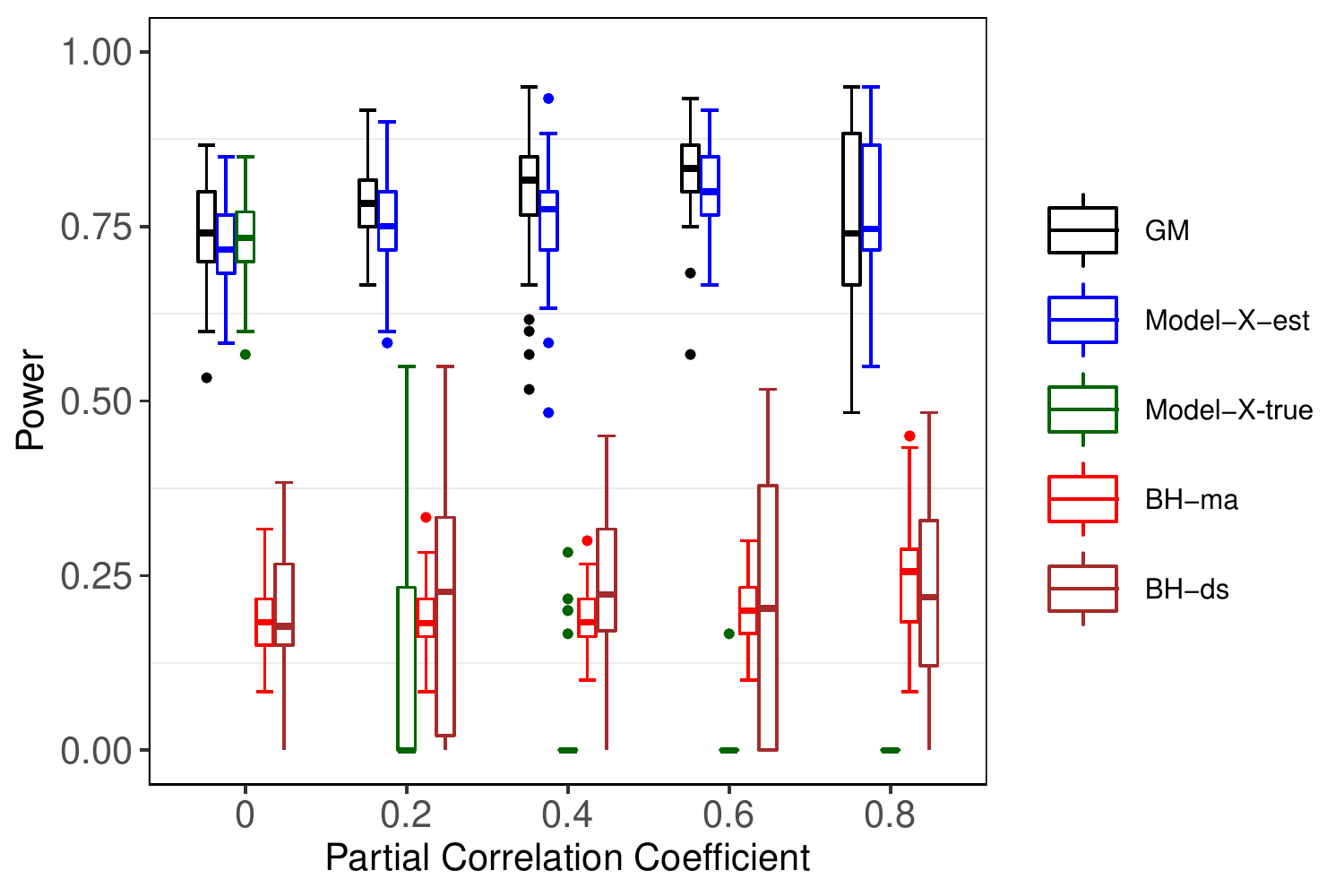}
\end{tabular}
 \caption{The plots of the FDPs and proportions of true rejections for the high-dimensional settings. Notations are the same as those in Figure~\ref{fig:ld}. Note that Model-X-fix, which fixed a previous numerical error of the knockoff package, was applied only to case (b) for $\rho\geq 0.6$.
 }\label{fig:hd}
\end{figure}
{\textbf{(ii) Constant positive correlation.}} As shown in Figure \ref{fig:hd} (b1), GM, BH-ds,  Model-X-true, and Model-X-fix all control their FDR properly around $0.1$, the designated level. Model-X-est is overly conservative with observed FDPs being close to zero when $\rho\geq 0.2$, whereas BH-ma fails to control the FDR especially when $\rho\geq 0.2$. %
When $\rho$ is larger than $0.5$, we apply the same algorithmic modification as in the low-dimensional setting for Model-X, denoted as Model-X-fix, which improves the power.
Figure \ref{fig:hd} (b2) shows proportions of true positives and the power. 
The power of Model-X-est knockoff  decreases rapidly as $\rho$ increases. Model-X-fix has a better power than Model-X-est. The power of BH-ds  is low and does not change much with respect to $\rho$. The powers of GM is stable throughout all $\rho$ values, decreasing  moderately  as $\rho$ increases. 

{\textbf{(iii) Constant partial correlation.}} As shown in Figure \ref{fig:hd} (c1), the FDRs of all the methods except BH-ma are controlled well.   
Both  GM and Model-X-est perform
better than all other methods, with GM having a slight edge.

\subsection{Simulation studies for non-Gaussian designs}

(i) \textbf{T-distribution.} We generate the features $\vx_j$ from a centered multivariate T-distribution with 3 degrees of freedom and the covariance matrix $\Sigma$, which is set to be autoregressive (Toeplitz), i.e. $\sigma_{ij} = \kappa^{|i-j|}$ for $\kappa = 0,0.2,\dots, 0.8$. As shown in Figure \ref{fig:ng} (a1,a2), the GM method controls the FDR at the designated level for all the settings and maintains the highest power among all the methods that control the FDR at the designated level. The FDR of Model-X knockoff is slightly inflated when the correlation is high. The BH method with data splitting shows a large variation in FDRs and has the lowest power. The marginal BH method can not control the FDR at the designated level.

(ii) \textbf{Bi-modal distribution.} The features are  generated from a Gaussian mixture distribution with two components: one centered at $-0.5 \mathbf{1}_p$ and the other at $0.5 \mathbf{1}_p$. The covariance matrices of the two components are the same and are autoregressive as in the previous case. As shown in Figure \ref{fig:ng} (b1,b2), GM  controls the FDR at the designated level for all the settings and has the highest power among all the methods that control the FDR at the designated level. Being conservative with a small FDR, the Model-X knockoff has a lower average power than the GM method. Additionally, the variation of the proportion of true rejections is much higher than that of the GM method. The BH method with data splitting shows a valid FDR control, but it has the lowest power among all these methods. The marginal BH method can not control FDR at the designated level.

\begin{figure}[hp]
\centering 
\begin{tabular}{cc}
    {\hspace{-30pt}\footnotesize (a1) T-distributed design }& {\hspace{-30pt}\footnotesize (a2) T-distributed design} \\
    \includegraphics[width=0.45\textwidth]{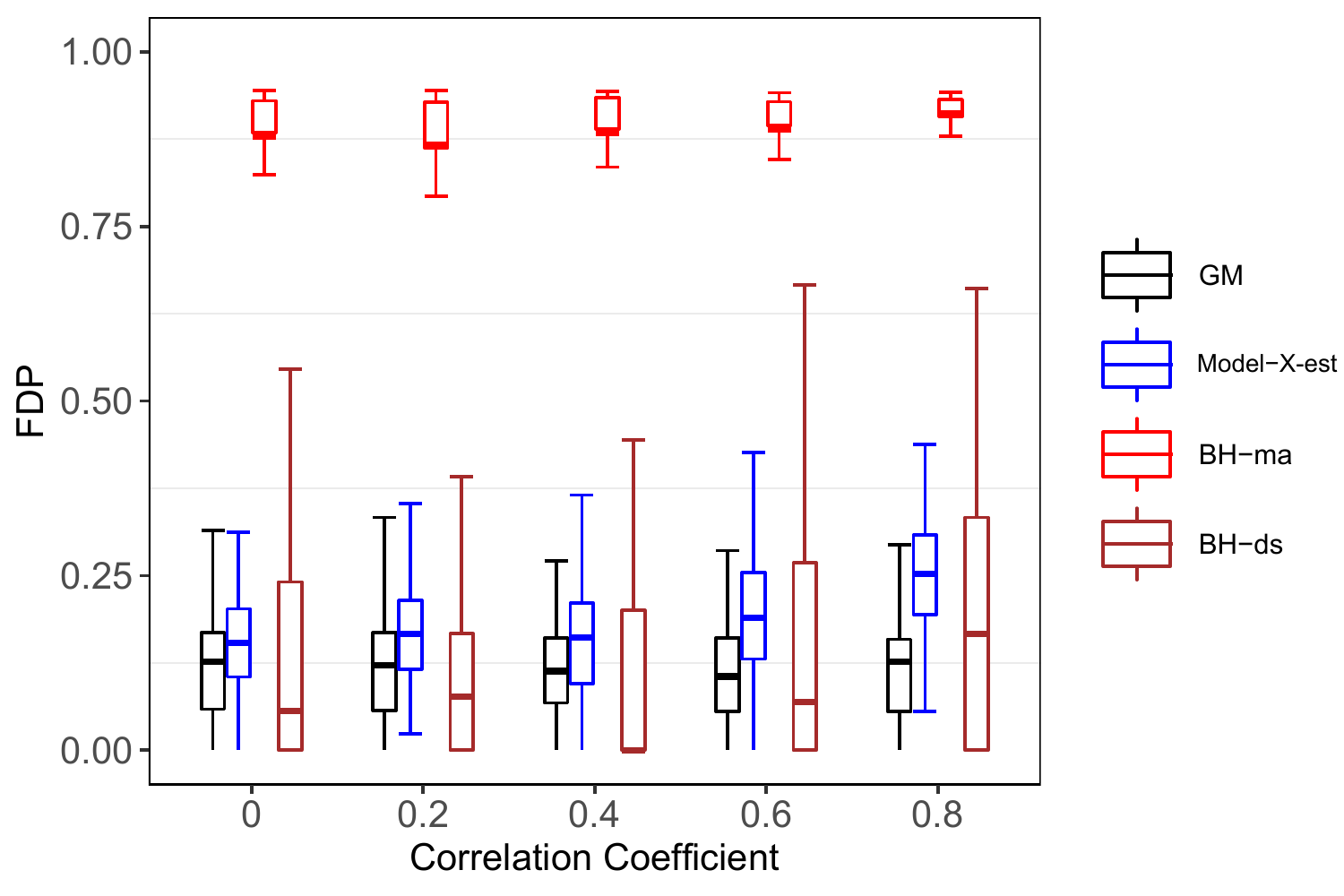}&\includegraphics[width=0.45\textwidth]{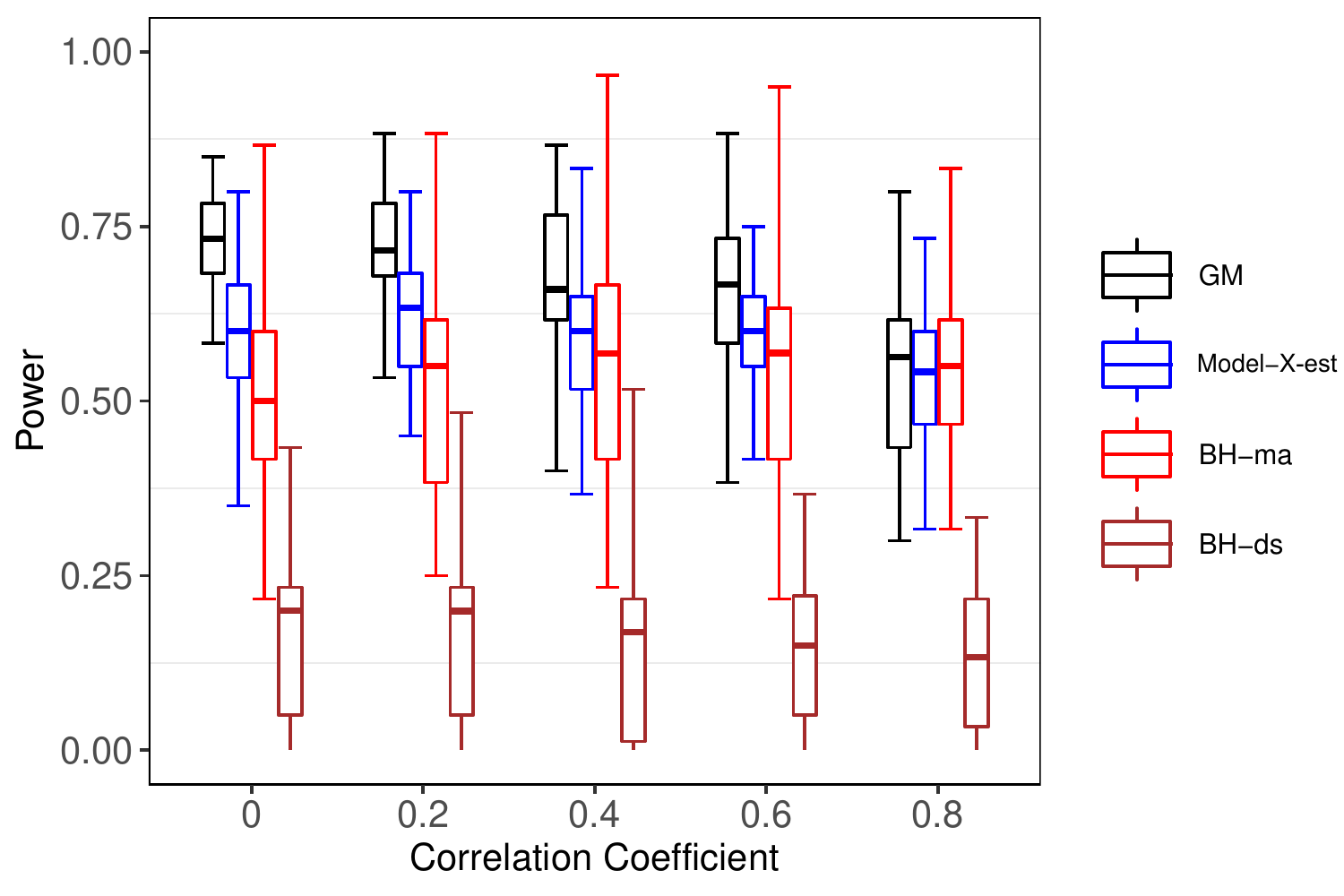} \\ 
    {\hspace{-30pt}\footnotesize (b1) Bimodal design }& {\hspace{-30pt}\footnotesize (b2) Bimodal design } \\
    \includegraphics[width=0.45\textwidth]{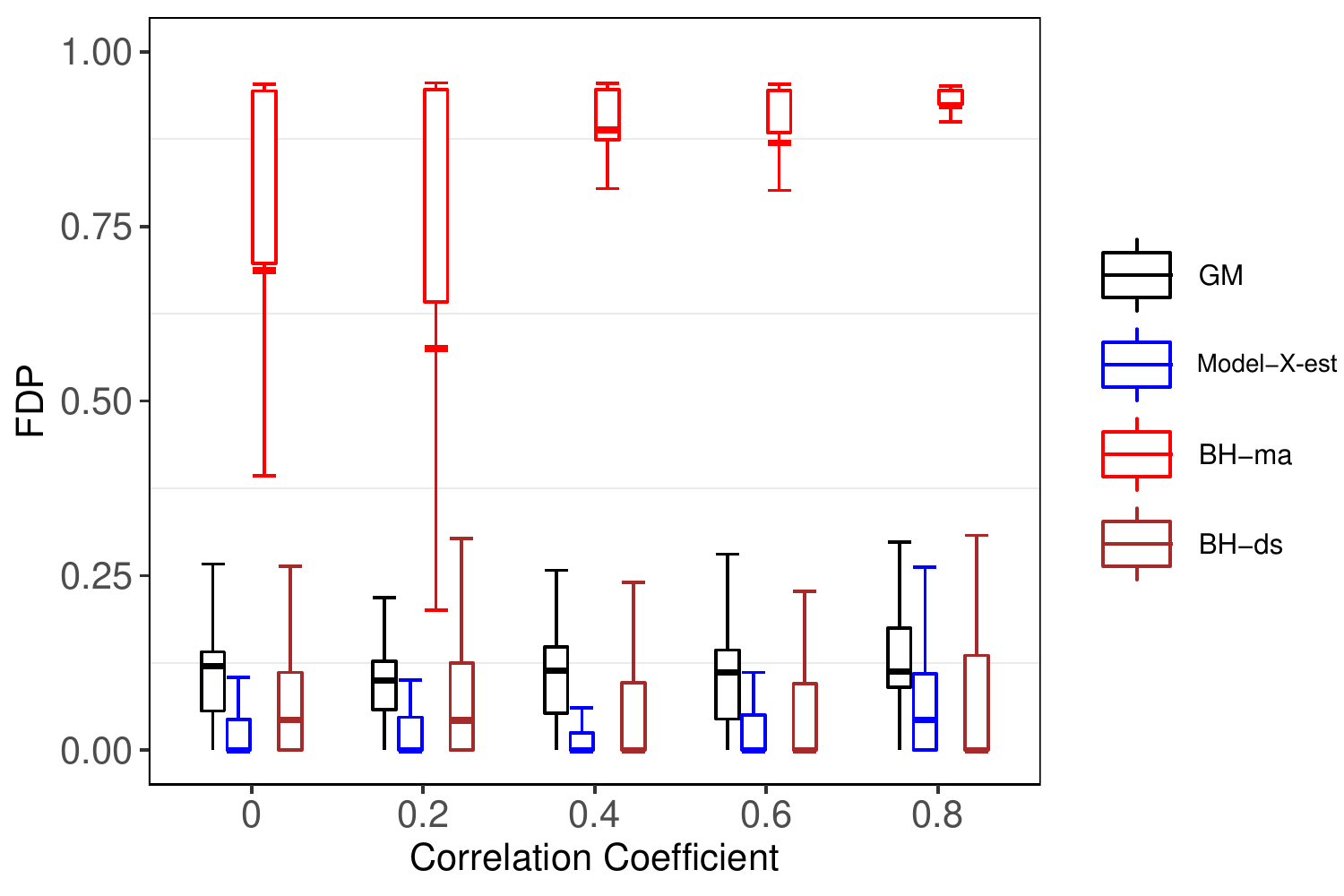}&\includegraphics[width=0.45\textwidth]{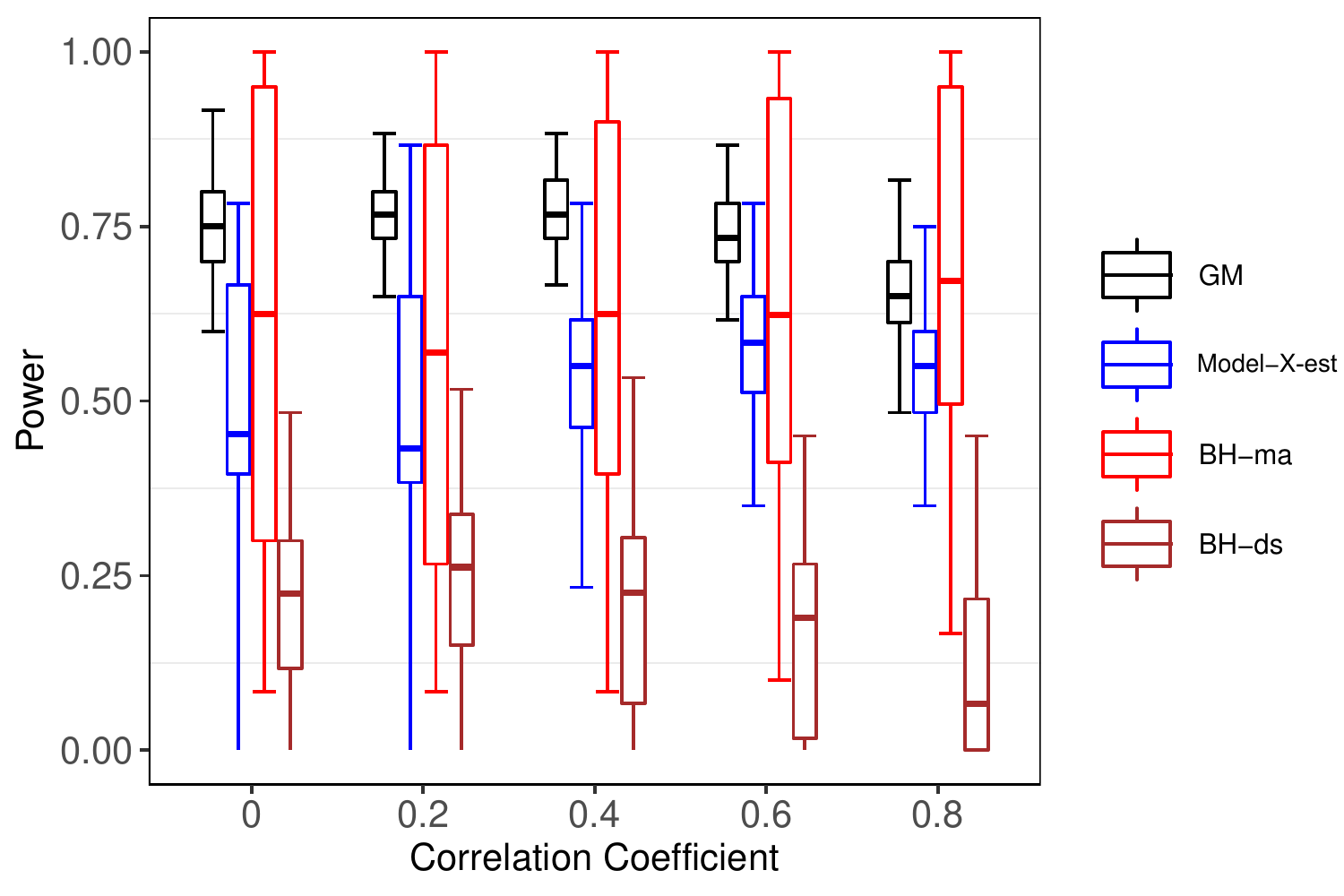} 
\end{tabular}
 \caption{The plots of the FDPs and the proportions of true rejections for non-Gaussian design settings. Notations are the same as those in Figure~\ref{fig:ld}.
 }\label{fig:ng}
\end{figure}

\subsection{Simulation studies based on a population genetics data set}
Genome wide association studies (GWAS) have become an attractive tool for genetic research. In these studies,  researchers examine a genome-wide set (tens of thousands to millions) of genetic variants of a group of individuals randomly selected from a target population to see if any variants are associated with a phenotype of interest.  Genetic variants are usually  in the form of  single nucleotide polymorphisms (SNPs), and are often used as covariates in a linear or logistic regression model, with a  main goal being to select relevant variants. Since most SNPs  take on only three values,  $\{0,1,2\}$ (representing 0, 1, or 2 minor allele mutations, respectively), the design matrix of such a SNP-based regression model is clearly non-Gaussian.

\begin{figure}[ht!]
\centering 
\begin{tabular}{ccc}
    \includegraphics[width=0.5\textwidth]{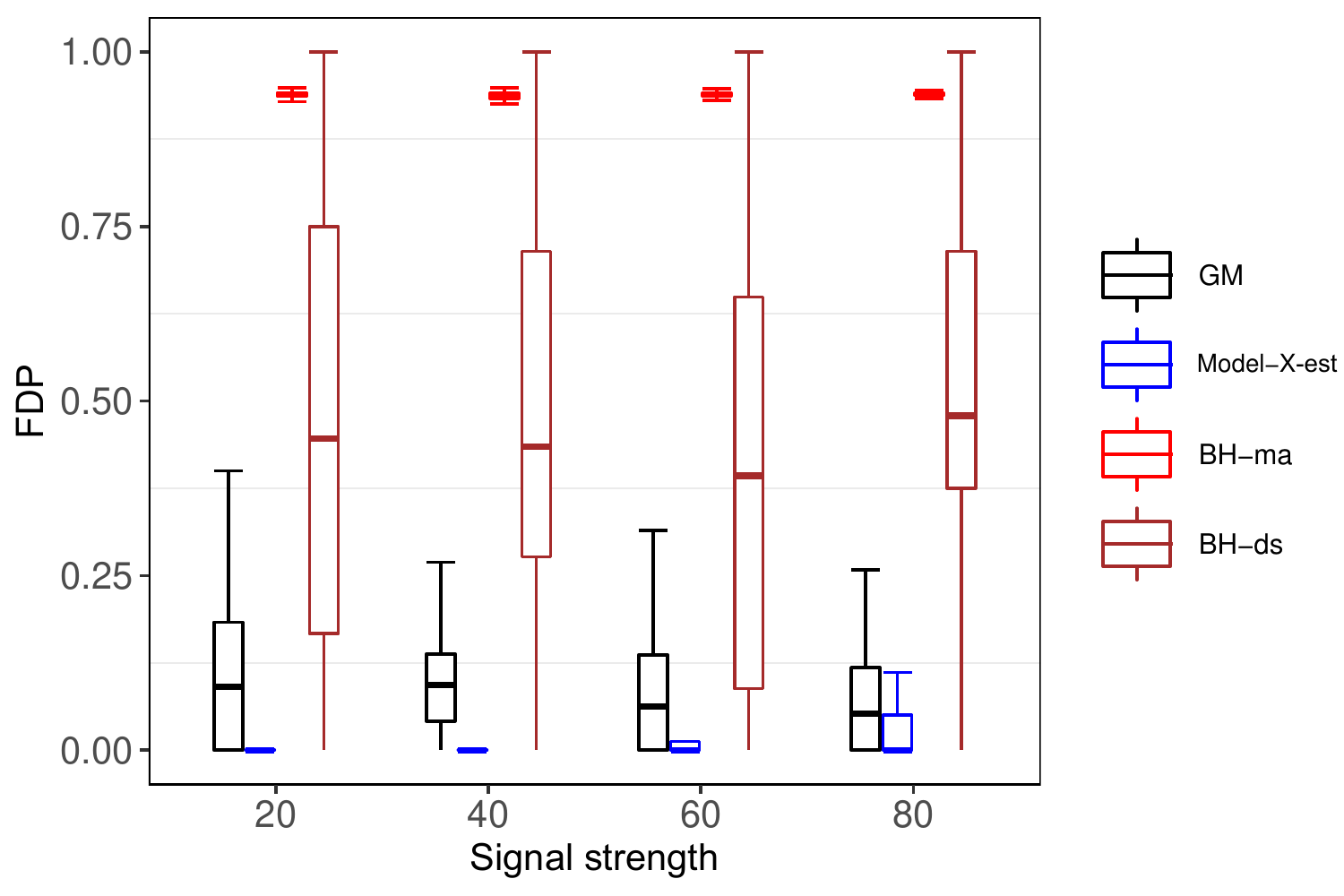}&\includegraphics[width=0.5\textwidth]{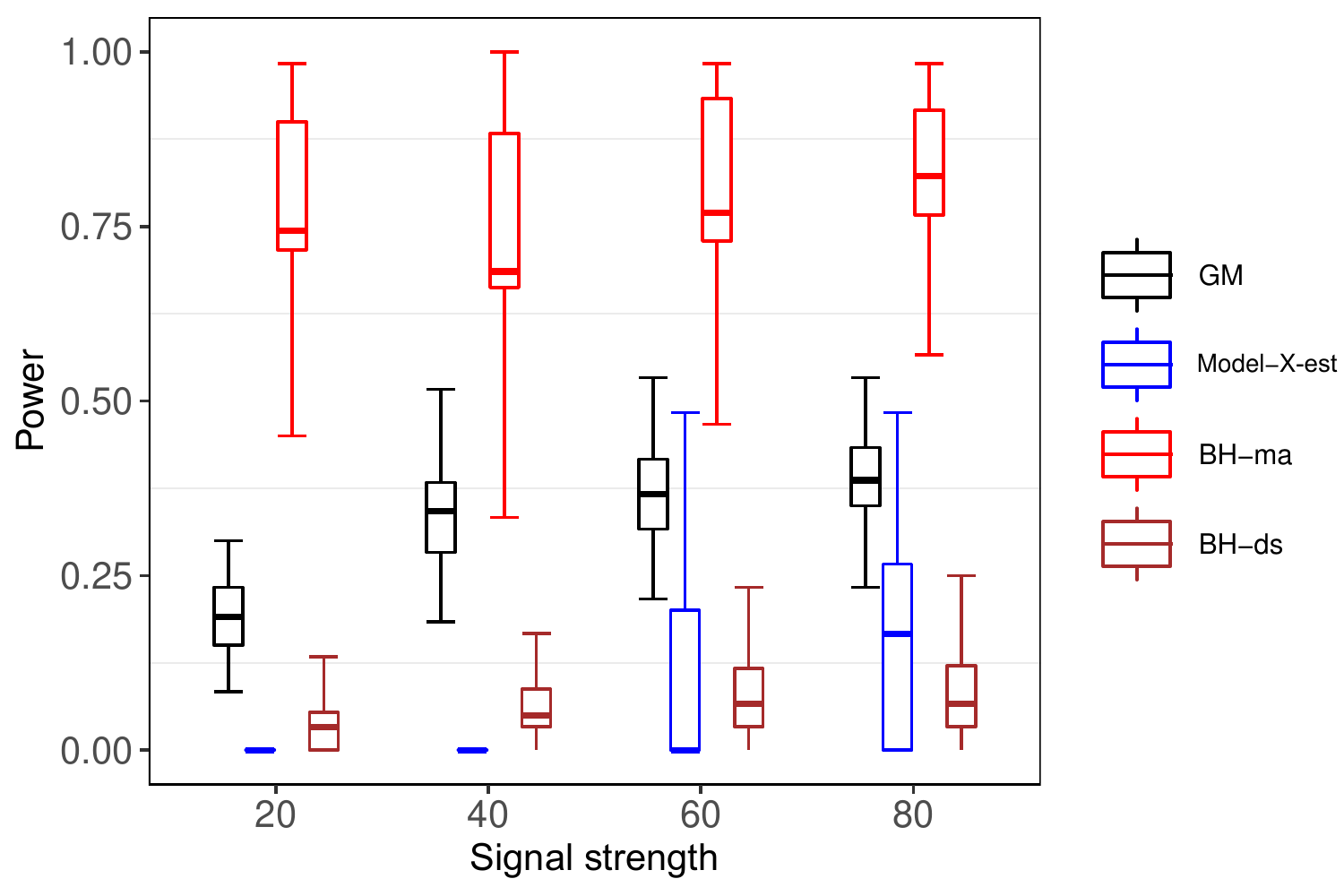}
\end{tabular}
 \caption{The plots of the FDPs and the proportions of true rejections of various methods for the GWAS-based design matrix. %
 }\label{fig:real}
\end{figure}

We consider a panel of 292 tomato accessions in \cite{bauchet2017use}, which is publicly available at \url{ftp://ftp.solgenomics.net/manuscripts/
Bauchet_2016/}  and includes breeding materials (specimens) characterized by $>11,000$ SNPs. 
 Here we are interested in examining the FDR and power of GM, BH, and Model-X knockoff  by using this real-data set to create realistic design matrices. Specifically, we randomly select $1000$ SNPs as $\vX$ and randomly generate  $60$ nonzero regression coefficients from  $N(0, c^2/n)$, where $c$ ranges from $20$ to $80$ representing signal strength.  The response variable $\vy$ is generated from Eq (\ref{eqn:lm}) with a standard Gaussian noise. We set the target FDR level as $10\%$ and calculate the FDR and power  based on $100$ replications.

As shown in Figure \ref{fig:real}, we observe that GM controls the FDR at the designated level properly. Model-X is very conservative with its FDR very close to zero. Consequently, the power of  Model-X is much lower than that of  GM. The BH methods (both the marginal regression and the data splitting version) fail to control the FDR  because of high correlations among the randomly selected SNPs. As a consequence, the high power of BH-ma is not scientifically meaningful in this case.

\subsection{Empirical results on estimating the expected number of FDs }
We evaluate the performance of the GM method for estimating 
the expected number of FDs in a top-$k$ list, denoted as $\EE[FD(k)]$,
and the coverage frequency of its bootstrap confidence interval  proposed in Section~\ref{sec:var:fdp}. 
We consider both low-dimensional  $(n=1000, p=300)$ and high-dimensional settings $(n=300, p=1000)$, in which we randomly generate $60$ nonzero coefficients independently from $N(0,(20/\sqrt{n})^2)$.
We consider three types of design matrices similar to those described in Sections \ref{sec:sim:low:p} and \ref{sec:sim:high:p}: autoregressive (Toeplitz), constant correlation, and constant partial correlation. The response $\by$ is generated according to Eq (\ref{eqn:lm}) with $\sigma=1$.  We fix $\kappa=\rho=\tau=0.2$ for these settings, and repeat $100$ times for each setting.  

In each replication, we obtain an estimate of $\EE[FD(k)]$ with $k\in(50, 70)$. For the $r$th replication ($1\leq r \leq 100$), we calculate  $\widehat{FD}^{(r)}(k)$  following (\ref{eq:est:number:fdr}), and 
record the underlying true number of false discoveries as  $FD^{(r)}(k)$.
We use the sample average  $\tilde{\EE}[FD(k)]:= \frac{1}{100}\sum_{r=1}^{100} FD^{(r)}(k)$ as an approximation to $\EE[FD(k)]$ (note that each $FD^{(r)}(k)$ is the true number of FDs for replication $r$ and is only available in simulations), and use
$\tilde{\EE}[\widehat{FD}(k)] := \frac{1}{100}\sum_{r=1}^{100} \widehat{FD}^{(r)}(k)$ as an approximation of $\EE[\widehat{FD}(k)]$, which should ideally track the value of $\EE[FD(k)]$.

\begin{figure}[hp!]
\centering 
\begin{tabular}{cc}
   {\hspace{-30pt}\footnotesize (a1) Power decay auto-correlation }& {\hspace{-30pt}\footnotesize (a2) Power decay auto-correlation} \\
    \includegraphics[width=0.5\textwidth]{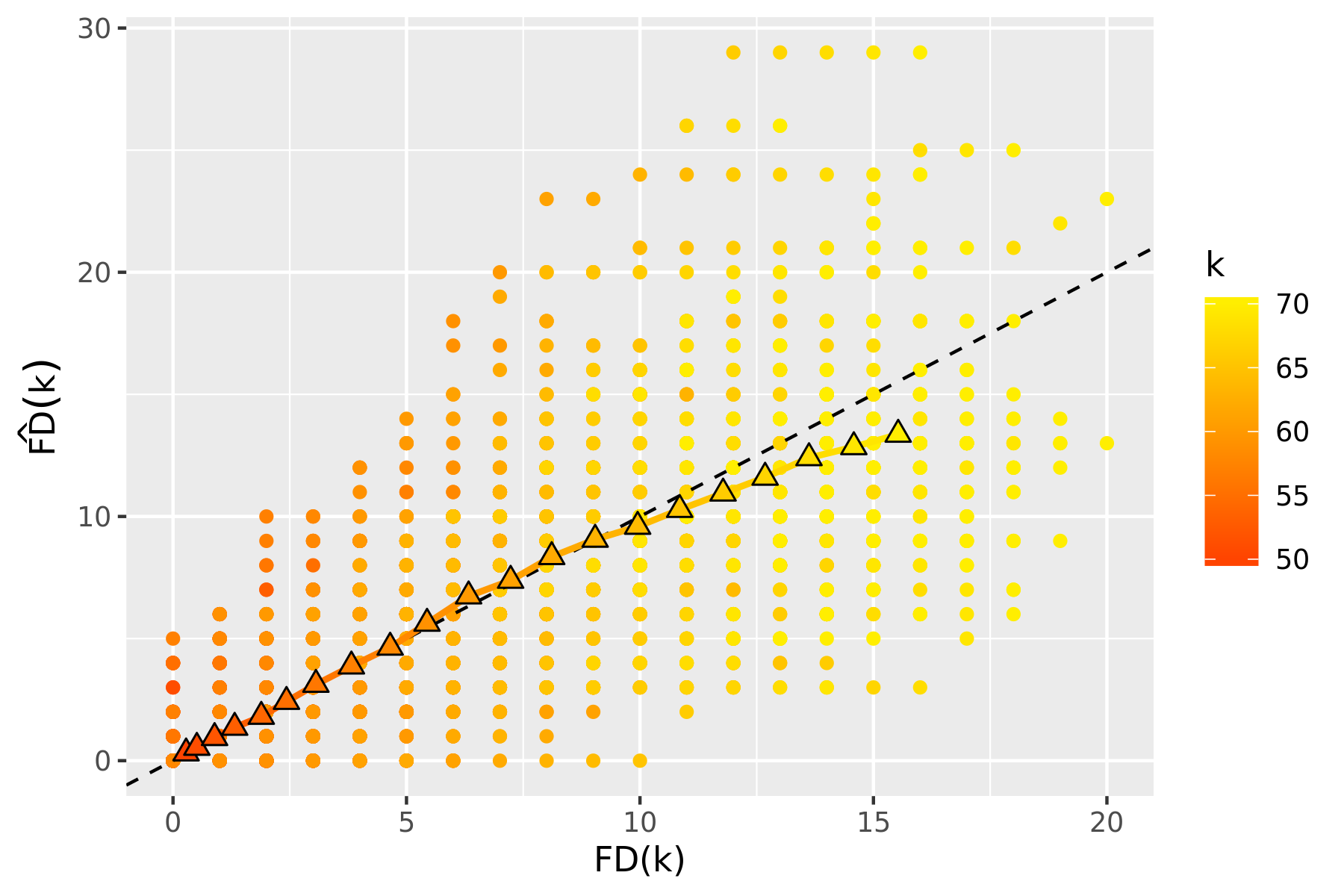}&\includegraphics[width=0.5\textwidth]{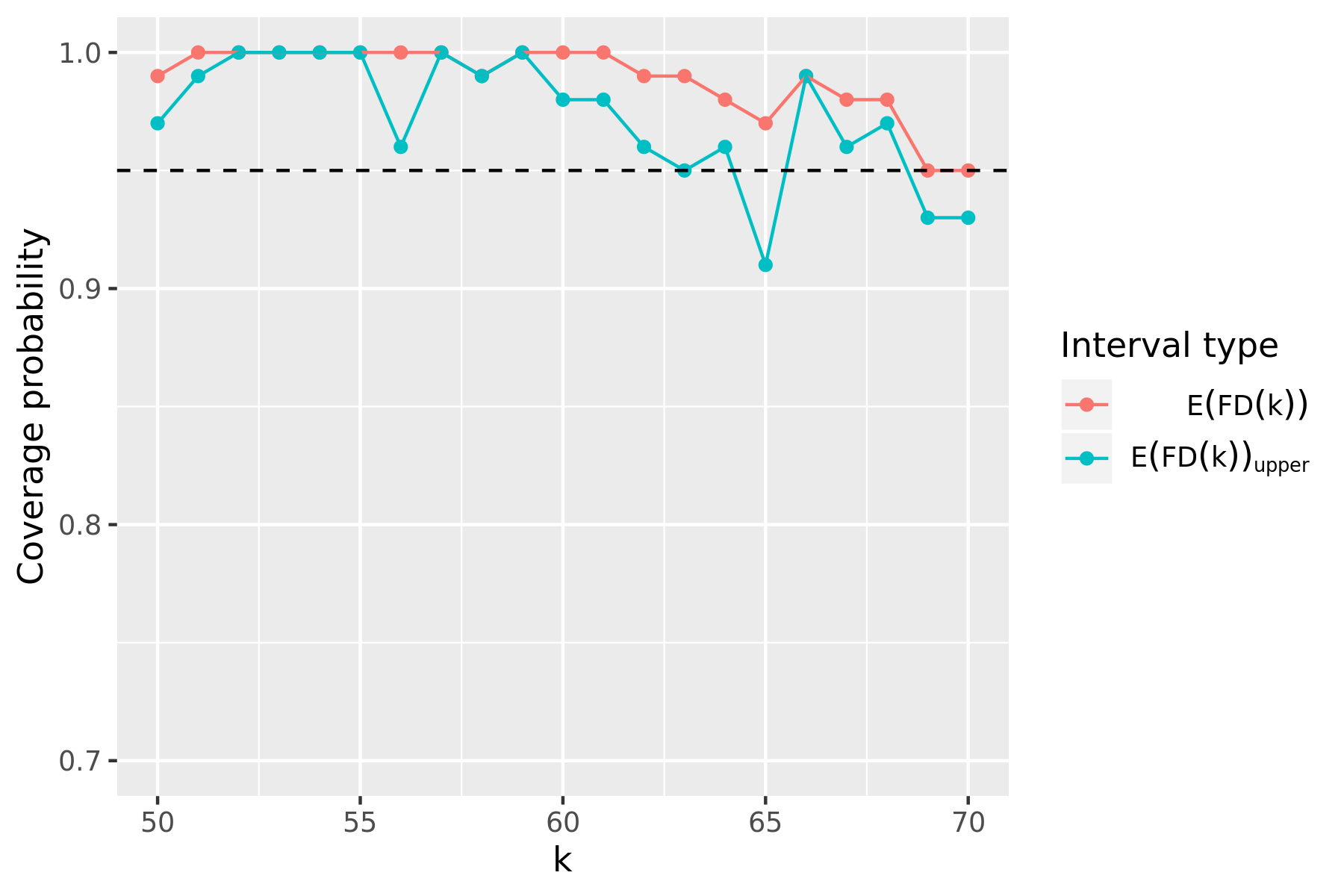} \\ 
    {\hspace{-30pt}\footnotesize (b1) Constant positive correlation }& {\hspace{-30pt}\footnotesize (b2) Constant positive correlation } \\
    \includegraphics[width=0.5\textwidth]{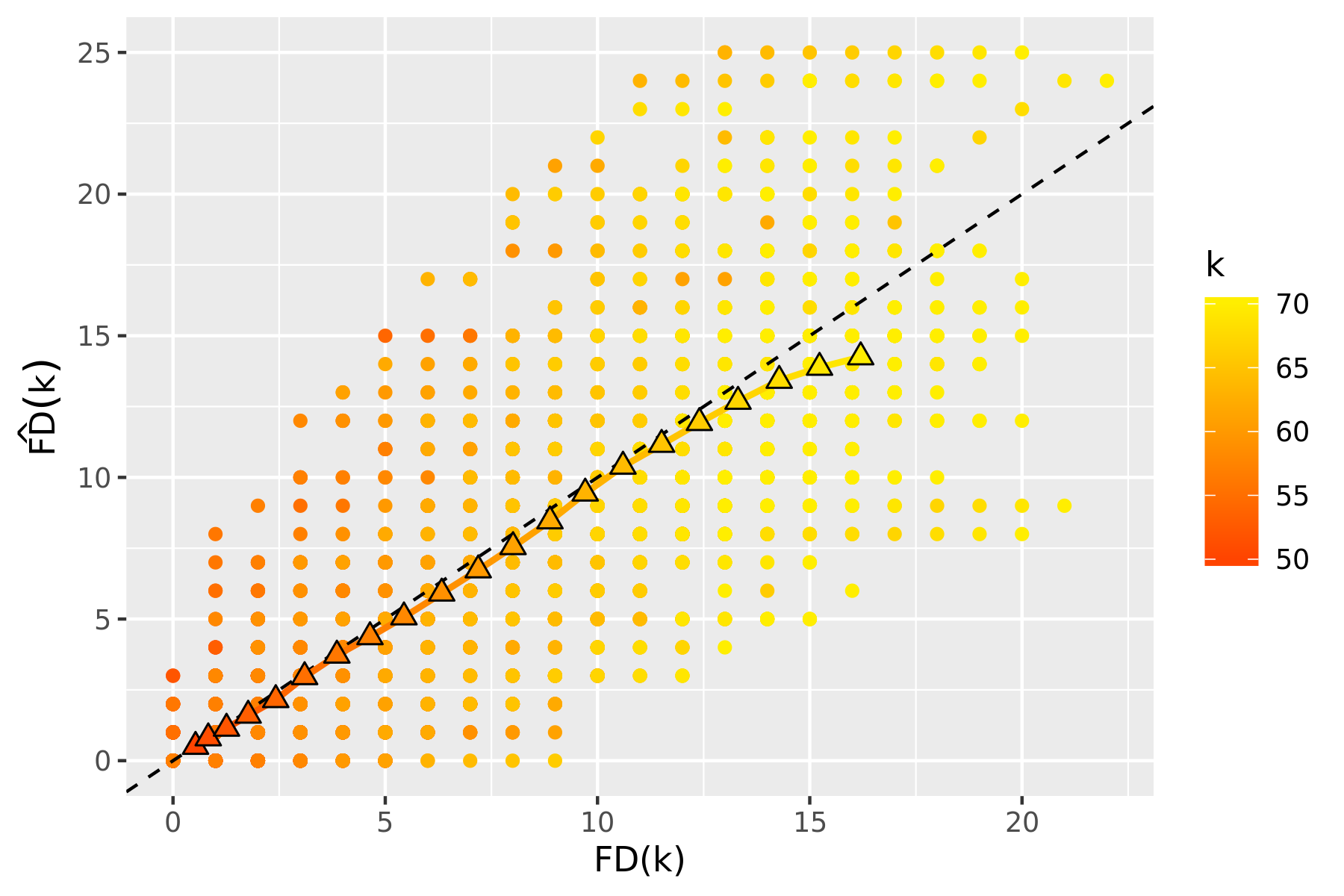}&\includegraphics[width=0.5\textwidth]{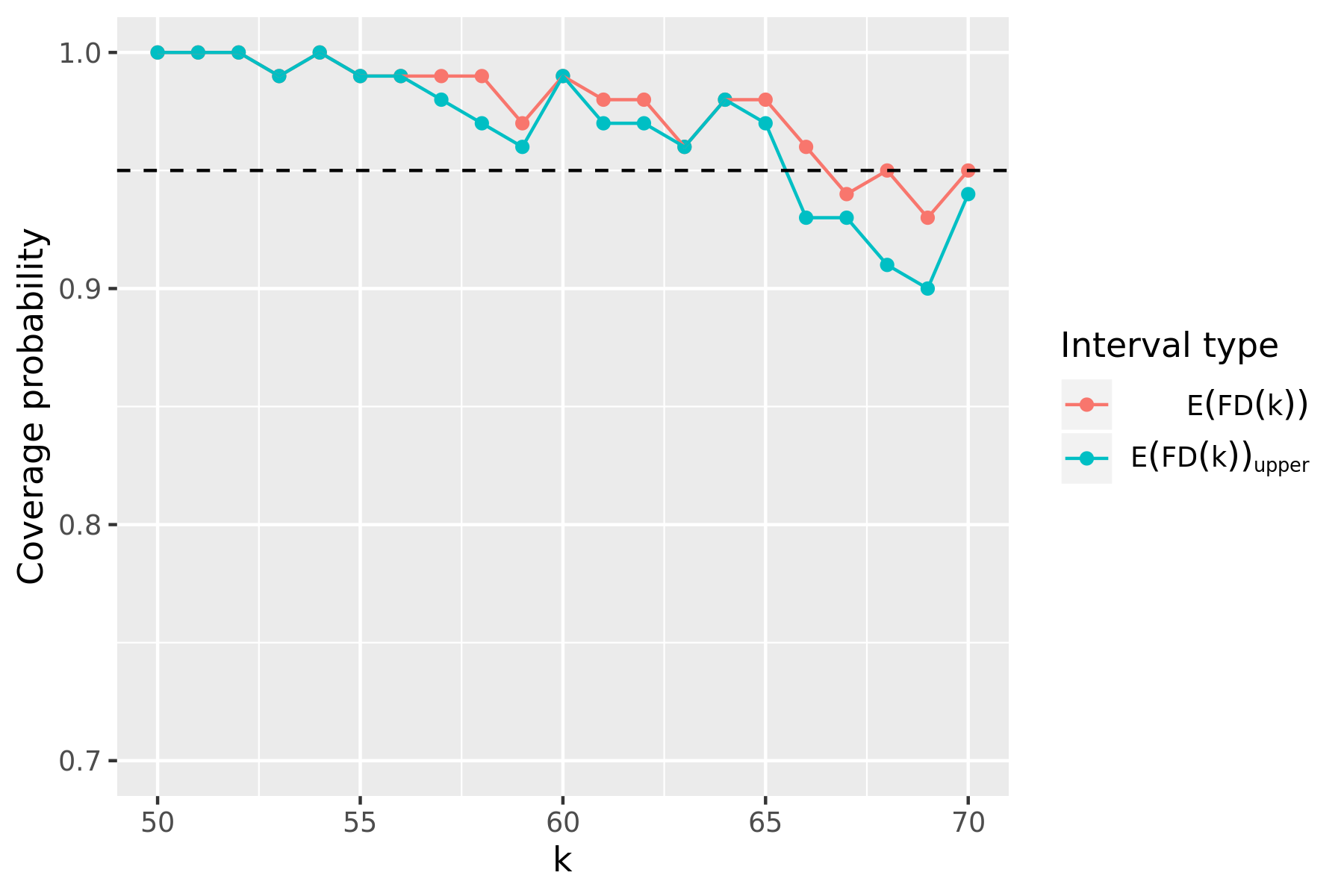}\\
    {\hspace{-30pt}\footnotesize (c1) Constant partial correlation  }& {\hspace{-30pt}\footnotesize (c2) Constant partial correlation } \\
    \includegraphics[width=0.5\textwidth]{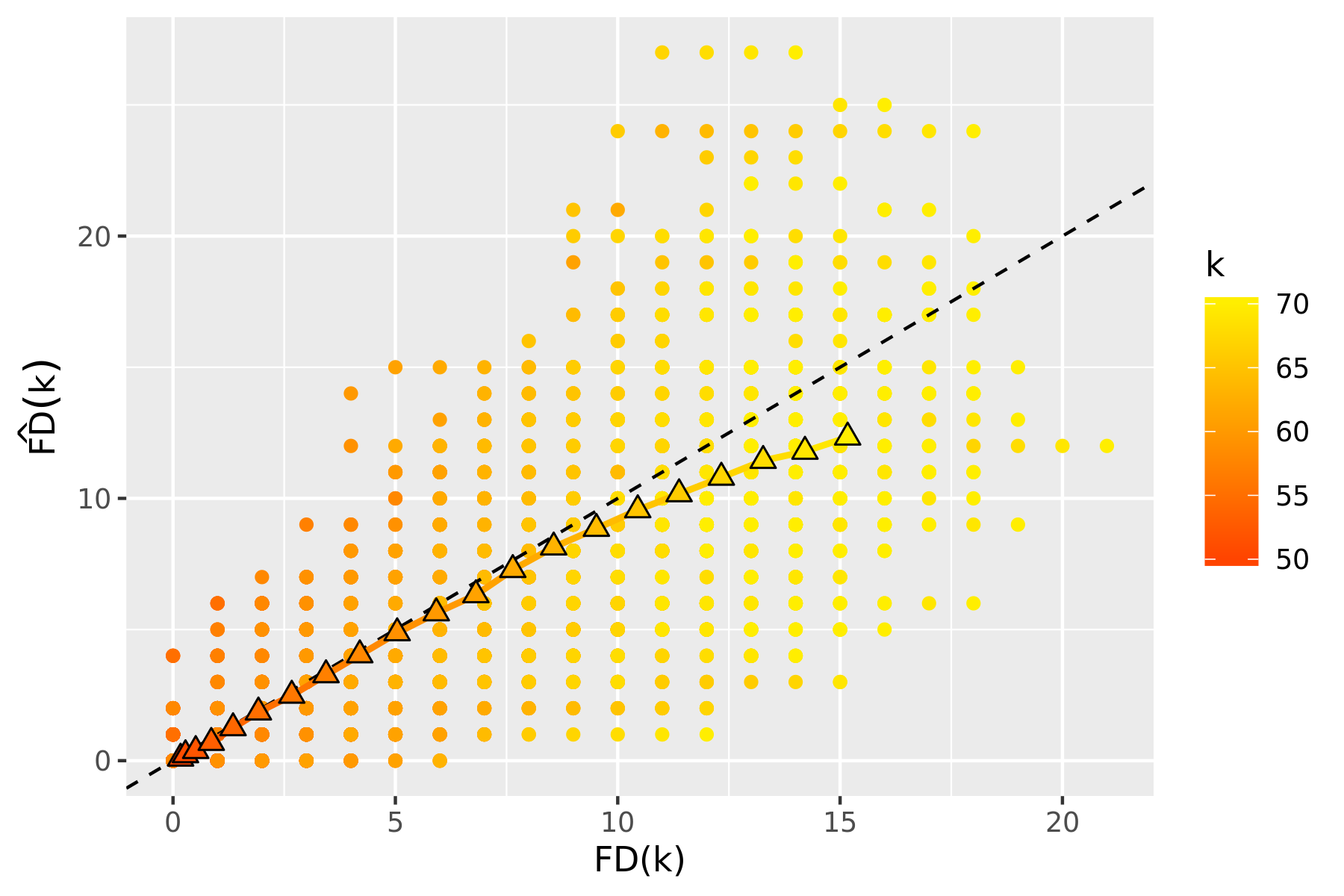}&\includegraphics[width=0.5\textwidth]{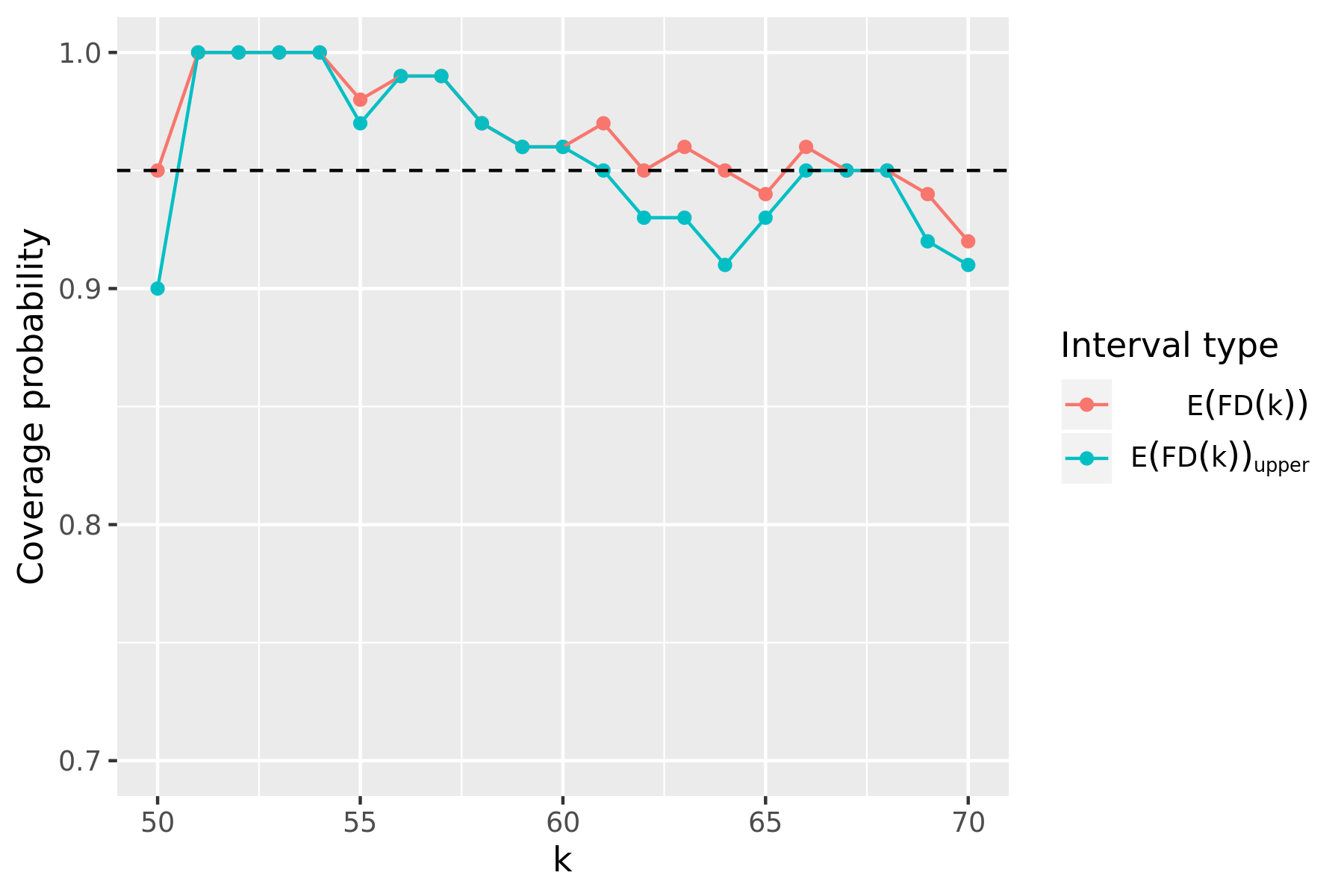}
\end{tabular}
 \caption{The low-dimensional case with $n$=1000, and $p$=300. Left panels: scatter plots of  $\widehat{FD}(k)$ versus $FD(k)$ for $k\in(50,70)$, under three correlation structures of the design matrix, respectively. The triangles are $\tilde{\EE}[FD(k)]$ versus $\tilde{\EE}[\widehat{FD}(k)]$ in $100$ replications, and the dashed line is $x=y$. Right panels:  the coverage frequencies of the  $95\%$ bootstrap confidence interval and confidence upper bound of  $\EE[FD(k)]$ (approximated by $\tilde{\EE}[FD(k)]$), respectively, for $k$ ranging from $50$ to $70$. The color shading represents different values of $k$.}\label{fig:bootld}
\end{figure}

\begin{figure}[hp!]
\centering 
\begin{tabular}{cc}
   {\hspace{-30pt}\footnotesize (a1) Power decay auto-correlation }& {\hspace{-30pt}\footnotesize (a2) Power decay auto-correlation} \\
    \includegraphics[width=0.5\textwidth]{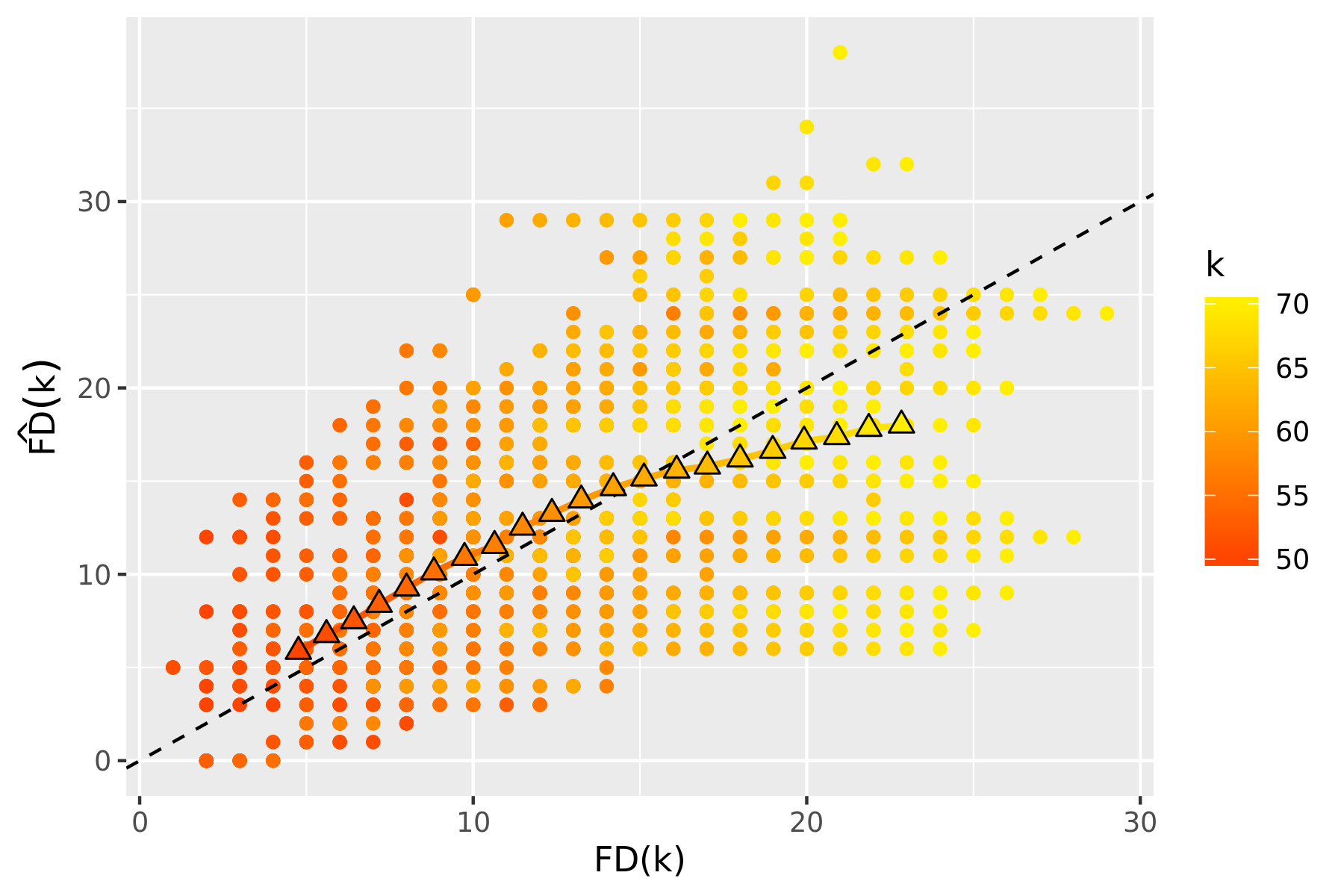}&\includegraphics[width=0.5\textwidth]{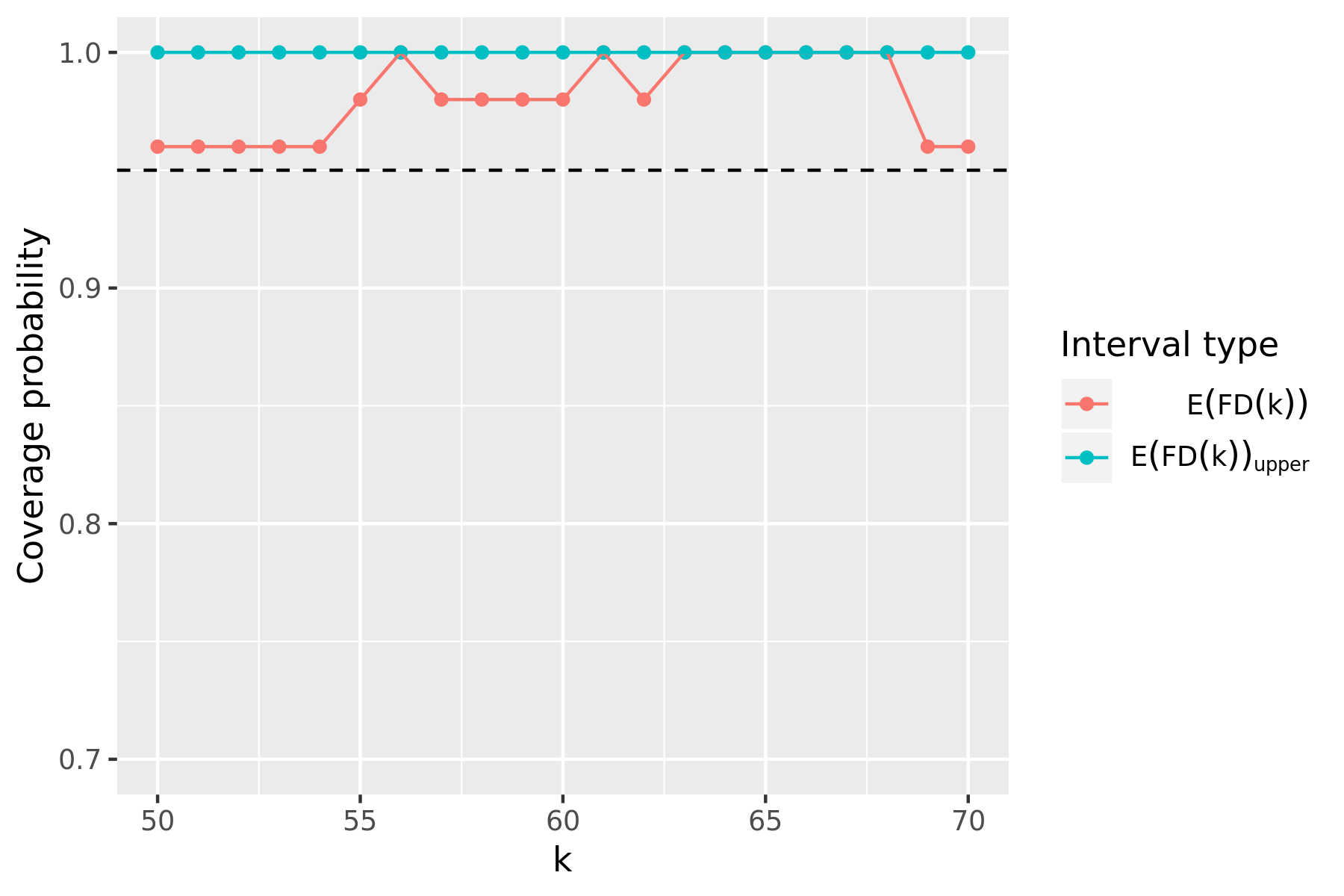} \\ 
    {\hspace{-30pt}\footnotesize (b1) Constant positive correlation }& {\hspace{-30pt}\footnotesize (b2) Constant positive correlation } \\
    \includegraphics[width=0.5\textwidth]{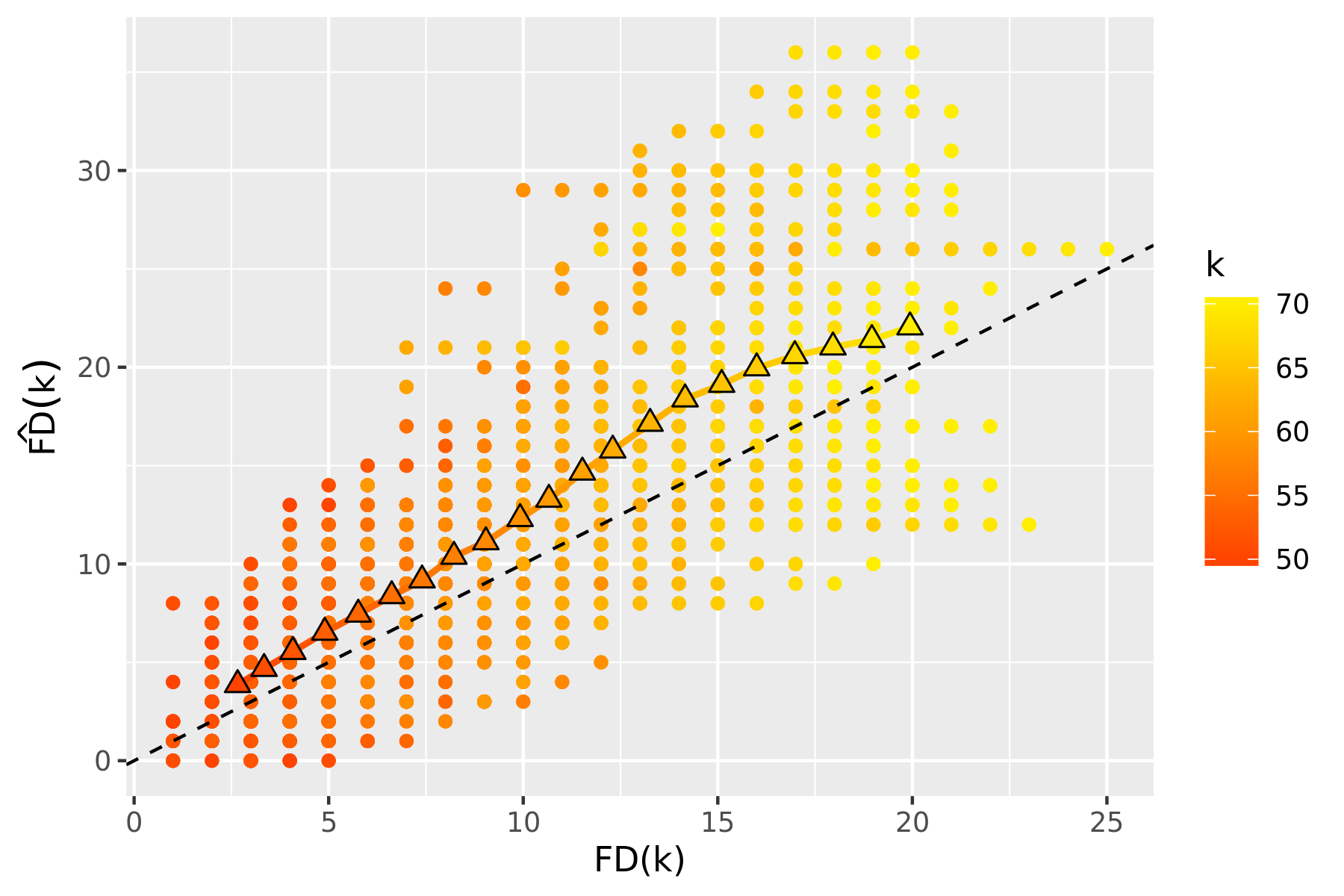}&\includegraphics[width=0.5\textwidth]{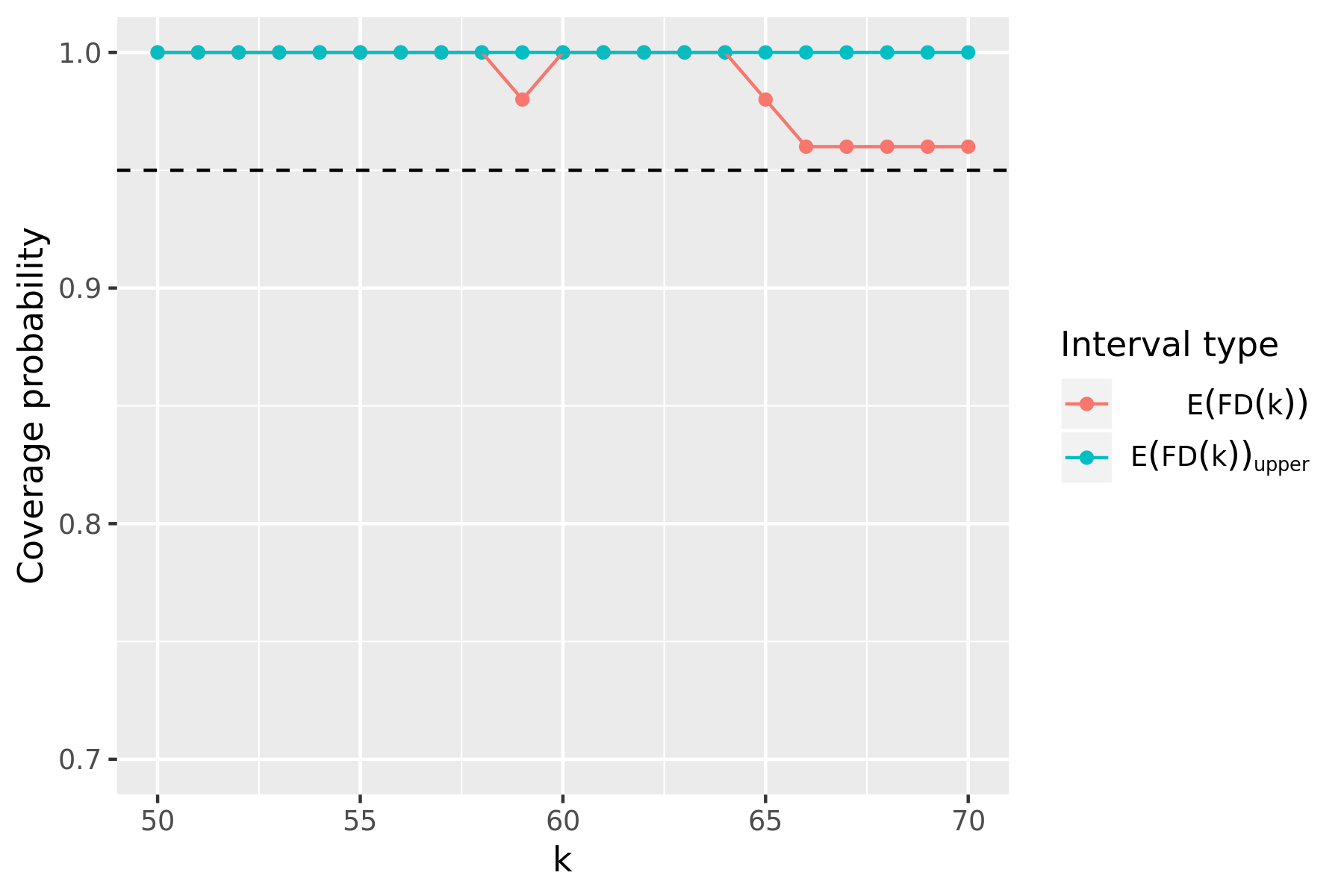}\\
    {\hspace{-30pt}\footnotesize (c1) Constant partial correlation  }& {\hspace{-30pt}\footnotesize (c2) Constant partial correlation } \\
    \includegraphics[width=0.5\textwidth]{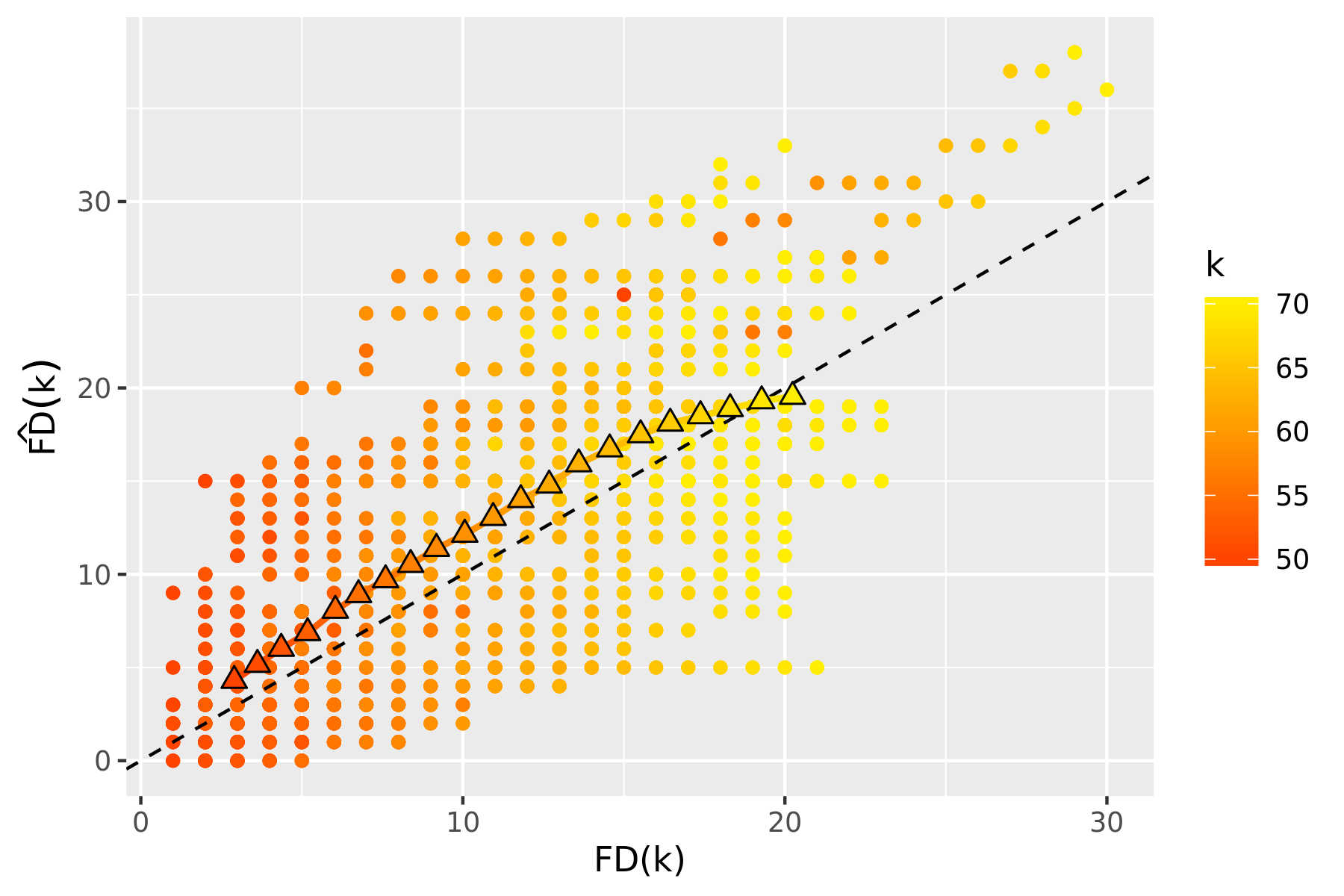}&\includegraphics[width=0.5\textwidth]{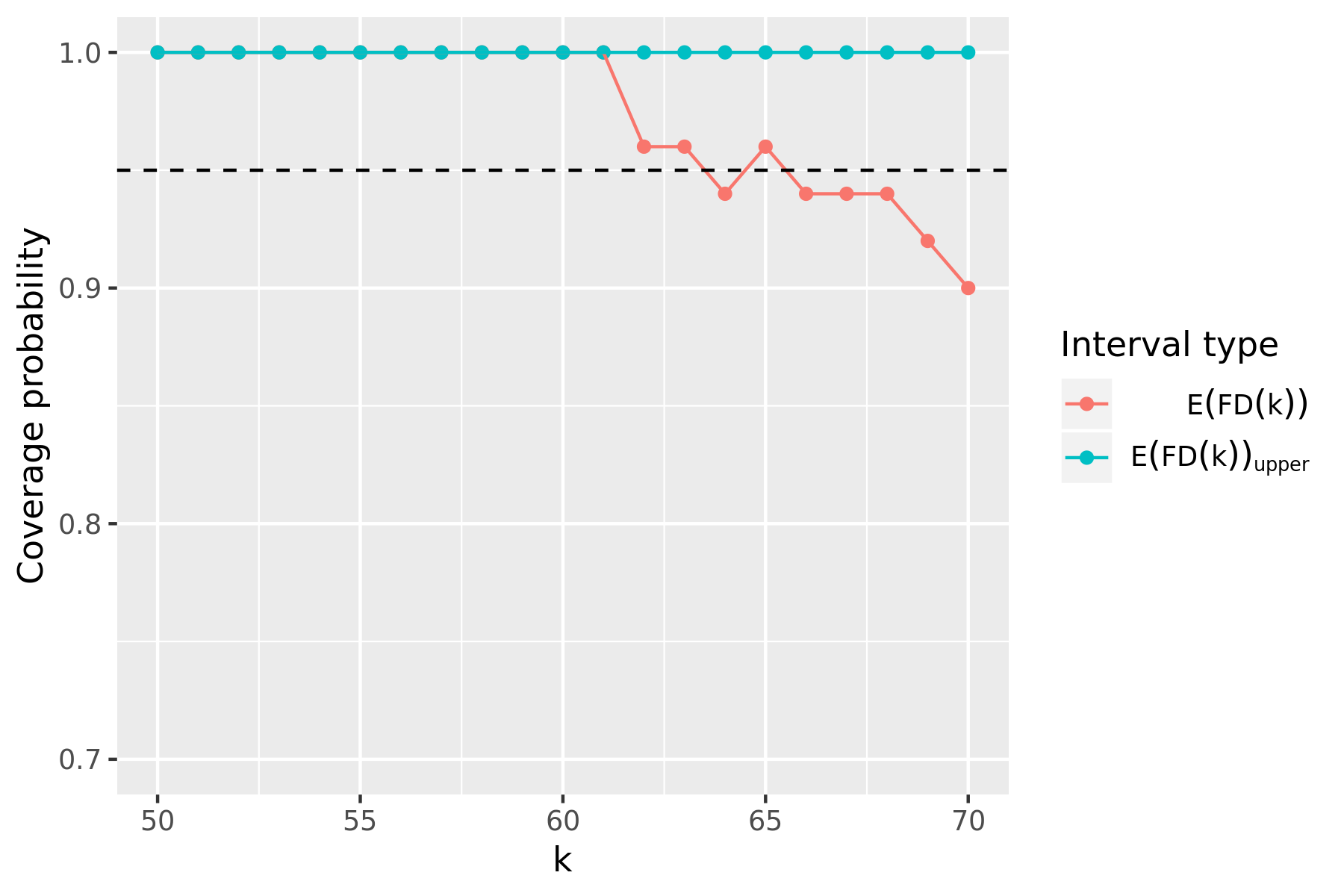}
\end{tabular}
     \caption{The high-dimensional settings with (n=300, p=1000). Details and notations are the same as those in Figure~\ref{fig:bootld}.}
 \label{fig:boothd}
\end{figure}

By using Algorithm \ref{alg:bootstrap} with $B=200$, we calculate the empirical coverage frequency of the proposed bootstrap confidence interval for $\EE[FD(k)]$ as
\begin{equation*}
\frac{1}{100}\#\{r: \tilde{\EE}[FD(k)] \in [ \widehat{FD}_{(\alpha/2)}(k),\widehat{FD}_{(1-\alpha/2)}(k)]\}
\end{equation*}
where %
$\widehat{FD}_{(\alpha/2)}^{(r)}(k)$ and $\widehat{FD}_{(1-\alpha/2)}^{(r)}(k)$ are  the $\alpha/2$ and $(1-\alpha/2)$  quantiles of the  bootstrap distribution of $\widehat{FD}(k)$ in the $r$th replication.  In addition, we evaluate the empirical coverage probability of the bootstrap $(1-\alpha)$100\% upper bound
\begin{equation*}
\frac{1}{100}\#\{r: \tilde{\EE}[FD(k)] \in [ 0 ,\widehat{FD}_{(1-\alpha)}(k)]\}.
\end{equation*}
We set  $\alpha=0.05$ for all simulation cases.

In Figures \ref{fig:bootld} (a1,b1,c1), we plot $\widehat{FD}(k)$ against $FD(k)$, the true number of false discoveries with $k$ ranging from 50 to 70. 
The closer the point approaches the diagonal line, the closer the two numbers are. As $k$ increases, the estimate tends to be more dispersed as anticipated since the error between  $\widehat{FD}(k)$ and $FD(k)$ increases as $k$ increases.
In fact, $FD(k)$ is also more variable around its mean $\EE(FD(k))$ as $k$ increases.  Figure \ref{fig:bootld} (a2,b2,c2) report the coverage probabilities of the confidence interval and upper confidence interval for  $\EE[FD(k)]$ in all the settings. The coverage probabilities are above $0.95$ when $k$ is smaller than $60$. When $k$ is larger than $60$, the coverage probabilities drop occasionally below $0.95$ by a small amount. This is expected since the variation of $\widehat{FD}(k)$ increases as $k$ increases.%

In Figure \ref{fig:boothd}, we show the simulation results for the high-dimensional settings. For the Toeplitz structure, the scatter plot in Figure \ref{fig:boothd} (a1)  shows that our proposed estimate tends to underestimate its target slightly when $k$ is large. For the constant correlation structure and constant partial correlation structure (Figure \ref{fig:boothd} (b1, c1)), however, the proposed estimate slightly overestimate its target . Similar to the low-dimensional case, the points become more dispersed from the diagonal line as $k$ increases.  In Figures \ref{fig:boothd} (a2,b2,c2), we show the coverage of the two-sided confidence  interval and confidence upper bound for $\EE[{FD}(k)]$. The coverage probabilities of the confidence upper bound of $\EE[FD(k)]$ were over $0.95$ for all cases, and that for confidence interval are also mostly above 0.95  except for the constant partial correlation case with $k>65$. This slight under-coverage appears to be a consequence of over-estimation of the number of FDs.

\section{Discussion}\label{sec:disc}
We introduce the GM method for controlling the FDR in high-dimensional  regression models. Intuitively, to test if a variable $X_j$ is truly informative,  GM constructs a pair of variables mirroring each other at $X_j$. The scale of the mirror can be computed explicitly and easily as a function of the design matrix so as to guarantee that the distribution of the mirror statistics is symmetric about zero under the null. 
With this construct, we developed a testing method such that the FDR is controlled at any designated level asymptotically. We have further proposed a way to assess the variance of the number of estimated FDs in a top-$k$ list, 
providing extra information about the reliability of the reported FDP results. %
Through empirical studies we find that  GM is not sensitive to the scale of the mirror, making the method more broadly applicable.

A distinctive advantage of GM is that it can be applied to both low or high-dimensional regression problems without requiring any distributional assumption or knowledge on the design matrix, which is common in many applications. 
Furthermore,  since each Gaussian mirror is constructed marginally for each covariate, it brings two advantages: (i) 
the mirror construction introduces only a small and controllable additive ``disturbance", which not only alleviates possible high correlations among the original covariates, but also gains in power; (ii) the computation in GM can be  parallelized, and the method is easily scalable to handle  large datasets with the deployment of a GPU-based computational infrastructure.
The GM design introduced in this article  is a general framework based on the idea of marginally perturbing the predictor variables one at a time. 
Such a construct can be generalized to handle more complex statistical and machine-learning models such as generalized linear models, index models, additive models, neural networks, etc. In linear models, we choose the scale of the mirror $c_j$ to annihilate the partial correlation between the mirrored variables $X_j^+$ and $X_j^-$ so that the mirror statistic is symmetric under the null. 
A natural generalization for more complex models is to choose $c_j$ to minimize a dependence measure of $X_j^+$ and $X_j^-$ conditional on the remaining variables. Our preliminary results show that a modified GM method  works well for selecting important variables in  neural network models (see Appendix for a detailed example).

\section{Appendix}\label{sec:appendix}
\subsection{Proofs}

{\bf Proof of Lemma \ref{lemma:symmetric}.}
Note that $(U+V, U-V)$ also follows a bivariate normal distribution. Since the correlation between $U$ and $V$ is zero, $\var(U + V) = \var(U-V)$. Furthermore, $E(U+V)=E(U-V)=0$. Therefore, the joint distribution of $(U+V,U-V)$ is the same as the joint distribution of $(U-V, U+V)$. As a result,
\[
P(M > t) = P(|U+V|-|U-V| >t ) = P(|U-V|-|U+V|>t) = P(M<-t).
\]
\qed

{\noindent\bf Proof of Lemma \ref{lemma:olscor}.} Note that $\hat{\bbeta}^j$ is an OLS estimator based on $\vX^j$, then
\begin{multline}\label{eq:cov}
\frac{1}{n}\cov (\hat{\bbeta}^j\,|\, \vX^j) = \\ \sigma^2 \begin{bmatrix}
    1+\vzj + 2\rhoxjzj  & 1-\vzj & (\rhoxjxnj + \rhozjxnj)^\top \\
    1- \vzj & 1+\vzj - 2\rhoxjzj & (\rhoxjxnj - \rhozjxnj)^\top \\
    \rhoxjxnj + \rhozjxnj & \rhoxjxnj - \rhozjxnj & \sigmaxnj
  \end{bmatrix}^{-1},
\end{multline}
where  $\vzj =\frac{1}{n} c_j^2 \vz_j^\top \vz_j$, $\rhoxjzj = \frac{1}{n}c_j \vx_j^\top \vz_j$, $(\rho^{(\vx_j,\vX_{-j})})^\top = \frac{1}{n}\vx_j^\top \vX_{-j}$, $(\rhozjxnj)^\top = \frac{1}{n}c_j \vz_j^\top \vX_{-j}$, and $\sigmaxnj = \frac{1}{n} \vX_{-j}^\top \vX_{-j}$.
From (\ref{eq:cov}), we obtain the covariance of $(\hbjplus , \hbjminus )$  as
\begin{align*}
  & \frac{1}{n}  \cov(\hbjplus , \hbjminus \mid \vX^j) \nonumber \\
  = & K^{-1}\sigma^2\left( 1 - \vzj - (\rhoxjxnj)^\top \sigmaxnj^{-1} \rhoxjxnj + (\rhozjxnj)^\top\sigmaxnj^{-1} \rhozjxnj \right),
\end{align*}
where %
\begin{eqnarray*}
K^{-1} &=&   (1 + \vzj)^2 - 4(\rhoxjzj)^2 -  ((\rhoxjxnj)^\top\sigmaxnj^{-1} \rhoxjxnj + (\rhozjxnj)^\top\sigmaxnj^{-1} \rhozjxnj)^2   \\
& & - ( 1- \vzj - (\rhoxjxnj)^\top\sigmaxnj^{-1} \rhoxjxnj + (\rhozjxnj)^\top\sigmaxnj^{-1} \rhozjxnj)^2. 
\end{eqnarray*}
To guarantee that $Cov(\hat{\beta}_j^+,\hat{\beta}_j^-|\vX^j)=0$, we choose $c_j$ such that 
\[
\vzj - (\rhozjxnj)^\top\sigmaxnj^{-1}\rhozjxnj = 1 - (\rhoxjxnj)^\top \sigmaxnj^{-1} \rhoxjxnj.
\]
This choice is  equivalent to choosing $c_j$ such that
\begin{equation*}
    c_j^2 \vz_j^\top(\vI_n - \vX_{-j}(\vX_{-j}^\top \vX_{-j})^{-1} \vX_{-j}^\top) \vz_j =  \vx_j^\top(\vI_n - \vX_{-j}(X_{-j}^\top \vX_{-j})^{-1} \vX_{-j}^\top) \vx_j,
\end{equation*}
\qed

\vspace{0.1in}

{\noindent\bf Proof of Lemma \ref{lem:constraints}. }
Let $A_0$ represent the ``inactive'' constraint defined in (\ref{eq:constraint}) and $\veta = (\veta_1, \veta_2)$. We have
\begin{equation*}
    A_0(\cS, \bs)\veta = A_0(\cS,\bs)(\veta_1, \veta_2) = A_0(\cS, \bs)(\veta_1+\veta_2, \veta_1-\veta_2) \begin{pmatrix}
    1/2 & 1/2\\
    1/2 & -1/2
    \end{pmatrix}.
\end{equation*}
By the definition of $\tilde{\vz}_j = (I - P_{\cS})\vz_j$, where $c_j$ is defined in (\ref{eq:lassocj}), we have $A_0(\cS, \bs)(\veta_1+\veta_2) = 0$. Then, we have
\begin{equation*}
    A_0(\cS,\bs) \veta = A_0(\cS, \bs)(\frac{\veta_1-\veta_2}{2})(1,-1).
\end{equation*}
Setting $\ba_0 = A_0(\cS, \bs)(\frac{\veta_1-\veta_2}{2})$, we have $A_0(\cS,\bs)\veta = \ba_0(1,-1)$. Similarly, we have $A_1(\cS, \bs)(\veta_1-\veta_2) = 0$. Letting $\ba_1 = A_1(\cS, \bs)(\frac{\veta_1+\veta_2}{2})$, we have $A_1(\cS,\bs)\veta = \ba_1(1,-1)$.
\qed

\vspace{0.2in}

{\noindent\bf Proof of Lemma \ref{le:exact:constraint}.} We first derive the  constraints $\{A_0(\cS,\bs)\by \le \vb_0(\cS,\bs) \}$ and $\{A_1(\cS,\bs) \by \le \vb_1(\cS,\bs)\}$ in (\ref{eq:constraint}) separately. Plugging in $\by= P_{\veta} \by + (I- P_{\veta})\by$ and
$\br = (I - P_{\veta})\by$, we have
\begin{align*}
 A_0(\cS, \bs) \veta (\veta^\top\veta)^{-1}\veta^\top\by + A_0(\cS, \bs) \br &= \ba_0(1,-1)\diag(\alpha,\alpha) (\hbjplus, \hbjminus)^\top  + A_0(\cS, \bs) \br \\
 & = \alpha \ba_0(\hbjplus- \hbjminus)  + A_0(\cS, \bs) \br < \vb_0(\cS, \bs).
\end{align*} 
Then, $\{A_0(\cS,\bs)\by \le \vb_0(\cS,\bs) \}$ can be written as $\{\cV_0^{L}(\br) \le \hbjplus-\hbjminus \le \cV_0^{U}(\br),\, \cV_0^{N}(\br)>0 \}$, where 
\begin{equation*}
\begin{aligned}
    \cV_0^{L}(\br)  := \max_{j:a_{0j} <0}
    \frac{b_{0j} - (A_0\br)_j}{\alpha a_{0j}}, \, \cV_0^{U}(\br)  := \min_{j:a_{0j} >0}\, \frac{ b_{0j} - (A_0\br)_{j}}{\alpha a_{0j}}, \,
    \cV_0^{N}(\br)  := \min_{j: a_{0j} = 0}
     \vb_{0j} -  (A_0\br)_{j}.
\end{aligned}
\end{equation*}
Similarly, we have  
\begin{align*}
 A_1(\cS, \bs) \veta (\veta^\top\veta)^{-1}\veta^\top\by + A_1(\cS, \bs) \br &= \ba_1(1,1)\diag(\alpha,\alpha) (\hbjplus, \hbjminus)^\top  + A_1(\cS, \bs) \br \\
 & = \alpha \ba_1(\hbjplus + \hbjminus)  + A_1(\cS, \bs) \br < \vb_1(\cS, \bs).
\end{align*} 
Then, $\{A_1(\cS,\bs) \by \le \vb_1(\cS,\bs)\}$ can be written as $\{\cV_1^{L}(\br) \le \hbjplus+\hbjminus \le \cV_1^{U}(\br),\,\cV_1^{N}(\br)>0 \}$, where 
\begin{equation*}
\begin{aligned}
    \cV_1^{L}(\br)  := \max_{j: a_{1j} <0}
    \frac{b_{1j} -  (A_1\br)_{j}}{\alpha a_{1j}}, \,
    \cV_1^{U}(\br)  := \min_{j: a_{1j} >0}
    \frac{ b_{1j} - (A_1\br)_{j}}{ \alpha a_{1j}}, \,
    \cV_1^{N}(\br)  := \min_{j: a_{1j} = 0}
     b_{1j} - (A_1\br)_{j}.
\end{aligned}
\end{equation*}
Thus, we have 
\begin{align*}
   & \{A\by \leq \vb\} = \{A_0(\cS,\bs)\by \le \vb_0(\cS,\bs) \} \cap \{A_1(\cS,\bs) \by \le \vb_1(\cS,\bs)\} \\
    =&   \{\cV_0^{L}(\br) \le \hbjplus-\hbjminus \le \cV_0^{U}(\br),\, \cV_0^{N}(\br)>0 \}   \cap \{\cV_1^{L}(\br) \le \hbjplus+\hbjminus \le \cV_1^{U}(\br),\,\cV_1^{N}(\br)>0 \}.
\end{align*}
\qed

\vspace{0.1in}

{\noindent\bf Proof of Theorem \ref{thm:trun:dist}.} 
By Lemma \ref{le:exact:constraint}, we know that $\{A\by\le \vb\}$ imposes a rectangle region in the subspace of $(\hbjplus,\hbjminus)$ with the boundaries being parallel to $(\hbjplus+\hbjminus, \hbjplus-\hbjminus)$.  The truncation points of the region are only related to $\br$, which are independent of $(\hbjplus+\hbjminus, \hbjplus-\hbjminus)$. Thus, through a transformation using the cumulative distribution function of a truncated normal distribution, we have 
\begin{align*}
F_{0,\sigma^2}^{[\cV_1^{L}(\br), \cV_1^{U}(\br)]} (\hbjplus+\hbjminus )\,\mid\, \hat{\cS} = \cS, \hat{\bs} = \bs \sim  Unif(0,1), \\
F_{0,\sigma^2}^{[\cV_0^{L}(\br), \cV_0^{U}(\br)]} (\hbjplus - \hbjminus )\,\mid\, \hat{\cS} = \cS, \hat{\bs} = \bs \sim  Unif(0,1),
\end{align*}
where $F_{\mu, \sigma}^{[a, b]}(x)$ is the cumulative distribution function of the truncated normal distribution defined in (\ref{eq:truncatednormal}).\qed

\vspace{0.2in}

{\noindent\bf Proof of Lemma \ref{lemma:suff:ols}.}
Let $\eta_j = \hbjplus + \hbjminus$ and $\theta_j = \hbjplus-\hbjminus$, and we have
\begin{align}
&\sum_{j,k\in\cS_0}\cov\big( 1(M_j\ge t),1(M_k\ge t) \big)\nonumber\\
 =& \sum_{j,k\in \cS_0}E[cov(1(|\eta_j| - |\theta_j|>t), 1(|\eta_k|-|\theta_k|>t)|\theta_j, \theta_k)] \nonumber\\
 &+\sum_{j,k\in\cS_0} cov(P(|\eta_j| - |\theta_j|>t|\mid\theta_j,\theta_k) P(|\eta_k|-|\theta_k|>t\mid\theta_j, \theta_k)) \nonumber\\
 &= I_1 + I_2.
 \label{eq:covI}
\end{align}
We define the maximum correlation coefficient between $\eta_j$ and $\eta_k$ as
\begin{equation*}
\rho_{\max}(\eta_j, \eta_k ) = \sup_{f_1\in \mathcal{F}_j, f_2\in \mathcal{F}_k}E[f_1(\eta_j)f_2(\eta_k)],
\end{equation*}
where $\mathcal{F}_j = \{f_1: E[f_1(\eta_j)]=0, \var(f_1(\eta_j))=1\}$ and  $\mathcal{F}_k = \{f_1: E[f_1(\eta_k)]=0, \var(f_1(\eta_k))=1\}$. For given $\theta_j$ and $\theta_k$, we have $\var[1(|\eta_j| - |\theta_j|>t)]\leq 1$ and $\var[1(|\eta_k| - |\theta_k|>t)]\leq 1$. Then, we can bound each term in $I_1$ as
\begin{align*}
&cov(1(|\eta_j| - |\theta_j|>t), 1(|\eta_k|-|\theta_k|>t)\mid\theta_j, \theta_k) \leq cor(1(|\eta_j| - |\theta_j|>t), 1(|\eta_k|-|\theta_k|>t)\mid \theta_j, \theta_k).
\end{align*}
Since the maximal correlation between a pair of bivariate normal random variables is just the absolute value of their ordinary correlation and the definition of $\tilde{z}_j$ in (\ref{eq:lassocj}) makes $\theta_j, \theta_k$ independent of $\eta_j,\eta_k$, we have 
\begin{equation}\label{eq:I1}
I_1 \leq \sum_{j,k \in \mathcal{S}_0}\rho_{\max}(\eta_j, \eta_k ) \leq \sum_{j,k\in \cS_0}|\rho(\eta_j, \eta_k)| \leq \sum_{j,k\in \cS_0} \frac{|\Omega_{jk}|} {\sqrt{\Omega_{jj}}\sqrt{\Omega_{kk}}}.
\end{equation}

Let $\nu = \lfloor \frac{p}{n-p} \rfloor$, we construct the block design $[\tilde{z}_1 = \dots = \tilde{z}_{\nu}]$, $[\tilde{z}_{\nu_1}=\dots=\tilde{z}_{2\nu}]$, \dots, $[\tilde{z}_{(n-p-1)\nu}=\dots=\tilde{z}_p]$ and the vectors in different blocks are orthogonal to each other. If $\tilde{z}_j, \tilde{z}_k$ are from different blocks, we have  $cov(P(|\eta_j| - |\theta_j|>t|\mid\theta_j,\theta_k) P(|\eta_k|-|\theta_k|>t\mid\theta_j, \theta_k))=0$. Then, 
\begin{equation}\label{eq:I2}
I_2 = O(\frac{|\cS_0|^2 p}{n-p}).
\end{equation}
Combining (\ref{eq:I1}) and (\ref{eq:I2}), we find that there exists a $\alpha_1\in(0,2)$ such that
\begin{equation*}
\sum_{j,k}\cov\big(1(M_j\ge t), 1(M_k\ge t) \big) \le C_1 |\mathcal{S}_0|^{\alpha_1}, \; \forall \;  t,\nonumber
\end{equation*}
which concludes the proof.
\qed

\vspace{0.2in}

{\noindent\bf Proof of lemma \ref{lemma:suff:lasso}.}
Let $\Omega^0 = \left(X_{[\widehat{\cS}\cap S_0]}^T X_{[\widehat{\cS}\cap S_0]}\right)^{-1}$.  We define
\begin{equation*}
R(\widehat{\cS}\cap S_0) = \sum_{j,k\in \widehat{\cS}\cap S_0} \frac{|\Omega^0_{jk}|} {\sqrt{\Omega^0_{jj}}\sqrt{\Omega^0_{kk}}}.
\end{equation*}
First, we will show that $R(\widehat{\cS}\cap S_0)  \le O_p((\widehat{\cS}\cap S_0)^{3/2})$.  We denote the m-sparse maximum and minimum eigenvalues of the covariance matrix $\Sigma$ as
\begin{equation*}
\lambda_{\min}(m) = \min_{\beta:||\beta||_{0}\leq m}\frac{\beta^\intercal \Sigma\beta}{\beta^\intercal\beta}\ \  \textnormal{and}\ \  \lambda_{\max}(m) = \max_{\beta:||\beta||_{0}\leq m}\frac{\beta^\intercal \Sigma\beta}{\beta^\intercal\beta},
\end{equation*}
and the  m-sparse maximum and minimum eigenvalues of the sample covariance matrix $\hat{\Sigma}$ as
\begin{equation*}
\hat{\lambda}_{\min}(m) = \min_{\beta:||\beta||_{0}\leq m}\frac{\beta^\intercal \hat{\Sigma}\beta}{\beta^\intercal\beta}\ \  \textnormal{and}\ \  \hat{\lambda}_{\max}(m) = \max_{\beta:||\beta||_{0}\leq m}\frac{\beta^\intercal \hat{\Sigma}\beta}{\beta^\intercal\beta}.
\end{equation*}
By Theorem  6.5 in \cite{wainwright2019high} and the Cauchy interlacing Theorem, we have 
\begin{align*}
    &\widehat{\lambda}_{\min}(p_1\log n)\geq \lambda_{\min}(p_1\log n)-o_p(1) \geq \lambda_{\min}(p)-o_p(1)\geq 1/C-o_p(1),\\
&\widehat{\lambda}_{\max}(p_1+\min\{n, p\})\leq C^\prime\lambda_{\max}(p_1+n)+o_p(1) \leq C^\prime\lambda_{\max}(p)+o_p(1)\leq C^\prime C+o_p(1).
\end{align*}
Thus, the conditions in Lemma 5 in \citet{meinshausen2009lasso} hold with probability approaching 1, which further implies that $|\hat{\cS}| = O(|\cS_1|)$.  We have
\begin{equation*}
1/C - o_p(1) \leq  \lambda_{\min}(\Omega^0) \leq \lambda_{\max}(\Omega^0)\leq  C + o_p(1).
\end{equation*}
Then we have 
\begin{equation}\label{eq:Requal}
R(\widehat{\cS}\cap S_0) \leq \frac{||\Omega^0||_1}{\lambda_{\min}(\Omega^0)} \leq \frac{\sqrt{|\widehat{\cS}\cap S_0|}||\Omega^0||_2}{\lambda_{\min}(\Omega^0)} = O_p((\widehat{\cS}\cap S_0)^{3/2}),
\end{equation}
where the first inequality follows from the fact that for any positive definite matrix $A\in \mathbbm{R}^{m\times m}$, $\lambda_{\min}(A)\leq A_{ii}\leq \lambda_{\max}(A)$ for $i\in[m]$. The second inequality follows from the Cauchy-Schwartz inequality. The third equality is based on the following fact:
\begin{equation*}
   \sum_{i,j}A_{ij}^2 = \text{tr}(A^\intercal A) = \sum_{i = 1}^m\lambda_{i}^2(A).
\end{equation*}

Second, we characterize the sum of all pairwise correlations of the $M_j$'s. Let $\eta_j = \hbjplus+\hbjminus$, $\theta_j=\hbjplus-\hbjminus$, where $\hbjplus$ and $\hbjminus$ are defined in (\ref{eq:post-beta}).
We define $\tilde{\eta}_j = F^{-1}_{0,\sigma^2}F_{0,\sigma^2}^{[\cV_1^{-}(\br), \cV_1^{+}(\br)]} ( \eta_j )$ and $\tilde{\theta}_j = F^{-1}_{0,\sigma^2}F_{0,\sigma^2}^{[\cV_0^{-}(\br), \cV_0^{+}(\br)]} ( \theta_j  )$. Then, we  have
\begin{align*}
&\sum_{j,k\in\widehat{\cS}\cap S_0}\cov\big( 1(M_j\ge t),1(M_k\ge t) \big)\nonumber\\
 =& \sum_{j,k\in\widehat{\cS}\cap S_0}E[cov(1(|\tilde{\eta}_j| - |\tilde{\theta}_j|>t), 1(|\tilde{\eta}_k|-|\tilde{\theta}_k|>t)\mid \theta_j, \theta_k)] \nonumber\\
 &+\sum_{j,k\in\widehat{\cS}\cap S_0} cov(P(|\tilde{\eta}_j| - |\tilde{\theta}_j|>t|\mid \theta_j,\theta_k) P(|\tilde{\eta}_k|-|\tilde{\theta}_k|>t\mid\theta_j, \theta_k)) \nonumber\\
 =& I_1 + I_2.
\end{align*}
Here $\tilde{\eta}_j$ and $\tilde{\eta}_k$ are truncated Gaussian with truncation intervals defined in (\ref{eq:v}).   We define the maximum correlation coefficient between $\tilde{\eta}_j$ and $\tilde{\eta}_k$ as
\begin{equation*}
\rho_{\max}(\eta_j, \eta_k ) = \sup_{f_1\in \mathcal{F}_j, f_2\in \mathcal{F}_k}E[f_1(\tilde{\eta}_j)f_2(\tilde{\eta}_k)]
\end{equation*}
where $\mathcal{F}_j = \{f_1: E[f_1(\tilde{\eta}_j)]=0, \var(f_1(\tilde{\eta}_j))=1\}$ and  $\mathcal{F}_k = \{f_1: E[f_1(\tilde{\eta}_k)]=0, \var(f_1(\tilde{\eta}_k))=1\}$. For given $\theta_j$ and $\theta_k$, we have $\var[1(|\tilde{\eta}_j| - |\tilde{\theta}_j|>t)]\leq 1$ and $\var[1(|\tilde{\eta}_k| - |\tilde{\theta}_k|>t)]\leq 1$. Then, similar to the proof of Lemma~\ref{lemma:suff:ols}, we can bound each term in $I_1$ by the corresponding maximum correlation so that
\begin{equation*}
I_1 \leq \sum_{j,k \in \hat{\cS}\setminus\cS_1}\rho_{\max}(\eta_j, \eta_k ) \leq \sum_{j,k\in \hat{\cS}\setminus\cS_1}|\rho(\eta_j, \eta_k)|.
\end{equation*}
Similar to the definition of $\tilde{z}_j$ in (\ref{eq:lassocj}), we choose $z_j$ in the orthogonal space of $X_{\hat{\mathcal{S}}}$ which makes $\theta_j, \theta_k$ independent of $\eta_j,\eta_k$. By condition (28), we have
\begin{equation}\label{eq:I1-2}
I_1 \leq \sum_{j,k\in \hat{\cS}\cap\cS_0} \frac{|\Omega_{jk}|} {\sqrt{\Omega_{jj}}\sqrt{\Omega_{kk}}}
\end{equation}
Since we have $|\widehat{\cS}|=O(|\cS_1|)=o(n)$, we are able to construct $\{\tilde{z}_j\}_{j\in\widehat{\cS}}$ such that they are  orthogonal to each other.
Since $R_{\theta_j,\theta_k}=0$ if $\tilde{z}_j$ and  $\tilde{z}_k$ are orthogonal, we have
\begin{equation}\label{eq:I2-2}
I_2 = 0.
\end{equation}
Combining (\ref{eq:I1-2}), (\ref{eq:I2-2}) , we know that there exists $\alpha_2\in(0,2)$ such that 
\begin{equation*}
\sum_{j,k\in \widehat{\cS}\cap \cS_0}\cov\big( 1(M_j\ge t), 1(M_k\ge t) \big) \le C_1 |\widehat{\cS}\cap \mathcal{S}_0|^{\alpha_2}, \; \forall \;  t.\nonumber
\end{equation*}
This concludes the proof.
\qed

\vspace{0.2in}

{\noindent\bf Proof of Lemma \ref{lem:slln}.} 
For any $\epsilon \in (0, 1)$ and $N_{\epsilon} =\lceil2/\epsilon\rceil$, let $-\infty = a^{p}_0 < a^{p}_1 < \cdots < a^{p}_{N_\epsilon} = \infty$ such that $G_0(a^{p}_{k - 1}) - G_0(a^{p}_k) \leq \epsilon /2$ for $k\in[N_\epsilon]$. Such a sequence $\{a_k^p\}_{k=0}^{N_\epsilon}$ exists since $G_0(t)$ is a continuous function with respect to $t\in \mathbbm{R}$.
We have
\begin{equation}
\label{eq:lemma-2-1}
\begin{aligned}
\mathbbm{P}\left(\sup_{t\in\mathbbm{R}}V(t) - G_0(t) > \epsilon\right) & \leq \mathbbm{P}\left(\bigcup_{k = 1}^{N_\epsilon}\sup_{t \in\left[a^{p}_{k - 1},a^{p}_k\right)}V(t) - G_0(t) > \epsilon\right)\\
& \leq \sum_{k = 1}^{N_{\epsilon}}\mathbbm{P}\left(\sup_{t \in\left[a^{p}_{k - 1}, a^{p}_k\right)}V(t) - G_0(t) > \epsilon\right).
\end{aligned}
\end{equation}
Since  $V(t)$ and $G_0(t)$ are both monotonically decreasing functions, for any $k \in [N_\epsilon]$, we have
\begin{equation*}
\sup_{t \in\left[a^{p}_{k - 1}, a^{p}_k\right)}V(t) - G_0(t) \leq V(a^{p}_{k - 1}) - G_0(a^{p}_{k}) \leq V(a^{p}_{k - 1}) - G_0(a^{p}_{k - 1}) + \epsilon/2.
\end{equation*}

For the low-dimensional case, based on Equation \eqref{eq:lemma-2-1}, Assumption \ref{assump}(a), Lemma 6, and the Chebyshev inequality, it follows that
\begin{equation*}
\mathbbm{P}\left(\sup_{t\in\mathbbm{R}}V_{p}(t) - G_0(t) > \epsilon\right) \leq \sum_{k = 1}^{N_{\epsilon}}\mathbbm{P}\left(V(a^{p}_{k - 1}) - G_0(a^{p}_{k - 1}) > \frac{\epsilon}{2}\right)\leq \frac{4CN_\epsilon}{|\cS_0|^{2-\alpha_1}\epsilon^2} \to 0,\ \ \textnormal{as}\ \ p\to\infty.
\end{equation*}
Similarly, we show that
\begin{equation*}
\mathbbm{P}\left(\inf_{t\in\mathbbm{R}}V(t) - G_0(t) < -\epsilon\right) \leq \frac{4CN_\epsilon}{|\cS_0|^{2-\alpha_1}\epsilon^2} \to 0,\ \ \ \ \textnormal{as}\ \ p\to\infty.
\end{equation*}

For the high-dimensional case, based on Equation \eqref{eq:lemma-2-1}, Assumption \ref{assump}(b), Lemma 7, and the Chebyshev inequality, it follows that
\begin{equation*}
\mathbbm{P}\left(\sup_{t\in\mathbbm{R}}V_{p}(t) - G_0(t) > \epsilon\right) \leq \sum_{k = 1}^{N_{\epsilon}}\mathbbm{P}\left(V(a^{p}_{k - 1}) - G_0(a^{p}_{k - 1}) > \frac{\epsilon}{2}\right)\leq \frac{4CN_\epsilon}{|\widehat{\cS}\cap\cS_0|^{2-\alpha_2}\epsilon^2} \to 0,\ \ \textnormal{as}\ \ p\to\infty.
\end{equation*}
Similarly, we show that
\begin{equation*}
\mathbbm{P}\left(\inf_{t\in\mathbbm{R}}V(t) - G_0(t) < -\epsilon\right) \leq \frac{4CN_\epsilon}{|\widehat{\cS}\cap\cS_0|^{2-\alpha_2}\epsilon^2} \to 0,\ \ \ \ \textnormal{as}\ \ p\to\infty.
\end{equation*}

This concludes the proof that $\sup_{t\in\mathbbm{R}} \left|V(t) - G_0(t)\right| \to 0$ in probability. The convergence of $\sup_{t\in\mathbbm{R}} \left|V(t) - G_0(t)\right|$ can be shown similarly using the symmetric assumption of the mirror statistics $M_j$ for $j \in S_0$.
Similarly, we have 
\begin{equation*}
 \sup_t\left|{V'(t)}- G_0(t)\right|\xrightarrow[]{p}0, \quad \sup_t\left|{V^1(t)}- G_1(t)\right|\xrightarrow[]{p}0.
\end{equation*}
\qed

{\noindent\bf Proof of Theorem \ref{thm:fdr}.}
\newcommand{\FDP}{\text{FDP}}
We denote
\begin{equation*}
\FDP^\prime = \frac{V^\prime (t)}{(V^{\prime}(t) + r_p V^1(t))\vee 1/p }.
\end{equation*}
Suppose Assumption 2 holds. By Lemma \ref{lem:slln} and the definition of $\FDP^{\infty}$, we have $\forall \ t>0$,
\begin{equation*}
|\FDP^\prime(t) - \FDP^{\infty}(t)| \xrightarrow[]{p} 0.
\end{equation*}
For any $\epsilon>0$, we choose $t_{q-\epsilon}$ such that $FDP^{\infty}(t_{q-\epsilon})< q-\epsilon$. Then, $\exists  P_0$ such that  $\forall \ p>P_0$,
\[
P(|\FDP^\prime(t_{q-\epsilon})-\FDP^{\infty}(t_{q-\epsilon})|)\geq 1-\epsilon.
\]
Then we have 
\begin{align*}
    P(\tau_q \leq t_{q-\epsilon})\geq P(\FDP^\prime(t_{q-\epsilon})\leq q)
    \geq P(|\FDP^\prime(t_{q-\epsilon})-\FDP^{\infty}(t_{q-\epsilon})|<\epsilon) 
    \geq 1-\epsilon.
\end{align*}
We denote  
\[
\FDPbar(t)= \frac{G_0(t)}{G_0(t) + r_{p} G_1(t)}.
\]
Based on Lemma \ref{lem:slln}, for any fixed $t$, we have 
\begin{equation*}
|\FDP^\prime(t) - \FDPbar(t)| \xrightarrow{p} 0, \quad |\FDP(t) - \FDPbar(t)| \xrightarrow{p} 0.
\end{equation*}
Then we have 
\begin{equation}\label{eq:fdpuniform}
\sup_{0<t\leq t_{q-\epsilon}}|\FDP (t)-\FDPbar(t)| \xrightarrow{p} 0, \quad \sup_{0<t\leq t_{q-\epsilon}}|\FDP^\prime (t)-\FDPbar(t)|\xrightarrow{p} 0.
\end{equation}
Now, we are ready to derive the upper bound for  $\limsup_{p\to\infty} \mathbbm{E}\left[\text{FDP}\left(\tau_q\right)\right]$.
\begin{equation*}
\begin{aligned}
 \limsup_{p\to\infty} \mathbbm{E}\left[\text{FDP}\left(\tau_q\right)\right] = & \limsup_{p\to\infty}\mathbbm{E}\left[\FDP(\tau_q) \mid \tau_q\leq t_{q-\epsilon}\right]P(\tau_q\leq t_{q-\epsilon}) + \epsilon \\
\leq & \limsup_{p\to\infty}\mathbbm{E}\left[|\FDP(\tau_q)-\FDPbar(\tau_q)| \mid \tau_q\leq t_{q-\epsilon}\right]P(\tau_q\leq t_{q-\epsilon})\\
& + \limsup_{p\to\infty}\mathbbm{E}\left[|\FDP^\prime (\tau_q)-\FDPbar(\tau_q)| \mid \tau_q\leq t_{q-\epsilon}\right]P(\tau_q\leq t_{q-\epsilon}) \\
& + \limsup_{p\to\infty}\mathbbm{E}\left[\FDP^\prime (\tau_q) \mid \tau_q\leq t_{q-\epsilon}\right]P(\tau_q\leq t_{q-\epsilon}) + \epsilon\\
\leq & \limsup_{p\to\infty}\mathbbm{E}\left[\sup_{0<t\leq t_{q-\epsilon}}|\FDP (t)-\FDPbar(t)| \right] \\
& +\limsup_{p\to\infty}\mathbbm{E}\left[\sup_{0<t\leq t_{q-\epsilon}}|\FDP^\prime (t)-\FDPbar(t)| \right]\\
& + \limsup_{p\to\infty}\mathbbm{E}\left[FDP^\prime(\tau_q) \right] + \epsilon. \\
\end{aligned}
\end{equation*}
The first two terms are 0 based on \eqref{eq:fdpuniform} and  Lebesgue's dominated convergence theorem (Corollary 6.26 in \citet{klenke2013probability}). For the last term, we have
$\text{FDP}^\dagger_p\left(\tau_q\right)\leq q$ almost surely
based on the definition of $\tau_q$. This concludes the proof of Theorem \ref{thm:fdr}.

\qed

\vspace{0.1in}

{\noindent\bf Proof of Theorem \ref{thm:fd}.}
Suppose Assumption 1 holds. Using the $l_1$-convergence results in \cite{van2014asymptotically}, we have 
\begin{equation*}
    P(M_j >0 ) = P(sign(\hbjplus)= sign(\hbjminus) ) > 1- \frac{2}{p},
\end{equation*}
for $j\in S_1$.
By (\ref{thm:trun:dist}), we have
\begin{equation*}
    \widehat{FD}(k) = \#\{M_j \leq -M_{(k)} \} = \#\{j: j\in \mathcal{S}_0, M_j< -M_{(k)}\} = V(M_{(k)}),
\end{equation*}
where $V(\cdot)$ is defined in Lemma \ref{lem:slln}. Also, $FD(k)$ can be written as 
\begin{equation*}
    FD(k) = \#\{j\in S_0: M_j \geq t\}.
\end{equation*}
The expection of $FD(k)$ is 
\begin{equation*}
    \frac{1}{k}\EE[FD(k)] =\sum_{j\in\mathcal{S}_0} \mathbb{E} 1(M_j\geq M_{(k)})   = \sum_{j\in\mathcal{S}_0} \mathbb{E} 1(M_j\leq -M_{(k)}),
\end{equation*}
where the second equality holds by the symmetric property of $M_j$ for $j\in S_0$. By  Lemma \ref{lem:slln}, conditioned on $\{\mathcal{S}_1 \subset \widehat{\mathcal{S}}\}$,  we have 
\begin{equation*}
      \frac{ V(t) - \sum_{j=1}^p \mathbb{E}1(j\in\mathcal{S}_0, M_j\le -t)}{k} \overset{p}{\to} 0.
\end{equation*}
Since the event $\{\mathcal{S}_1 \subset \widehat{\mathcal{S}}\}$ holding with probability $1-\frac{2}{p}$,  we have 
\begin{equation*}
    \frac{ V(t) - \sum_{j=1}^p \mathbb{E}1(j\in\mathcal{S}_0, M_j\le -t)}{k}   \overset{p}{\to} 0.
\end{equation*}
\qed

{\noindent\bf Proof of Theorem \ref{thm:fdmean}.}
By the same argument as in the proof of Theorem 6, we have that
\begin{equation*}
    \widehat{FD}(k) = \#\{M_j \leq -M_{(k)} \} = \#\{j: j\in \mathcal{S}_0, M_j< -M_{(k)}\},
\end{equation*}
holds with probability at least $1-\frac{2}{p}$. Thus we have that
\begin{equation*}
    \mathbb{E}[\widehat{FD}(k)] = \mathbb{E}1(j\in\mathcal{S}_0, M_j\le -M_{(k)})) = \mathbb{E}1(j\in\mathcal{S}_0, M_j\ge M_{(k)})) = \mathbb{E}[FD(k)],
\end{equation*}
holds with probability at least $1-\frac{2}{p}$. 
\qed

\subsection{Gaussian mirrors for neural networks}
To examine its generalizability, we apply the GM method, termed as NGM,  to select features in neural networks. 
Similar to the GM construction for linear models, for each feature $X_j$ of the input of a neural network model, we create its mirror variables $X_j^+$ and $X_j^-$ the same way as in (\ref{eq:gm_design}), by choosing constant $c_j$ to minimize a nonparametric measure of conditional mutual information (equivalent to partial correlation in Gaussian models). The pair of variables are then connected to all the internal nodes of the network in the same way as $X_j$. The mirror statistic is constructed using influence measures (such as the multiplication of the all weights connecting from the output node to each mirror feature).
To save computing time, we can also create such mirrors for all or a subset of features simultaneously, so as to obtain multiple mirror statistics in one run.
More details can be found  in \citet{xing2020neural}.

We compare the performances  of NGM
with DeepPINK, which is a generalization of the Model-X knockoff for neural network models,
for variable selection in non-linear models: $y_i=f_{\ell}(\bbeta^\top \bx_i)+\epsilon_i,  \ \epsilon_i\stackrel{\mbox{\small{i.i.d.}}}{\sim} N(0,1)$,
for $i=1,\dots, n$. We choose three nonlinear link functions: $f_1(t)=t+sin(t)$, $f_2(t)=0.5t^3$ and $f_3(t)=0.1t^5$.
We randomly set $k=30$ elements in $\bbeta$ to be nonzero and generated from $\mathcal{N}(0,(20\sqrt{log(p)/n})^2)$ to mimic various signal strengths in real applications. 
The FDR level is set at $q=0.1$.
Both NGM and DeepPINK are applied to the same three-layer  perceptron model, which has  two hidden layers with $N_1=\lceil 20\log(p)\rceil$ nodes for the first layer and $N_2= \lceil 10 \log(p)\rceil$ nodes for the second layer.

\begin{table*}[h]
	\centering
		\caption{Neural network fitting for single-index relationships}
		\label{tab:si}
		\begin{tabular}{cccccc}
			\toprule
			
			\multirow{2}{*}{$n=1000$} & \multirow{2}{*}{Link function} & \multicolumn{2}{c}{NGM}& \multicolumn{2}{c}{DeepPink}\\
			\cmidrule(lr){3-4} \cmidrule(lr){5-6} 
			& &FDR &Power &FDR &Power\\
			\midrule
			\multirow{3}{*}{$p=500$} & $f_1(t)=t+sin(t)$  &0.045 &0.857  &0.085 &0.900 \\
			& $f_2(t)=0.5t^3$  &0.080 &0.843  &0.155 &0.413 \\
			& $f_3(t)=0.1t^5$  &0.071 &0.847  &0.148 &0.333 \\
			\cmidrule{1-6}
			\multirow{3}{*}{$p=2000$} & $f_1(t)=t+sin(t)$  &0.081 &0.820  &0.126 &0.817 \\
			& $f_2(t)=0.5t^3$  &0.082 &0.587 &0.149 &0.320 \\
			& $f_3(t)=0.1t^5$  &0.108 &0.530  &0.477 &0.053 \\
			\bottomrule
		\end{tabular}
\end{table*}

As shown in Table \ref{tab:si}, NGMs and DeepPINK all performed well for the case with $f_1=t+sin(t)$. However, in the latter two cases where $f$ is a polynomial,  DeepPINK appears to have a significant lower power than $NGM$. In all the three  cases, NGMs are capable of maintaining the FDR below the designated level.

\bibliographystyle{chicago} %
\bibliography{ref}
\end{document}